\documentclass[a4paper,english,thm-restate]{lipics-v2021}
\usepackage{graphicx}
\usepackage{amsfonts}
\usepackage{enumitem}
\usepackage{caption}
\usepackage[title]{appendix}
\usepackage{pifont}
\usepackage{microtype} 
\usepackage{xspace}
\usepackage[dvipsnames]{xcolor}
\usepackage{tcolorbox}
\usepackage{hyperref}

\bibliographystyle{plainurl}

\graphicspath{{./figures/}}
\hideLIPIcs


\newcommand{\mkmcal}[1]{\ensuremath{\mathcal{#1}}\xspace}
\newcommand{\T}{\mkmcal{T}}

\renewcommand{\P}{\mkmcal{P}}
\newcommand{\F}{\mkmcal{F}}
\renewcommand{\H}{\mkmcal{H}}
\newcommand{\X}{\mkmcal{X}}
\newcommand {\R} {\mathbb{R}}

\newcommand{\env}{\mathrm{env}\xspace}%
\newcommand{\envR}{\env}
\newcommand{\envM}{\envR^{\leq\delta}}
\newcommand{\nextE}{\mathrm{next}\xspace}%

\newcommand{\geod}{\pi\xspace}

\newcommand{\self}[1]{{#1^-}\xspace}
\newcommand{\twin}[1]{{#1^+}\xspace}
\newcommand{\eps}{\ensuremath{\varepsilon}\xspace}
\newcommand{\etal}{et al.}

\newcommand{\enumit}[1]{\textcolor{darkgray}{\sffamily\bfseries\upshape\mathversion{bold}#1}}
\newcommand{\algcall}[1]{\enumit{\textsc{#1}}}

\DeclareMathOperator{\SPM}{SPM}


    
    \newcommand{\from}{\colon}
    \newcommand{\Img}{\mathrm{Im}}
    
    \usepackage{tikz}
    \usetikzlibrary{calc,decorations.pathmorphing,shapes}
    
    \newcounter{sarrow}
    \newcommand\xrsquigarrow[1]{%
    \stepcounter{sarrow}%
    \mathrel{\begin{tikzpicture}[baseline= {( $ (current bounding box.south) + (0,.2ex) $ )}]
    \node[inner sep=.5ex] (\thesarrow) {$\scriptstyle #1$};
    \path[draw,<-,decorate,
      decoration={zigzag,amplitude=0.7pt,segment length=1.2mm,pre=lineto,pre length=4pt}] 
        (\thesarrow.south east) -- (\thesarrow.south west);
    \end{tikzpicture}}%
    }
    
    \newcommand{\Real}{\mathbb{R}}

\newcommand{\thmheadfont}{\bfseries}
\newenvironment{repeatenv}[2]%
  {\smallskip\noindent {\thmheadfont #1~\ref{#2}.}\ \slshape}
  {\normalfont}


\colorlet{todo}{SeaGreen}


\title{Shortest Paths in Portalgons}
%

\author
{Maarten L\"{o}ffler}
{Department of Information and Computing Sciences, Utrecht University, the Netherlands
\and
Department of Computer Science, Tulane University, New Orleans, United States of America
\and \url{https://www.uu.nl/staff/MLoffler/Profile}}
{m.loffler@uu.nl}
{}
{}

\author
{Tim Ophelders}
{Department of Information and Computing Sciences, Utrecht University, the Netherlands 
\and 
Department of Mathematics and Computer Science, TU Eindhoven, the Netherlands}
{t.a.e.ophelders@uu.nl}
{}
{}

\author{Frank Staals}
{Department of Information and Computing Sciences, Utrecht University, the Netherlands}
{f.staals@uu.nl}
{}
{}

\author{Rodrigo I. Silveira}
{Department de Matem\`{a}tiques, Universitat Polit\`{e}cnica de Catalunya, Spain \and \url{http://dccg.upc.edu/people/rodrigo/}}
{rodrigo.silveira@upc.edu}
{}
{Partially funded by MICINN through project PID2019-104129GB-I00/ MCIN/ AEI/ 10.13039/501100011033.}


\authorrunning{L\"offler, Ophelders, Silveira, Staals}
\Copyright{Maarten L\"offler, Tim Ophelders, Rodrigo I. Silveira,
  Frank Staals}

\ccsdesc[100]{Theory of computation~Computational Geometry} 

\keywords{Polyhedral surfaces, shortest paths, geodesic distance, Delaunay triangulation}  

%

\newcommand{\myremark}[4]{\textcolor{blue}{\textsc{#1 #2:}} \textcolor{#4}{\textsf{#3}}}

\newcommand{\frank}  [2][says]{\myremark{Frank}  {#1}{#2}{JungleGreen}}
\newcommand{\maarten}[2][says]{\myremark{Maarten}{#1}{#2}{WildStrawberry}}
\newcommand{\rodrigo}[2][says]{\myremark{Rodrigo}{#1}{#2}{BurntOrange}}

\renewcommand{\myremark}[4]{} 

\nolinenumbers
\begin{document}
\maketitle
    
\begin{abstract}
  Any surface that is intrinsically polyhedral can be represented by a collection of simple
  polygons (\emph{fragments}), glued along pairs of equally long oriented
  edges, where each fragment is endowed with the geodesic metric
  arising from its Euclidean metric.  We refer to such a
  representation as a \emph{portalgon}, and we call two portalgons equivalent
  if the surfaces they represent are isometric.

  We analyze the complexity of shortest paths.
  We call a fragment \emph{happy} if any shortest path
  on the portalgon visits it at most a constant number of times.  A
  portalgon is happy if all of its fragments are happy.  We present an
  efficient algorithm to compute shortest paths on happy
  portalgons.

  The number of times that a shortest path visits a
  fragment is unbounded in general.  We contrast this by showing that
  the intrinsic Delaunay triangulation of any polyhedral surface
  corresponds to a happy portalgon.  Since computing the intrinsic
  Delaunay triangulation may be inefficient, we provide an
  efficient algorithm to compute happy portalgons for a restricted
  class of portalgons.
\end{abstract}

\section{Introduction}

\begin{figure}[tb]
  \centering
  \includegraphics [width=\textwidth] {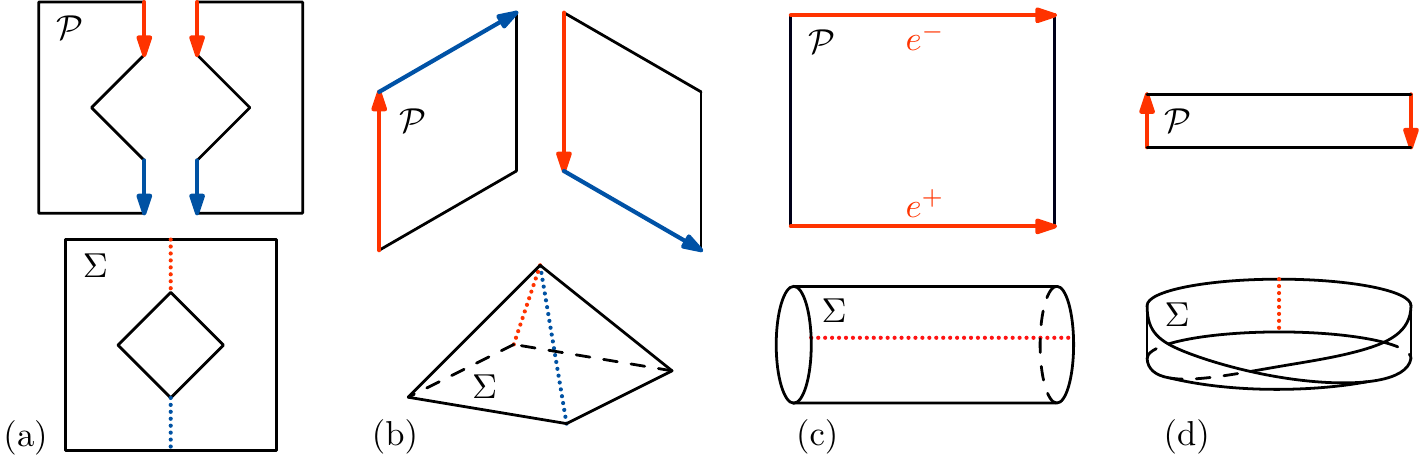}
  \caption{Examples of portalgons. Arrows of the same color represent portals that are identified with each other. (a) A portalgon that represents a polygon with a hole. (b) A portalgon that represents the surface of a bottomless pyramid. (c) A portalgon that represents the surface of a cylinder. (d) A portalgon that represents a M\"obius strip, a non-orientable surface.
  }
  \label{fig:intro_portalgons}
\end{figure}

  We define a {\em portalgon} $\P$ to be a collection of simple
  polygons ({\em fragments}) with some pairs of edges identified, see Figure~\ref {fig:intro_portalgons}.
  When we stitch together all fragments of a portalgon we obtain a two-dimensional  surface $\Sigma$, whose {\em intrinsic metric} is {\em polyhedral}~\cite {ybz-1993,milka68}.
  Note that $\Sigma$ is not necessarily embeddable in $\R^3$ with flat faces or without self-intersections, and not necessarily orientable.
  We say that $\P$ is a {\em representation} of $\Sigma$; crucially, the same surface may in principle have many different representations.
  Portalgons can be seen as a generalization of simple polygons, polygons with holes, polyhedral surfaces, and even developable surfaces.

We are interested in the {\em computational complexity}
of computing {\em shortest paths} on  portalgons.
In particular, we
analyze the {\em shortest path map} $\SPM(s)$; a representation of all
shortest paths from a source point $s$ to all other points in the
portalgon.
Our main insights are:
  \begin {itemize}
  \item The complexity of a shortest path 
    on a surface $\Sigma$, represented by a given portalgon \P, may be {\em
      unbounded} in terms of the combinatorial properties of $\Sigma$ and \P.
  \item The complexity of a shortest path depends on a parameter of the
    particular portalgon \P representing the surface that we refer to
    as its \emph{happiness} $h$. In particular, we show that the
    maximum complexity of a shortest path is $\Theta(n+hm)$, where $n$ is the
    total number of vertices in the portalgon, $m \leq n$ is the
    number of portals.

  \item Given a source point in \P, the complexity of its shortest
  path map is $O(n^2h)$.
  Moreover, if \P is triangulated, it can be computed in $O(\lambda_4(k)\log^2 k)$ time,
  where $k$ is the output complexity, and  $\lambda_4(k)$ the
  length of an order-4 Davenport-Schinzel sequence on $k$ symbols.

  \item Every surface with a polyhedral intrinsic metric  admits a
    representation as a portalgon where the happiness $h$ is
    constant. Specifically, one such representation is given by its intrinsic Delaunay triangulation. In such portalgons shortest paths have complexity
     $O(n)$.
  \item Since the intrinsic Delaunay triangulation is not easy to compute, we investigate the problem of transforming a given portalgon of happiness $h$ into one with constant happiness.
  We present an algorithm to do so in $O(n + \log h)$ time, for a restricted class of portalgons. The question of how to compute such a
    good representation, in general, remains open.
  \end{itemize}


%
\begin{figure}[tb]
  \centering
  \includegraphics [width=\textwidth] {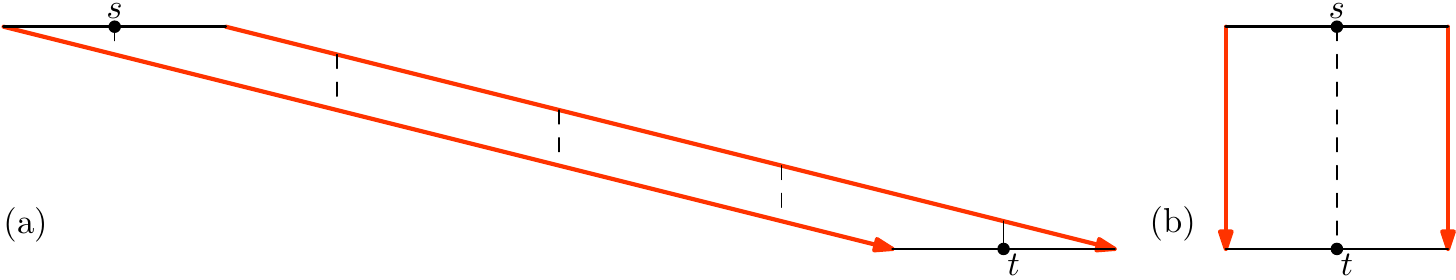}
  \caption{ (a) A portalgon where a shortest path between $s$ and $t$ (dashed) has unbounded complexity. (b) A different representation of the same surface where the path has constant complexity.
  }
  \label{fig:intro_complexity}
\end{figure}

  \subsection{Comparison to related work}

  Shortest paths have been studied in many different geometric
  settings, such as simple polygons, polygonal domains, terrains,
  surfaces, and polyhedra~(see
  e.g.~\cite{ChenH96,GuibasHershSPPoly,hershberger1999sssp,schreiber10optim_time_algor_short_paths_realis_polyh};
  refer to~\cite{MitchellSurvey17} for a comprehensive survey).
  The efficient \emph{computation} of shortest paths is a fundamental problem
  in computational
  geometry~\cite{Mitchell1991,mitchell91weigh_region_probl,guibas1987,ChenH96,wang21short_paths_among_obstac_plane_revis,wang21new_algor_euclid_short_paths_plane}. When
  the environment is a simple polygon the situation is well
  understood~\cite{guibas1987,GuibasHershSPPoly}. For polygons with
  holes, efficient solutions have also been known for quite a
  while~\cite{hershberger1999sssp}, recently culminating in an optimal
  $O(n + k\log k)$ time algorithm, where $n$ is the total number of
  vertices of the polygon, and $k$ the number of
  holes~\cite{wang21new_algor_euclid_short_paths_plane}. For more
  complex environments, like the surface of a convex polyhedron,
  several algorithms have been developed~\cite{ChenH96,mitchell87discr_geodes_probl,XinWang-09}, and even
  implemented~\cite{kaneva2000implementation,cgalchenhan}.
  However, for more
  general surfaces the situation is less well understood.

Portalgons generalize many of the geometric settings studied before. Hence, our goal is to unify existing shortest paths results. 
To the best of our knowledge, this 
  has not been attempted before, even though several questions closely related to the ones addressed in this work were posed as open problems in a blog post almost two decades ago~\cite{pancakes06}. Instead, portalgons are a rather
  unexplored concept which, though it has been long adopted into
  popular culture~\cite{portal1, startrek, matrix}, have only been
  studied from a computational point of view in the context of annular
  ray shooting by Erickson and
  Nayyeri~\cite{erickson13tracin_compr_curves_trian_surfac}.

  When measuring the complexity of a shortest
  path, 
  we can distinguish between its intrinsic complexity, and complexity
  caused by the representation of the underlying surface. For example, a shortest path $\geod$
  on a convex polyhedron $\Sigma$ in $\R^3$ with $n$ vertices, may
  cross (and thus bend) at $O(n)$ edges. Hence, a description of
  $\geod$ on $\Sigma$ has complexity $O(n)$. However, it is known that
  any convex polyhedron in $\R^3$ can be unfolded into a simple planar
  polygon $P_\Sigma$, and in such a way that a shortest path $\geod$
  corresponds to a line segment in
  $P_\Sigma$~\cite{agarwal97star_unfol_polyt_applic,aronov92nonov,ChenH96}. Hence,
  $\geod$ actually has a constant complexity description in
  $P_\Sigma$.  It is easy to see that some portalgons may have
  shortest paths 
  of {\em unbounded} complexity; as illustrated in
  Figure~\ref{fig:intro_complexity}(a).  This unbounded complexity,
  however, is completely caused by the representation, and indeed
  there is another equivalent portalgon without this behavior
  (Figure~\ref{fig:intro_complexity}(b)).  We introduce a parameter
  that explicitly measures the potential for shortest paths to have
  high complexity, which we call {\em happiness}---refer to
  Section~\ref {sec:defs} for a formal definition.

  In Section~\ref {sec:Shortest_Paths_in_Portalgons}, we analyze the
  complexity of shortest paths in terms of the happiness. Our first main
  result is that, if we have a portalgon with $n$ vertices and happiness $h$, then  the complexity of its shortest
  path map from a given source point is $O(n^2h)$.
  Moreover, we show that, for triangulated portalgons, it can be
  computed in an output-sensitive fashion: if $k$ is the complexity of the shortest path map, then it can be computed in $O(\lambda_4(k)\log^2 k)$ time.
  
  It is worth noting that our analysis of the shortest path map has similarities with techniques used to compute shortest paths on polyhedral surfaces, most notably~\cite{ChenH96,mitchell87discr_geodes_probl}.
  However, the fact that  portalgons are more general implies important differences with previous methods.
  We need to handle surfaces of non-zero genus (e.g.,  see Figure~\ref {fig:flat_torus}),
\begin {figure}
  \centering
  \includegraphics {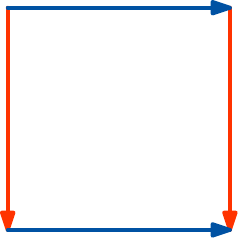}
  \qquad
  \includegraphics [height=10em] {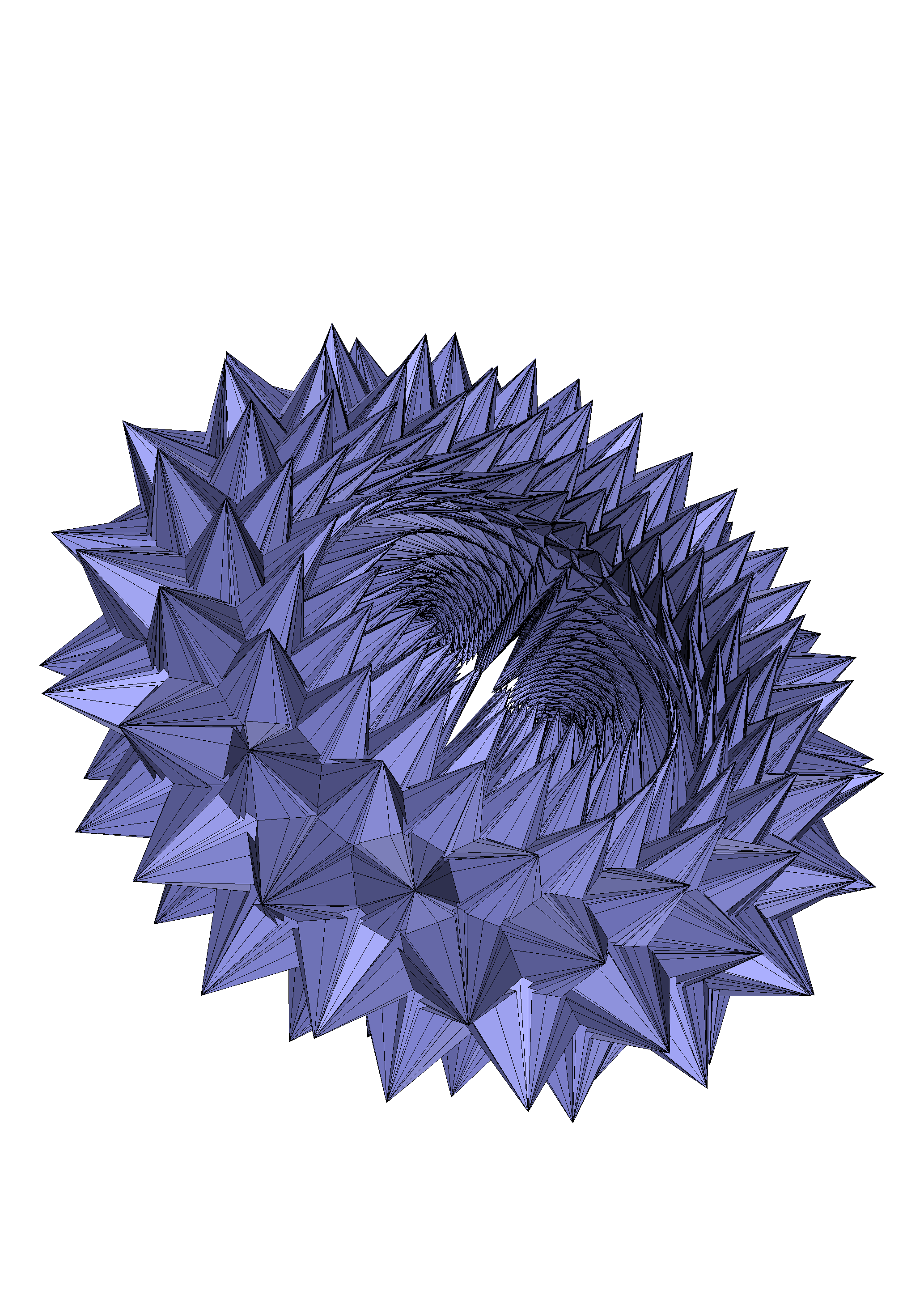}
  \caption {(a) A portalgon representing a flat torus. (b)
    Interestingly, a flat torus {\em can} be embedded isometrically in
    $\R^3$; however, the resulting embedding has very high
    complexity~\cite {borrelli12flat_tori,lazarus22flat_tori} (image
    from \url{https://www.imaginary.org/es/node/2375}).}
  \label {fig:flat_torus}
\end {figure}  
    while Chen and Han~\cite{ChenH96} require genus zero to compute the map in the interior of triangles
  (see the proof of their Theorem 4).
   Moreover, $\Sigma$ may be non-embeddable in Euclidean space with flat triangles.
  Another difference is that shortest paths in portalgons can cross the same portal edge multiple times, something that is often (explicitly or implicitly) assumed to be impossible in algorithms for polyhedral surfaces (e.g., in \cite{mitchell87discr_geodes_probl}).
  





The fact that the complexity of the shortest path map can be upper bounded by a function of the happiness, leads to the following natural question: for a given
  surface $\Sigma$, can it always be represented by a portalgon $\P$
  with {\em bounded happiness}?  In
  Section~\ref{sec:Existence_of_Happy_Portalgons}, we prove that the
  answer to this question is ``yes''.  In particular, our second main
  result is that for any portalgon, its {\em intrinsic Delaunay
    triangulation}~\cite {bobenko2007discrete} has constant happiness.
  \maarten
  {Discuss some relevant related work here?}

  In turn, this then leads to another natural question: given a
  surface $\Sigma$, can we actually efficiently {\em compute} a
  portalgon representing it that has bounded, preferably constant,
  happiness? Clearly, the answer to this question depends on how
  $\Sigma$ is represented. When $\Sigma$ is given as a portalgon \P,
  possibly with unbounded happiness, this question then corresponds to
  the problem of transforming \P into an equivalent portalgon with
  constant happiness. This is the problem we study in
  Section~\ref{sec:Making_Portalgons_Happy}. Given the above result,
  the natural approach is to try to construct the intrinstic Delaunay
  triangulation of \P. Unfortunately, it is unknown if it is possible
  to do that efficiently (that is, with guarantees in terms of $n$ and
  $h$). Our third main result shows that for a restricted class of
  portalgons we can give such guarantees. In particular, if the input
  portalgon has only one portal, $n$ vertices, and happiness $h$, we
  can construct an equivalent portalgon that has constant happiness in
  $O(n+\log h)$ time.

\section{Definitions and observations}
\label {sec:defs}


\subparagraph{Portalgons.}  We define a \emph{portalgon} \P to be a
pair $(\F, P)$,
where \F is a collection of simple polygons, called
\emph{fragments}, and $P$ is a collection of portals.  
A
\emph{portal} is an unordered pair $e=(\self e,\twin e)$ of directed,
distinct, equal
length, edges from some fragment(s) of \F. We refer to $\self e$ and
$\twin e$ as \emph{portal edges},
see Figure~\ref{fig:intro_portalgons}(c), and require that each portal
edge appears in one portal. If $\self p$ is a point on $\self e$, then
$\twin p$ will denote the corresponding point on $\twin e$.

Let $n$ be the total number of vertices in the fragments of \P, and
let $m$ be the number of portal edges; 
the number of portals is
$m/2$. Note that $m \leq n$. We denote the number of
vertices and the number of portal edges in a fragment $F \in \F$ by
$n_F$ and $m_F$, respectively.


If we ``glue'' the edges of the fragments along their common portal edges, then
the portalgon \P describes 
a surface (2-manifold with boundary) $\Sigma=\Sigma(\P)$, see
Figure~\ref{fig:intro_portalgons}. Specifically, $\Sigma$ is the space
obtained from \P by taking the collection \F and identifying
corresponding pairs of points on portal edges.
We can
write $\Sigma$ as a quotient space $(\bigcup_{F \in \F} F)/\sim$,
where $\sim$ is an equivalence relation that glues corresponding
portal edges $\self e$ and $\twin e$. 


\subparagraph{Fragment Graph.} A portalgon $\P=(\F,P)$ induces a
(multi)graph that we refer to as the \emph{fragment graph} $G$ of
\P (see Figure~\ref{fig:fragmentgraph}). Each fragment $F \in \F$ is a
node in $G$, and there is a link between $F_1$ and $F_2$ in
$G$ if and only if there is a portal $e$ with $\self e$ in
$F_1$ and $\twin e$ in $F_2$, or vice versa. Note there may be
multiple portals connecting $F_1$ and $F_2$. 
%
\begin{figure}[tb]
\centering
\includegraphics{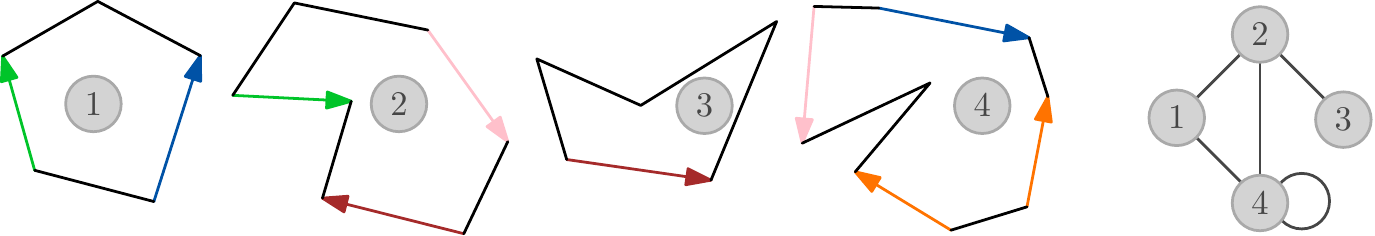}
\caption{The fragment graph of a portalgon with four fragments and five portals.
    }
\label{fig:fragmentgraph}
\end{figure}

\subparagraph{Paths and shortest paths.} A path $\pi$ from
$s \in \Sigma$ to $t \in \Sigma$ is a continuous function mapping the
interval $[0,1]$ to $\Sigma$, where $s=\pi(0)$ and $t=\pi(1)$. For two
points $p=\pi(a)$ and $q=\pi(b)$, we use $\pi[p,q]$ to denote the
restriction of $\pi$ to the interval $[a,b]$.
%
The fragments of \P split $\pi$ into a set $\Pi$ of maximal,
non-empty, subpaths $\pi_1,..,\pi_z$, where (the image of) each
$\pi_i$ is contained in a single fragment. To be precise, for each
fragment $F$, the intersection of (the image of) $\pi$ with $F$ is a
set of maximal subpaths, and $\Pi$ is the union of those sets over all
fragments $F$\footnote{Note that if $\pi$ passes through a vertex of
  \P that appears in multiple portals, $\Pi$ contains subpaths for
  which the image consists of only a single point; the vertex
  itself. We later restrict our attention to minimum complexity paths,
  which allows us to get rid of such singleton paths.  }.
  We define the length of $\pi$ as the sum of the lengths of its subpaths, 
  and the distance $d(s,t)$ between $s \in \Sigma$ and
$t \in \Sigma$ as the infimum of length over all paths between $s$ and
$t$. 
  We inherit the property that $d(s,t)=0$ if and
only if $s=t$ from the 
metric in each fragment. It
then follows that $d$ is also a metric. Moreover, $(\Sigma,d)$ is a
geodesic space.

Observe that if $\pi$ is a shortest path between $s$ and $t$, each
subpath $\pi_i$ is a polygonal path whose vertices are either
endpoints of $\pi_i$ or vertices of the fragment containing
$\pi_i$. Furthermore, when $\pi$ crosses a portal the path does not
bend (otherwise we can again locally shortcut it). It then follows
that a shortest path $\pi$ is also polygonal and can be uniquely
described by an alternating sequence $s=v_1,E_1,v_2,..,E_k,v_{k+1}=t$
of vertices ($s$, $t$, or portalgon vertices) and sequences of portal
edges crossed by the path. We refer to such a description as the
\emph{combinatorial representation} of the path, and to the total length of
these sequences as the \emph{complexity} of $\pi$. In the remainder of
the paper, we will use $\geod(s,t)$ to denote an arbitrary minimum
complexity shortest path between $s$ and $t$. As we observed before
(see Figure~\ref{fig:intro_complexity}(b)), a shortest path may still
intersect a single portal edge many times, and hence the complexity of
a shortest path may be unbounded in terms of $n$ and $m$.

\subparagraph{Isometry.}
A map $f\from X\to Y$ between metric spaces $X$ and $Y$ is an \emph{isometry} if $d_X(x,x')=d_Y(f(x),f(x'))$ for all $x,x'\in X$.
We say that $f$ is a \emph{local isometry} at a point $x$ if there exists an open neighborhood $U_x$ of $x$ such that the restriction of $f$ to $U_x$ is an isometry.

\subparagraph{Equivalent portalgons.}
Given a portalgon \P, there are many other portalgons that describe
the same surface $\Sigma$.
For instance, we can always cut
a fragment into two smaller fragments by transforming a chord of the
fragment into a portal, or, assuming this does not cause any overlap,
we can glue two fragments along a portal, see Figure~\ref{fig:equivalence}. 
More formally, two
portalgons $\P$ and $\mathcal{Q}$ are \emph{equivalent}, denoted $\P \equiv \mathcal{Q}$,
if there is a bijective isometry between them (i.e., if 
for any pair of points $s,t \in \P$, their distance in $\P$ and $\mathcal{Q}$ is the same).

\begin{figure}[tb]
  \centering
  \includegraphics{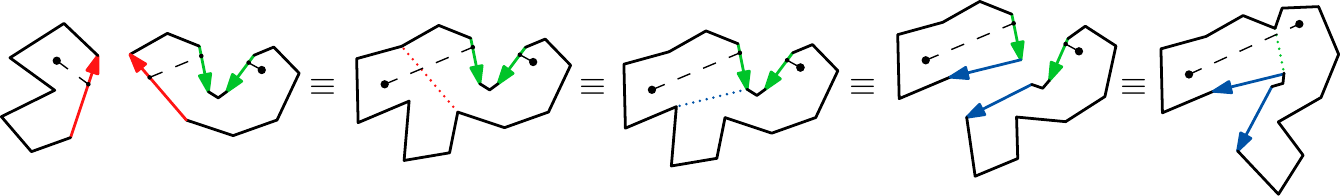}
  \caption{Five equivalent portalgons, with the shortest path between (the same) two points.}
  \label{fig:equivalence}
\end{figure}

\subparagraph{Happiness.}  Our ultimate goal will be to find, for a
given input portalgon, an equivalent portalgon such that all its
shortest paths have bounded complexity.  To this end, we introduce the
notion of a {\em happy portalgon}, and more specifically, a {\em happy
  fragment} of a portalgon.

Let $\P=(\F, P)$ be a portalgon, let $F \in \F$ be a fragment in \F,
and let $\Pi(p,q)$ denote the set of all 
shortest
paths between $p,q \in \Sigma$. We define $c(X)$ as the number of
connected components in $X$. The \emph{happiness}
$\H(F)=\max_{p,q\in\Sigma}\max_{\pi \in \Pi(p,q)} c(F\cap\pi)$ of
fragment $F$ is defined as the maximum number of times a
shortest path $\pi$ between any pair of points
$p,q \in \Sigma$ can go through the fragment.
 The happiness of \P is
then defined as $\H(\P)=\max_{F\in\F}\H(F)$ the maximum happiness over
all fragments. We say that a portalgon is $h$-happy when
$\H(\P) \leq h$.
Further, we will sometimes refer to a  \emph{happy} portalgon, without an $h$ value, to mean $O(1)$-happy.

\frank{I think here we would want some observation/claim/statement
  about *when* a segment $\overline{pq}$ is a shortest path.}

\begin{restatable}{lemma}{lemTriangulationPreservesHappiness}
  \label{lem:triangulation_preservers_happiness}
  Let $\P=(\F,P)$ be an $h$-happy portalgon, and let $\P'=(\F',P')$ be
  a triangulation of $\P$. The happiness of $\P'$ is at most $h$; that
  is, $\P'$ is an $h$-happy portalgon.
\end{restatable}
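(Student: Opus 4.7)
The plan is to show that the happiness bound transfers from $\P$ to $\P'$ triangle by triangle. I would fix a triangle $T \in \F'$, observe that $T$ lies inside some fragment $F \in \F$, and aim to show that any shortest path $\pi$ on $\Sigma$ has at most $h$ connected components in $T$. Since $\P$ is $h$-happy, $\pi$ already has at most $h$ connected components inside $F$, so it suffices to prove the key subclaim that each connected component of $\pi \cap F$ contributes at most one connected component to $\pi \cap T$.

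For the subclaim I would argue by contradiction: suppose some connected component of $\pi \cap F$ enters $T$ at a point $a$, exits at a point $b$, travels through $F \setminus T$, and re-enters $T$ at a point $c$. The subpath $\pi[b,c]$ lies entirely in $F$ (being part of a single component of $\pi \cap F$) and is itself a shortest path on $\Sigma$ between $b$ and $c$. By convexity of $T$, the straight segment $\overline{bc}$ lies in $T \subseteq F \subseteq \Sigma$, so it is a valid competitor path of Euclidean length $|bc|$. Because the length of $\pi[b,c]$ is at least $|bc|$ (Euclidean arc length lower bound) and at most $|bc|$ (shortest-path property applied to $\overline{bc}$), it must equal $|bc|$; the only path of that length from $b$ to $c$ in the Euclidean plane is the segment itself, so $\pi[b,c] = \overline{bc} \subseteq T$, contradicting the assumption that $\pi$ leaves $T$ between $b$ and $c$.

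The main obstacle is making sure this Euclidean shortcut argument is airtight in the portalgon setting; in particular, one must justify that $\overline{bc}$ really is a path on $\Sigma$ (handled by convexity of $T$ and $T \subseteq F$), and that equal length forces $\pi[b,c]$ to coincide with the segment (handled by Euclidean uniqueness of straight geodesics in the ambient plane of $F$). Once the subclaim is in place, the remainder is bookkeeping: each of the at most $h$ components of $\pi \cap F$ contributes at most one component to $\pi \cap T$, giving $c(T \cap \pi) \leq h$; maximizing over triangles $T \in \F'$ and shortest paths $\pi$ yields $\H(\P') \leq h$, as claimed.
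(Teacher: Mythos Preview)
The paper states this lemma without proof (the \texttt{restatable} wrapper suggests a proof was deferred to an appendix not present in the provided source), so there is no proof to compare against directly. Your argument is correct and is exactly the natural one: each triangle $T$ of $\P'$ sits inside a single fragment $F$ of $\P$, and a connected component of $\pi\cap F$ is a Euclidean shortest path inside the simple polygon $F$, hence cannot leave and re-enter the convex set $T$ without violating the strict triangle inequality (or, in the degenerate equal-length case, without actually being the segment $\overline{bc}\subseteq T$). This is essentially the same mechanism the paper later records as Observation~\ref{obs:shortest_path_component}.

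Two small remarks worth tightening. First, your equality case (``length of $\pi[b,c]$ equals $|bc|$ forces $\pi[b,c]=\overline{bc}$'') is the right way to handle it, since the happiness definition quantifies over \emph{all} shortest paths $\pi\in\Pi(p,q)$, not just minimum-complexity ones; you cannot simply invoke strict inequality. Second, make explicit that $\Sigma(\P')=\Sigma(\P)$, so the set of shortest paths $\Pi(p,q)$ is literally the same for both portalgons; otherwise the comparison of happiness values is not well-posed. With those two sentences added, the proof is complete.
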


Note that if $\P$ has $n$ vertices and $m$ portals, any triangulation
$\P'$ of $\P$ consists of $n$ vertices and $O(m+n)$ portals. The fact
that in an $h$-happy portalgon a shortest path crosses every portal at
most $h-1$ times implies the following.

\begin{restatable}{lemma}{lemShortestPath}
  \label{lem:shortest_path}
  Let \P be an $h$-happy portalgon
  and let $s$ and $t$ be two points in \P. A shortest path
  $\geod(s,t)$ between $s$ and $t$ has complexity $O(n+hm)$. This
  bound is tight in the worst case.
\end{restatable}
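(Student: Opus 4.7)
The plan is to decompose the complexity of the combinatorial representation $s=v_1,E_1,v_2,\ldots,E_k,v_{k+1}=t$ as $(k+1)+\sum_i|E_i|$, i.e.\ bending vertices plus portal-edge crossings, and bound these two contributions by $O(n)$ and $O(hm)$ respectively. For the bending part, I would first observe that any shortest path in a geodesic length space is injective: if $\geod(t_1)=\geod(t_2)$ with $t_1<t_2$, then concatenating $\geod[0,t_1]$ with $\geod[t_2,1]$ yields an $(s,t)$-path of length strictly smaller by $|\geod[t_1,t_2]|>0$, contradicting optimality. Since each of $v_2,\ldots,v_k$ is a portalgon vertex and distinct $v_i$ correspond to distinct points of $\Sigma$, we get $k-1\le n$, so the bending contribution is $O(n)$.

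For the crossings, I would fix a portal $e=(\self e,\twin e)$ with $\self e\subset F_1$ and $\twin e\subset F_2$ and bound how many times $\geod$ crosses $e$. Each crossing of $e$ contributes one endpoint on $\self e$ to some subpath of $\geod\cap F_1$. A shortcut argument rules out a subpath having both endpoints on $\self e$: if it did, then the in-$F_1$ detour between its endpoints $p_1,p_2\in\self e$ could be replaced by the straight chord from $\twin{p_1}$ to $\twin{p_2}$ along $\twin e\subset F_2$, which has the same Euclidean length as the chord $p_1p_2\subset\self e$ by the portal identification and is therefore at least as short as any path from $p_1$ to $p_2$ through $F_1$, strictly shortening $\geod$. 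The number of crossings of $e$ is thus at most the number of subpaths of $\geod\cap F_1$ touching $\self e$, which is bounded by $\H(F_1)\le h$. Summing over the $m/2$ portals yields $\sum_i|E_i|=O(hm)$, and combining with the bending bound gives total complexity $O(n+hm)$.

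For the tightness claim I would combine two standard constructions: a spiky single-fragment simple polygon with $\Omega(n)$ vertices admitting a geodesic that bends at $\Omega(n)$ of them, and a scaled version of the spiral example in Figure~\ref{fig:intro_complexity}(a) with $m/2$ portals arranged in series so that a geodesic crosses each portal $\Theta(h)$ times. Splicing these two examples at a common endpoint produces an $h$-happy portalgon whose shortest path has complexity $\Omega(n+hm)$. I anticipate the main obstacle to be the shortcut argument certifying that each portal is crossed at most $h$ times: it requires care in the degenerate cases where $F_1=F_2$, where $\self e$ and $\twin e$ share an endpoint, or where the replacement chord grazes a fragment vertex, but in every case the straight-chord comparison yields the required contradiction.
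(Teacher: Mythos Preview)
Your argument follows the paper's: split the complexity into bending vertices (at most $n$, since a shortest path visits each portalgon vertex at most once) and portal-edge crossings, use $h$-happiness to bound the latter by $O(hm)$, and combine a reflex-chain fragment with a multi-portal spiral for the lower bound. The only cosmetic difference is that the paper bounds crossings via components-per-fragment while you bound crossings-per-portal with your shortcut argument; note that your ``strictly shortening'' step is not literally strict in the degenerate case you already flag (the chord along $\twin e$ then has equal length), but there the replacement still reduces the number of portal crossings, contradicting the minimum-complexity choice of $\geod(s,t)$.
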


\begin{proof} 
  The vertices of $\geod(s,t)$ are either: (i) $s$ or $t$ itself, (ii)
  vertices of \P, or (iii) points in which $\geod(s,t)$ crosses a
  portal edge. There are only two vertices of type (i) on
  $\geod(s,t)$. A shortest path can visit any vertex of \P at most once; hence, there are at
  most $n$ vertices of type (ii). Finally, since \P is $h$-happy,
  $\geod(s,t)$ crosses every fragment $F \in \F$ at most $h$ times.
  \begin{figure}[tb]
    \centering
    \includegraphics{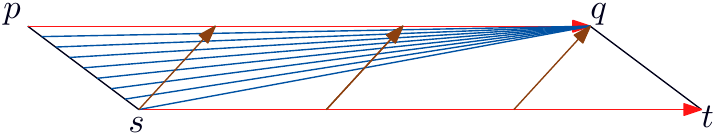}
    \caption{A simple $h$-happy portalgon in which a shortest path has
      complexity $\Omega(hm)$ as it crosses $\Omega(m)$ portal edges
      (the blue portal edges) $\Omega(h)$ times each.}
    \label{fig:lowerbound_complexity_path_main}
  \end{figure}
  A shortest path of complexity $\Omega(n)$ is easy to attain in a
  fragment without portals and a chain with $\Omega(n)$ reflex
  vertices. We get the $\Omega(hm)$ term using a portalgon like in
Figure~\ref{fig:lowerbound_complexity_path_main}. For any $h$, we can choose the
  length of the red portal so that the portalgon is $h$-happy.
\end{proof}

\subparagraph{Shortest path map.} Given a portalgon \P, a source point
$s \in \Sigma$, and a region $\X \subseteq \Sigma$ the shortest path
map $\SPM_\X(s)$ of $s$ is a subdivision of $\X$ into maximally
connected regions, such that for all points $q$ in the interior of a
region $R \in \SPM_\X(s)$ the shortest path from~$s$ to $q$ is unique,
and has the same combinatorial structure, i.e., visits the same
sequence of vertices and portal edges of \P. Note that the complexity
of $\SPM_\X(s)$ depends on the representation of the surface
$\Sigma=\Sigma(\P)$, that is, the portalgon \P. So changes to \P may
affect the complexity of $\SPM_\X(s)$. For example, splitting faces of
\P increases the complexity of $\SPM_\X(s)$. Hence, when $\Sigma$ is
fixed, an important problem is to find a good portalgon (i.e. one for
which $\SPM(s)=\SPM_\Sigma(s)$ has low complexity) representing it.


\subparagraph{Intrinsic Delaunay triangulation.}
A triangulation of a portalgon is an equivalent portalgon whose vertex set is the same, and all of whose fragments are triangles.
In particular, among all such triangulations, the intrinsic Delaunay triangulation is such that for any interior edge of the triangulation, for the two triangles $t$ and $t'$ incident to that edge, the corners of $t$ and $t'$ not incident to that edge sum up to at most 180 degrees~\cite{bobenko2007discrete}.


\section{Shortest paths in portalgons}
\label{sec:Shortest_Paths_in_Portalgons}

In this section we show how to compute the shortest path map
$\SPM_\P := \SPM_\P(s)$ of a source point $s$ in a $h$-happy
triangulated portalgon \P. By
Lemma~\ref{lem:triangulation_preservers_happiness} any $h$-happy
portalgon can be transformed into a triangulated $h$-happy portalgon
by triangulating its fragments and replacing the resulting diagonals
by portals. For ease of description, we assume that any two triangles
have at most one edge in common, and that $s$ is a vertex of the
triangulation. We can again subdivide every triangle into $O(1)$
$h$-happy triangles to achieve this.

For a path $\pi$ starting from the source $s$, define its signature, denoted $\sigma(\pi)$, to be the sequence of vertices and portals it passes through.
Note that paths may simultaneously pass through a vertex and a portal (or multiple portals incident to that vertex).
In this case, we break ties in the sequence by placing vertices before portals, and portals in the order that the path would pass through them if it were perturbed away from vertices (in a consistent way), where we always perturb the start of the path into a fixed triangle $T_s$ incident to $s$.

If $\pi$ is a shortest path from the source $s$ to a point $p$ in a triangle $T$, and $v$ is the last vertex on $\sigma(\pi)$, then the length of $\pi$ is $d(s,v)+d(v,p)$, where $d(v,p)$ can be expressed as the length of a line segment as follows.
We think of $T$ as being embedded locally isometrically in the Euclidean plane, and by unfolding the fragments that $\pi$ passes through after $v$, we can compute a copy of $v$ in the plane, as well as copies of the portals and triangles that $\pi$ passes through after~$v$; see Figure~\ref{fig-full:portalSequenceWindow}.
The locations of these copies in the plane depend only on $\sigma(\pi)$ (but not $p$ or $\pi$ itself).
Then, $d(v,p)$ is the Euclidean distance between the copies of $v$ and $p$ in the plane, and the segment connecting these copies passes through all the unfolded copies of portals of $\sigma(\pi)$ after $v$.
Let $e$ be an edge of $T$, and define $I_{\sigma(\pi),e}$ to be the interval of points $p$ on $e$ for which the segment $\overline{vp}$ passes through all the unfolded copies of portals of $\sigma(\pi)$ after $v$.
For a point $q$ in $T$, define
\[
f_{\sigma(\pi)}(q)=\begin{cases}
    d(s,v)+\|\overline{vq}\| & \text{if the segment $\overline{vq}$ passes through $I_{\sigma(\pi),e}$,}\\
    \infty & \text{otherwise.}
\end{cases}
\]
Define $d_{\sigma(\pi)}(q)$ to be the infimum length over paths from $s$ to $q$ with signature $\sigma(\pi)$.
This infimum is not necessarily realized by a path with the same signature, but is always realized by a path that potentially bends around additional vertices after the last vertex of $\sigma(\pi)$, which are therefore inserted in its signature.
We say that such a signature \emph{reduces} to $\sigma(\pi)$.
Note that if $f_{\sigma(\pi)}(q)$ is finite, then $f_{\sigma(\pi)}(q)=d_{\sigma(\pi)}(q)$.
If $\pi$ is a shortest path from $s$ to $p$, then $\pi$ has length $f_{\sigma(\pi)}(p)$.
For a portal $e$ of $T$, let $f_{\sigma(\pi)|e}\from e\to\R\cup\{\infty\}$ be the restriction of $f_{\sigma(\pi)}$ to points of $e$.

    \begin{figure}[tb]
        \centering
        \includegraphics{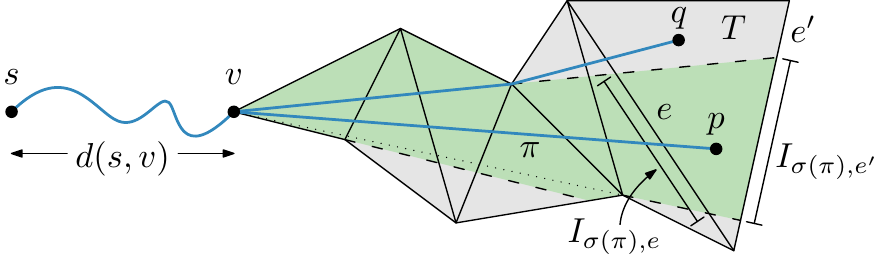}
        \caption{The length of the path $\pi$ from $s$ to point $p$ in
          triangle $T$ is $d(s,v)+\|p-v\|$. The signature of $\pi$
          defines intervals $I_{\sigma(\pi),e}$ and
          $I_{\sigma(\pi),e'}$ on edges $e$ and $e'$ of $T$, as well
          as a distance function $d_{\sigma(\pi)}$ illustrated by the
          path to point $q$ in $T$.}
        \label{fig-full:portalSequenceWindow}
    \end{figure}

\subsection{A data structure for maintaining a lower envelope}
\label{app:A_data_structure_for_maintaining_a_lower_envelope}

Let $F$ be a set of $m$ partial continuous functions, each defined on
a single interval, and such that each pair of functions can intersect
at most twice. We describe a data structure storing the lower envelope
$\env_F$ of $F$, that supports the following operations:

\begin{description}
\item[\textsc{NextLocalMinimum}($\delta$):] report the smallest local
  minimum of $\env_F$ that is larger than $\delta$.
\item[\textsc{NextVertex}$(f,q)$:] given a function $f \in F$ that
  realizes $\env_F$ at point $q$, find (if it exists) the lowest
  endpoint $(v,\delta')$ of the segment\maarten {What is a segment? Since the functions are "partial unimodal", are they composed of "segments", or are we looking for a vertex caused by the intersection of two different functions here?}\frank{in our application it will always be an intersection point; but it may also just be the endpoint of the interval on which $f$ is defined; that shouldn't be a problem}
  of $\env_F$ containing
  $(q,f(q))$ for which $\delta' > f(q)$.
\item[\textsc{Insert}($f$):] insert a new function into $F$.
\end{description}



We develop a simple static data structure supporting the
\algcall{NextLocalMinimum} and \algcall{NextVertex} operations, that
we turn into an efficient insertion-only data structure using the
logarithmic method~\cite{bentley1980decomposable}. In our
implementation of the static structure, the two query operations may
actually make modifications to the static structure that will
guarantee efficient amortized query times.

We represent $\env_F$ using $O(\log m)$ static data structures. Each
such static data structure stores the lower envelope $\env_i$ of a
subset $F_i$ of $2^i$ functions from $F$. In particular, it stores:
\begin{enumerate}[label=(\roman*)]
\item a binary search tree of (a subset of) the intervals of $\env_i$,
ordered by $x$-coordinate of their left endpoint, and \item a sorted
list of local minima on $\env_i$ ordered on increasing $y$-coordinate.
\end{enumerate}
We can now easily answer \algcall{NextLocalMinimum} queries in
$O(\log^2 m)$ time: for each of the $O(\log m)$ stored lower
envelopes, we binary search in the list of local minima to find the
smallest minimum with value larger than $\delta$, and report the
smallest one overall.


\begin{figure}[tb]
  \centering
  \includegraphics{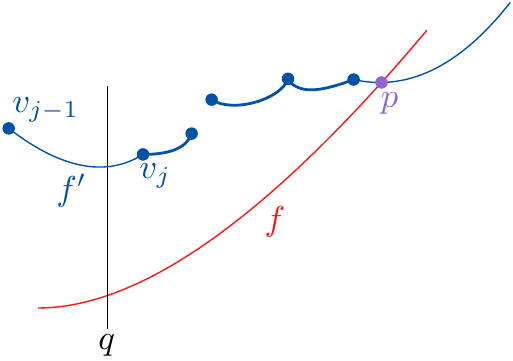}
  \caption{In a \algcall{NextVertex}$(f,q)$ query we find $\env_i(q)$
    on each envelope $\env_i$ (shown in dark blue), and walk along
    $\env_i$, to find the first intersection point $p$ of $f$ with
    $\env_i$. The fat blue segments of $\env_i$ are, and will forever
    be, dominated by $f$, hence we delete them.  }
  \label{fig:walk_envelope}
\end{figure}

To answer \algcall{NextVertex}$(f,q)$ queries we essentially compute
the endpoints of the segment of $\env_F$ containing $(q,\delta)$, and
report the lowest such endpoint with value larger than $\delta$. To
find the right endpoint of this segment, we do the following (finding
the left endpoint is symmetric). For each of the $O(\log m)$ static
structures, we do a binary search on the vertices of $\env_i$ to find
the interval $[v_{j-1},v_j]$ containing point $q$, and the function
$f'$ that realizes $\env_i$ in this interval. We then find the first
intersection point $p$ of $f$ with $\env_i$ to the right of $q$ by
walking along $\env_i$ (see Figure~\ref{fig:walk_envelope}). In
particular, we compute the intersection points of $f$ and $f'$, and if
there is such a point in the interval $[q,v_j]$, we found point
$p$. Otherwise, we continue the search (walk) with the next interval
$[v_j,v_{j+1}]$. We can stop the walk if we arrive at an interval
whose left endpoint lies strictly below $f$ (in that case there is no
intersection on $\env_i$). All intervals $[v_{j-1},v_j]$ that we
encounter during this walk and in which $f$ lies below $f'$ in the
entire interval will never show up on $\env_F$ (since $f$ will never
be deleted). Hence, we will delete these intervals from $\env_i$. This
allows us to bound the overall time spent walking along $\env_i$
envelopes over all queries. The running time of our query is thus
$O(\sum_{i=0}^{O(\log m)} |\env_i| + \ell) = O(\log m\log |\env_F| +
\ell) = O(\log^2 m + \ell)$, where $\ell$ is the total number of
intervals deleted during this walk.

Finally, to handle an \algcall{Insert}$(f)$ operation; we create a new
static data structure lower envelope structure $\env_0$ representing
just the singleton set $F'_0=\{f\}$. 
If there are ever two static data
structures representing sets $F_i$ and $F'_i$ of the same size $2^i$,
we replace them by a data structure representing a
new set $F'_{i+1}=F_i \cup F'_i$ of size $2^{i+1}$. We can construct this data structure by
merging the two earlier structures: i.e., we can simultaneously scan
through $\env_i$ and $\env'_i$ creating the new combined envelope in
time $O(|\env_i|+|\env'_i|+|\env'_{i+1}|) = O(|\env'_{i+1}|)$. During
this merge, we can mark the local minima that remain local minima. We
can then merge the sorted lists of marked local minima in linear
time.

Next, we prove that the total number of intervals in the lower
envelopes that we create during a sequence of $m$ insertions is at
most $O(|\env_F|\log m)=O(\lambda_4(m)\log m)$, where $\lambda_4(n)$
is the maximum length of a Davenport-Schinzel sequence of order four
on $n$ symbols. Hence, the amortized insertion time is
$O((\lambda_4(m)/m)\log m)$.
\begin{lemma}
  \label{lem:total_size_created}
  The total number of intervals in the $\env_i$ envelopes created over
  a sequence of $m$ insertions into $F$ is $O(\lambda_4(m)\log m)$.
\end{lemma}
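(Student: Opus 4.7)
The plan is to bound the total created complexity by counting separately at each level of the logarithmic decomposition, and then using near-linearity of $\lambda_4$.

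First, I would observe that envelopes are only created during \algcall{Insert} operations (the query operations only remove intervals), so it suffices to bound the sum of $|\env'_i|$ over all envelopes ever produced by merges. At level $i$, every envelope $\env'_i$ is the lower envelope of some subset of $2^i$ functions from $F$, each defined on a single interval and with at most two pairwise intersections. By the classical Davenport--Schinzel bound, any such envelope has complexity $O(\lambda_4(2^i))$.

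Next I would bound how many times level $i$ is rebuilt. In the logarithmic method, a level-$i$ structure is formed exactly when two level-$(i-1)$ structures collide, which happens at most once per $2^i$ insertions. Therefore, over the whole sequence of $m$ insertions, the number of level-$i$ envelopes created is at most $\lceil m/2^i\rceil$, and levels are nonempty only for $0 \leq i \leq \lfloor\log m\rfloor$. Summing gives
\[
\sum_{i=0}^{\lfloor\log m\rfloor} \Bigl\lceil \tfrac{m}{2^i}\Bigr\rceil \cdot O(\lambda_4(2^i))
= O\!\left( m \sum_{i=0}^{\lfloor\log m\rfloor} \frac{\lambda_4(2^i)}{2^i}\right).
\]

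The last step, which I expect to be the main (though standard) technical point, is to invoke the fact that $\lambda_4(n)/n$ is non-decreasing in $n$: since $2^i \leq m$ for every term, $\lambda_4(2^i)/2^i \leq \lambda_4(m)/m$. Thus the sum is at most $(\log m +1)\cdot \lambda_4(m)/m$, and multiplying by $m$ yields the claimed bound $O(\lambda_4(m)\log m)$. The only delicate aspect is making sure we really may apply monotonicity of $\lambda_4(n)/n$ and that the bookkeeping of ``at most $\lceil m/2^i\rceil$ level-$i$ structures'' is correct when the insertion sequence does not end on a clean power of two; both are addressed by charging each level-$i$ merge to the insertion that triggered it, which happens at most once per $2^i$ insertions.
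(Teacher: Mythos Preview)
Your argument is correct and is in fact the standard level-by-level analysis of the logarithmic method: at level $i$ there are at most $\lfloor m/2^i\rfloor$ merges, each producing an envelope of at most $O(\lambda_4(2^i))$ intervals, and superadditivity of $\lambda_4$ (i.e., $k\,\lambda_4(n)\le\lambda_4(kn)$, obtained by concatenating optimal sequences on disjoint alphabets) gives $\lfloor m/2^i\rfloor\cdot\lambda_4(2^i)\le\lambda_4(m)$ for each $i$, hence $O(\lambda_4(m)\log m)$ in total. One small remark: the literal statement ``$\lambda_4(n)/n$ is non-decreasing'' is stronger than what you need and not entirely obvious; it is cleaner to invoke superadditivity directly as above, which immediately yields the per-level bound without any monotonicity claim.

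The paper argues differently. Instead of summing over levels, it looks at the \emph{final} decomposition into buckets $F_j$ of size $2^j$ and, for each such bucket, charges all the merge work that ever involved its functions to the intervals of the final envelope $\env_j$, claiming each such interval is charged $O(j)=O(\log m)$ times; it then sums $\sum_j \lambda_4(2^j)\log m=O(\lambda_4(m)\log m)$ using that $\sum_j\lambda_4(2^j)$ is dominated by its last term. Your per-level count avoids the charging step entirely (which the authors themselves flag as somewhat unsatisfying), and relies only on the uncontroversial facts that each level-$i$ envelope has size $O(\lambda_4(2^i))$ and that at most $\lfloor m/2^i\rfloor$ of them are ever built. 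Both routes use the same near-linearity of $\lambda_4$, but your decomposition is by level rather than by final bucket, and is the more transparent of the two.
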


\begin{proof}
  Consider the set of functions $F_j$ in the final structure, we bound
  all intervals created due to functions in $F_j$ by
  $O(\lambda_4(2^j)\log m)$. It then follows that the total number of
  intervals created is bounded by
  $\sum_{j=0}^{O(\log m)} O(\lambda_4(2^j)\log m) =
  O(\lambda_4(2^{O(\log m)})\log m) = O(\lambda_4(m)\log m)$ as
  claimed.

  Clearly, there is at most one interval created due to $F_0$. For all
  intervals created by functions in $F_j$, with $j > 0$, their
  intervals were created when we merged $\env_i$ and $\env'_i$,
  $0 \leq i < j$, into an $\env_{i+1}$. Charge the cost of creating
  the intervals from $\env_i$ and $\env'_i$ to the intervals of
  $\env_{i+1}$. It now follows that in the end, all costs are charged
  to intervals of $\env_j$. Moreover, every interval in $\env_j$ is
  charged at most $O(j)=O(\log m)$ times. Since $\env_j$ consists of
  at most $O(\lambda_4(2^j))$ intervals the lemma follows.
  \frank{not really happy with this proof yet. This should really be
    fairly standard though, but the $\lambda_4()$'s are somewhat annoying.}
\end{proof}

Since any interval can be deleted at most once, we thus obtain the following result:

\begin{restatable}{lemma}{lemLowerEnvelopeStructure}
  \label{lem:lower_envelope_structure}
  We can maintain the lower envelope $\env_F$ of a set $F$ of partial
  functions $f$, each pair of which intersect each other at most
  twice, in a data structure that: any sequence of $m$
  \algcall{Insert} operations and $k$ \algcall{NextLocalMinimum} and
  \algcall{NextVertex} queries takes at most
  $O(k\log^2 m + \lambda_4(m)\log m)$ time.
\end{restatable}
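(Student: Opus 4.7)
The plan is to prove the bound by separately accounting for three contributions to the total running time: the non-deletion cost of each query, the amortized deletion cost incurred during \algcall{NextVertex} queries, and the cost of building the merged envelopes during insertions. All three contributions have already been bounded individually in the preceding discussion; the proof is essentially a careful summation.

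First I would handle the non-deletion query cost. Each \algcall{NextLocalMinimum} query performs $O(\log m)$ binary searches, one into the sorted list of local minima of each $\env_i$, for a total of $O(\log^2 m)$. Each \algcall{NextVertex} query performs an analogous binary search in each $\env_i$ to locate the interval containing $q$, also taking $O(\log^2 m)$ before any interval is deleted. Summed over $k$ queries, this contributes $O(k \log^2 m)$.

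Next I would bound the total deletion cost. The deletion component of a single \algcall{NextVertex} query is $O(\ell)$, where $\ell$ counts intervals removed from the various $\env_i$'s during the walk. Because once an interval is deleted it is gone forever, the total deletion cost across the whole sequence of queries is at most the total number of intervals ever created in any sub-structure, which by Lemma~\ref{lem:total_size_created} is $O(\lambda_4(m)\log m)$. The insertion cost is handled the same way: merging $\env_i$ and $\env'_i$ into $\env'_{i+1}$ takes $O(|\env'_{i+1}|)$ time, and summing over all merges during the $m$ insertions is again bounded by the total number of intervals created, so also $O(\lambda_4(m)\log m)$.

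Adding the three contributions yields the claimed $O(k\log^2 m + \lambda_4(m)\log m)$ bound. The only subtle point I expect is keeping the amortization clean, since deletions happen during queries while intervals are created during insertions; but this is immediate because every created interval can be charged at most once (either it survives in the final structure, or it is eventually deleted by one \algcall{NextVertex} query), so the two ``creation'' charges in the preceding paragraph cover both the insertion work and the deletion work without any double counting.
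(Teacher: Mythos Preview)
Your proposal is correct and matches the paper's argument essentially verbatim: the paper's ``proof'' is the preceding discussion together with Lemma~\ref{lem:total_size_created} and the one-line observation ``since any interval can be deleted at most once,'' and you have faithfully unpacked exactly those three pieces (non-deletion query cost $O(k\log^2 m)$, total merge cost bounded by total intervals created, and total deletion cost bounded by the same quantity).
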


\subsection{SPM restricted to the edges}
\label{app:SPM_restricted_to_the_edges}

We first compute the shortest path map of $s$, restricted to the edges
of the portalgon. As in earlier
work~\cite{mitchell87discr_geodes_probl,mitchell91weigh_region_probl}
we propagate a wavefront of points at distance $\delta$ from $s$ as we
vary $\delta$ from
$0$ to $\infty$. However, it will be more convenient to view this as a
collection of simultaneous sweepline algorithms. For each portal edge
$e$, we sweep a horizontal line at height $\delta$ upward through the
``position along $e$ $\times$ distance from $s$'' space (see
Figure~\ref{fig-full:spm_data}), while we construct the part of
$\SPM_e$ below the sweep line. The main challenge is computing the
next event --the first vertex in the lower envelope of the distance
functions above the sweep line-- in time.

For each portal $e$ connecting two fragments $T_A$ and $T_B$, we
maintain the following information:
\begin{figure}[tb]
  \centering
  \includegraphics{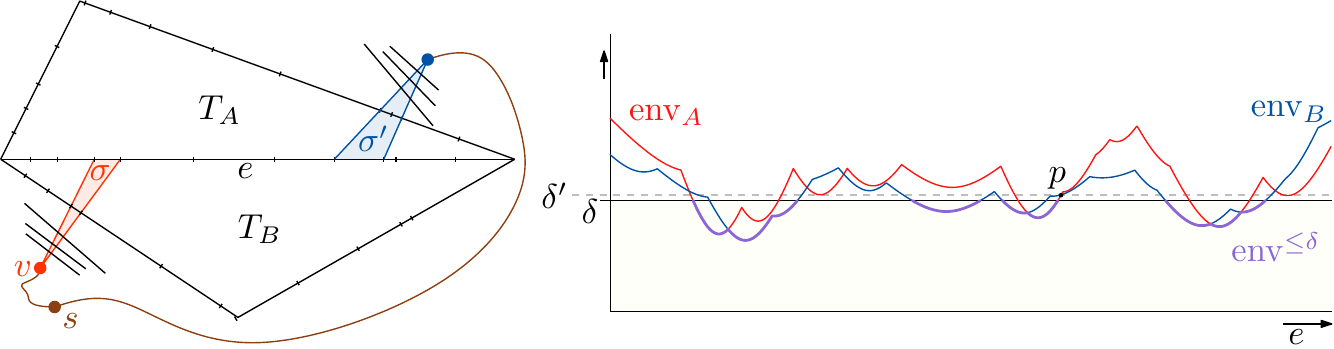}
  \caption{The information that we maintain while computing the SPM of
    the edges.}
  \label{fig-full:spm_data}
\end{figure}
\begin{enumerate}
    \item Let $S_A(e,\delta)$ (resp. $S_B(e,\delta)$) be the set of
      signatures of shortest paths\footnote{For efficiency, we
        symbolically ensure uniqueness of shortest paths by breaking
        ties consistently based on their signatures.
      } from $s$ to
      points on the boundary of $T_A$ (resp. $T_B$),
      where the last element of the
      signature is not $e$, and the length of the path is at most
      $\delta$.
      We represent each signature $\sigma\in S_A(e,\delta)\cup S_B(e,\delta)$ implicitly by storing the interval $I_{\sigma,e}$, the position of (the unfolded copy of) the last vertex $v$ on $\sigma$, and $d(s,v)$, so that we can compute $f_{\sigma|e}$ in constant time.
      For the purpose of recovering the shortest path map, we additionally store a pointer to the corresponding data on the previous edge on $\sigma$ (if any).
    \item We store the lower envelope $\env_A(e,\delta)$ of the
      functions $f_{\sigma|e}$, where $\sigma$ ranges over the
      signatures $S_A(e,\delta)$, in the data structure of Lemma~\ref{lem:lower_envelope_structure}.   Based on $S_B$, we symmetrically define and store $\env_B(e,\delta)$.
    \item
    Let $\envR(e,\delta)$ be the lower envelope of the functions $\env_A(e,\delta)$ and $\env_B(e,\delta)$.
    We maintain only the part of $\envR(e,\delta)$ that lies below the sweep line, denoted $\envM(e,\delta)$.

    Intuitively, $\envM(e,\delta)$ will be the part of $\SPM_e$ within distance $\delta$ from $s$.
    We store it in a balanced binary search tree of breakpoints along the edge.
    Note that some of these break points may lie on the sweep line, and thus move continuously with the sweep line.

  \item We maintain a binary search tree $\envR^{=\delta}(e,\delta)$
    storing the intersection points of $\env_A(e,\delta)$ and
    $\env_B(e,\delta)$ with the sweep line, in order along the sweep
    line. For each such intersection point we store the function(s)
    from $\env_A(e,\delta)$ or $\env_B(e,\delta)$ realizing this intersection point.

    \item Finally, we maintain a set of events pertaining to the edge $e$.
    We aggregate the events of all edges in a global priority queue, and use it to advance the sweep line algorithm to the next relevant value.
    The events that we store for an edge $e$ are the values $\delta'>\delta$ such that
    \begin{enumerate}
        \item $\delta'$ corresponds to a minimum of a function $f_{\sigma|e}$ with $\sigma\in S_A(e,\delta)\cup S_B(e,\delta)$,

        \item $\delta'$ corresponds to a vertex of~$\env_A(e,\delta)$
          or $\env_B(e,\delta)$, in particular an endpoint of one of
          the edges that currently appear in
          $\env^{=\delta}(e,\delta)$, or

        \item $\delta'$ corresponds to an intersection between
          functions $f_{\sigma|e}$ on $\env_A(e,\delta)$ and
          $f_{\sigma'|e}$ on $\env_B(e,\delta)$, where
          $f_{\sigma|e}$ and $f_{\sigma'|e}$ are neighbors in $\env^{=\delta}(e,\delta)$.
    \end{enumerate}
    For each event, we also keep track of the type and corresponding functions.
\end{enumerate}

Note that $\envR^{\leq\infty}(e,\infty)=\envR(e,\infty)$.
Before we show how to maintain all the above information, we show that this information indeed corresponds to the shortest path map.
\begin{lemma}
\label{lem:envIsSPM}
$\envM(e,\delta)$ encodes the shortest paths of length $\leq\delta$ to points on the edge $e$.
\end{lemma}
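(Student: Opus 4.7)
The plan is to prove, for every point $p$ on $e$ with $d(s,p)\leq\delta$, the equality $\envR(e,\delta)(p) = d(s,p)$, and further that the piece of the envelope containing $p$ is labelled by the signature of a (minimum-complexity) shortest path from $s$ to $p$. Once these are established, $\envM(e,\delta)$ coincides with $d(s,\cdot)$ on the $\delta$-reachable portion of $e$, and following the back-pointers stored with each signature in $S_A(e,\delta)\cup S_B(e,\delta)$ reconstructs the combinatorial shortest path. I would prove the equality by showing both inequalities.

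For the easy direction, $d(s,p)\leq\envR(e,\delta)(p)$, I would argue that every $f_{\sigma|e}$ with $\sigma\in S_A(e,\delta)\cup S_B(e,\delta)$ pointwise dominates $d(s,\cdot)$ on its domain of finiteness. If $f_{\sigma|e}(p)<\infty$, then with $v$ the last vertex of $\sigma$ the unfolded segment $\overline{vp}$ crosses all unfolded copies of portals of $\sigma$ after $v$, and refolding gives a genuine path in $\Sigma$ from $v$ to $p$ of length $\|\overline{vp}\|$. Prepending a shortest path from $s$ to $v$ yields a path of length $d(s,v)+\|\overline{vp}\| = f_{\sigma|e}(p)$, so $d(s,p)\leq f_{\sigma|e}(p)$; taking the infimum over $\sigma$ gives the inequality.

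For the reverse inequality, I would take a minimum-complexity shortest path $\pi$ from $s$ to $p$ and, without loss of generality, assume its final stretch lies in $T_A$. The minimum-complexity choice guarantees that $\pi$ does not cross the portal $e$, so the last element of $\sigma(\pi)$ is not $e$. Since $\pi$ terminates on the boundary of $T_A$ (at $p\in e$) and has length $d(s,p)\leq\delta$, the signature $\sigma(\pi)$ lies in $S_A(e,\delta)$. With $v$ the last vertex of $\sigma(\pi)$, the tail of $\pi$ after $v$ unfolds to the straight segment $\overline{vp}$, so $f_{\sigma(\pi)|e}(p)=d(s,v)+\|\overline{vp}\|=d(s,p)$, giving $\envR(e,\delta)(p)\leq d(s,p)$.

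Combining the two inequalities yields $\envR(e,\delta)(p)=d(s,p)$, and by the symbolic tie-breaking on signatures the piece of $\envR(e,\delta)$ covering $p$ is labelled by the signature of $\pi$; following the stored back-pointers then recovers the combinatorial shortest path. I expect the most delicate step to be the reverse direction, specifically the claim that $\sigma(\pi)$ lies in $S_A(e,\delta)$: one must use the minimum-complexity choice of $\pi$ and the convention that places portals after vertices in the signature at any shared incidence to argue that terminating at a point on $e$ does not artificially append $e$ to the signature.
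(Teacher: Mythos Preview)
Your proposal follows the same two-inequality approach as the paper's proof (every finite $f_{\sigma|e}$ value corresponds to a real path, hence $d(s,p)\le\envR(e,\delta)(p)$; and the signature of an actual shortest path to $p$ lies in $S_A(e,\delta)\cup S_B(e,\delta)$, giving the reverse inequality), and is in fact more explicit than the paper's three-line argument.

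There is, however, one incorrect step. The claim that ``the minimum-complexity choice guarantees that $\pi$ does not cross the portal $e$'' is false: a minimum-complexity shortest path to a point on $e$ can perfectly well cross $e$ many times earlier (think of a geodesic winding around a long thin cylinder whose single portal is $e$). What you actually need---and what is true---is only that the \emph{last} element of $\sigma(\pi)$ is not $e$. That holds for the simpler reason that $\pi$ terminates at $p\in e$ coming from the interior of $T_A$ (respectively $T_B$); it does not pass \emph{through} $e$ at its endpoint, so $e$ is not appended as the final symbol of the signature. Minimum complexity plays no role in this particular step (the paper uses its symbolic tie-breaking to pick a unique $\pi$, not minimum complexity). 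With this correction, your argument goes through and coincides with the paper's.
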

\begin{proof}
For any point $q \in e$ the (due to tie-breaking, unique) shortest path $\pi$ from $s$ to $q$ arrives from either $T_A$ or $T_B$.
If $\pi$ has length at most $\delta$, then $\sigma(\pi)$ appears in either $S_A(e,\delta)$ or $S_B(e,\delta)$, and hence $f_{\sigma(\pi)|e}(q)$ lies on or above the lower envelope $\envM(e,\delta)$.
Conversely, any point on $\envM(e,\delta)$ with finite function value corresponds to a path of that length, so $f_{\sigma(\pi)|e}(q)$ indeed lies on $\envM(e,\delta)$.
\end{proof}

For $\delta=0$ and any edge $e$, we initialize $S_A(e,0)$ to be empty if $T_A$ does not contain~$s$, and to consist of a single sequence consisting of $s$ and the edges around $s$ between $T_s$ and $T_A$ (excluding $e$) otherwise.
We initialize $S_B$ symmetrically.
Based on this, $\env_A$ and $\env_B$ are initially lower envelopes each of at most one function, where $\envM(e,\delta)$ is empty unless $s\in e$.
The priority queue of events related to $e$ therefore consists of at most one event: $\nextE(e,0)$ is the distance from~$s$ to $e$ inside $T_A$ or $T_B$ (if $s$ lies in $T_A$ or $T_B$), and $\infty$ otherwise.
It remains to argue that we can correctly maintain the information as $\delta$ increases.
\begin{figure}[tb]
  \centering
  \includegraphics{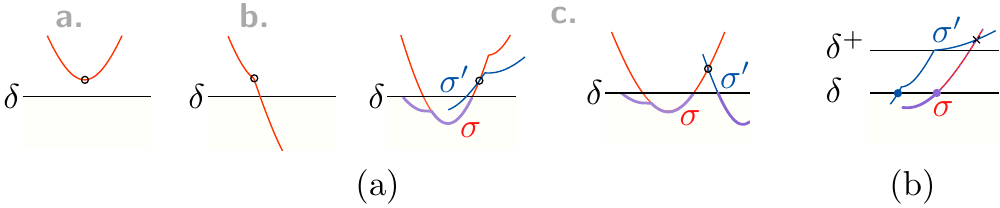}
  \caption{(a) The different event types: \enumit{a.} local minima in
    $\env_A$ or $\env_B$, \enumit{b.} breakpoints of $\env_A$ or
    $\env_B$, and two events of type \enumit{c.} first intersections
    of $\env_A$ and $\env_B$. (b) We will detect the intersection
    between $\env_A$ and $\env_B$ once $f_{\sigma|e}$ and
    $f_{\sigma'|e}$ become neighbors in $\env^{=\delta}$ at time
    $\delta^+$.  }
  \label{fig:events}
\end{figure}

Observe, that our events allow us to correctly detect the first time
$\delta' > \delta$ at which $\envR^{\leq \delta'}(e,\delta)$ differs
combinatorially from $\envR^{\leq \delta}(e,\delta)$, provided that
$S_A(e,\delta)$ and $S_B(e,\delta)$ remain unchanged. In particular,
any local minimum $\envR(e,\delta)$ is a local minimum of
$\env_A(e,\delta)$ or $\env_B(e,\delta)$, and any vertex of
$\envR(e,\delta)$ is either a vertex of $\env_A(e,\delta)$ or
$\env_B(e,\delta)$, or an intersection point of $\env_A(e,\delta)$
with $\env_B(e,\delta)$. The functions $\env_A(e,\delta)$ or
$\env_B(e,\delta)$ can only intersect at time $\delta'$ when
$f_{\sigma|e}(e,\delta)$ and $f_{\sigma'|e}(e,\delta)$ are neighbors
on the sweep line at some time $\delta < \delta'$.

\subparagraph{Event handling.} At an event (of any type), the
order in which the functions of $\env_A(e,\delta)$ and
$\env_B(e,\delta)$ intersect the sweep line changes. We therefore update
$\env^{=\delta}(e,\delta)$ by removing and inserting the appropriate
functions associated with this event, and additionally make sure that
we discover any additional events.

For each newly inserted function $f_{\sigma|e} \in \env_A(e,\delta)$
that intersects the sweep line in a point $(q,\delta)$, we use a
\algcall{NextVertex}$(f_{\sigma|e},q)$ query to find the next
type \enumit{b.} event where the sweep line will pass over a vertex of
$\env_A(e,\delta)$. We use an analogous query for any function
$f_{\sigma'|e} \in \env_B(e,\delta)$.

Furthermore, for any function $f_{\sigma|e} \in \env_A(e,\delta)$ that
has a new neighbor $f_{\sigma'|e} \in \env_B(e,\delta)$ in the order
along the sweep line, we compute if these functions intersect above
the sweep line, and if so create a new type \enumit{c.} event in the
event queue. We handle the case where
$f_{\sigma|e} \in \env_B(e,\delta)$ analogously. Furthermore, we
remove any type \enumit{c.} events of functions that are no longer
neighbors.

If this event was a local minimum of $\env_A(e,\delta)$ we extract the
next local minimum of $\env_A(e,\delta)$ above the sweep line using a
call to \algcall{NextLocalMinimum} on $\env_A(e,\delta)$. We handle
local minima of $\env_B(e,\delta)$ analogously.

Finally, as a result of the event, a new function, say $f_{\sigma'|e}$
from $\env_B(e,\delta)$, may have appeared on
$\envM(e,\delta)$. Hence, we insert it into $\envM(e,\delta)$, and
propagate $\sigma'$, extended by edge $e$, into the sets
$S_B(e',\delta)$ of the other two edges $e'$ incident to $T_A$. We
therefore use an \algcall{Insert} call to insert a new function into
$\env_A(e',\delta)$ (if it is not already present), and
\algcall{NextLocalMinimum}$(\delta)$ call on $\env_A(e',\delta)$ to
make sure we update the next local minimum.

\frank{additional starts here}
When we insert $f_{\sigma'|e'}$ into $\env_A(e',\delta)$ it may
already intersect the sweepline in the $e' \times \mathit{distance}$
space; i.e. there may be an interval of points $q' \in [a',b']$ on
$e'$ for which $f_{\sigma'|e'}(q') \leq \delta$. We therefore insert
the at most two intersection points of $f_{\sigma'|e'}$ with the
sweep-line ($a'$ and $b'$) into $\env_=(e',\delta)$, and in turn
update the at most eight type \enumit{b.} and \enumit{c.} events (four
each) caused by these new intersection points.

Note however, that our invariants guarantee that any point $q'$ in the
interval $[a',b']$ on which $f_{\sigma'|e'}(q') \leq \delta$ we
actually have that $\envM(e',\delta)(q') \leq f_{\sigma'|e'}(q')$,
hence $\envM(e',\delta)$ remains unchanged (and hence no further
updates are triggered).

\begin{lemma}
  \label{lem:nocascades}
  For any point $q' \in [a',b'] \subseteq e'$ we have that
  $\envM(e',\delta)(q') \leq f_{\sigma'|e'}(q')$.
\end{lemma}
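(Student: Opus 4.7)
The plan is to show that at every point $q'\in[a',b']$ where the newly inserted function $f_{\sigma'|e'}$ dips below the sweep line, the envelope $\envM(e',\delta)$ already records at least as short a path to $q'$, so the insertion cannot alter $\envM(e',\delta)$. First, I would set up the geometry: let $v$ denote the last vertex on $\sigma'$, and unfold the fragments traversed by $\sigma'$ after $v$, together with the triangle $T_A$ glued across $e$, into the Euclidean plane. Since $f_{\sigma'|e'}(q')\leq\delta$, the straight segment from $v$ to $q'$ in this unfolding passes through all portals of $\sigma'$ after $v$, crosses the copy of $e$ at some point $r$, and ends at $q'$ on $e'$. In particular $r\in I_{\sigma',e}$, and because $r$ lies on the segment from $v$ to $q'$, we have $\|\overline{vr}\|+\|\overline{rq'}\|=\|\overline{vq'}\|$, hence
\[ f_{\sigma'|e}(r)\;=\;d(s,v)+\|\overline{vr}\|\;=\;f_{\sigma'|e'}(q')-\|\overline{rq'}\|\;\leq\;\delta. \]

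Next, I would transfer this bound back from $e$ to $e'$ in two steps. First, by the invariant of Lemma~\ref{lem:envIsSPM} at $e$ and time $\delta$ (which applies since $f_{\sigma'|e}$ has just been placed on $\envM(e,\delta)$ at the beginning of the current event), we have $\envM(e,\delta)(r)\leq f_{\sigma'|e}(r)$. Second, the shortest path realizing $\envM(e,\delta)(r)$ can be concatenated with the Euclidean segment from $r$ to $q'$ inside $T_A$ to yield a valid path from $s$ to $q'$ of length at most $\envM(e,\delta)(r)+\|\overline{rq'}\|\leq f_{\sigma'|e'}(q')\leq\delta$. Applying Lemma~\ref{lem:envIsSPM} a second time, now at $e'$ and at time $\delta$, the envelope $\envM(e',\delta)$ captures the shortest path of length at most $\delta$ from $s$ to $q'$, and hence $\envM(e',\delta)(q')\leq f_{\sigma'|e'}(q')$, as required.

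The main obstacle I foresee is avoiding circularity, since Lemma~\ref{lem:envIsSPM} is the very invariant whose preservation across the current event is precisely what Lemma~\ref{lem:nocascades} is meant to certify. The clean framing is an induction on the events processed by the sweep algorithm: the inductive hypothesis asserts that the invariant of Lemma~\ref{lem:envIsSPM} holds on all affected edges immediately before the current event is handled, and Lemma~\ref{lem:nocascades} is the step that verifies that no secondary updates are triggered so that the invariant is restored after the event as well. A minor technicality arises in the boundary case $\|\overline{rq'}\|=0$: there the constructed path has length exactly $\delta$, and the consistent tie-breaking on signatures introduced for the sets $S_A$ and $S_B$ is what guarantees that the path we appeal to on $e'$ is already present in $\envM(e',\delta)$ rather than being the one we are currently inserting.
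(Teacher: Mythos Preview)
Your argument shares the paper's basic geometry (unfold, find the crossing $r$ on $e$, use that $f_{\sigma'|e}(r)<\delta$), but the second invocation of Lemma~\ref{lem:envIsSPM}—on $e'$ at time $\delta$—is the step that does not go through. You write that ``the envelope $\envM(e',\delta)$ captures the shortest path of length at most $\delta$ from $s$ to $q'$'', but this is precisely the correctness statement on $e'$ that Lemma~\ref{lem:nocascades} is supposed to secure. Your proposed fix, an induction on events with hypothesis ``the invariant holds immediately before the current event'', only gives you correctness of $\envM(e',\cdot)$ at the \emph{previous} event time, not at $\delta$; nothing in your argument bridges that gap. Concretely, the path you construct in step~5 (shortest path to $r$, then the segment $\overline{rq'}$) has some signature $\tau e$, and what you actually need is that $\tau e$ is already in $S_A(e',\delta)$ with $f_{\tau e|e'}(q')$ finite—something you never establish.

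The paper breaks the circularity by a minimality argument rather than by appealing to the invariant on $e'$. It assumes a violation $f_{\sigma'|e'}(q')<\envM(e',\delta)(q')$ and takes $q'$ to be the \emph{closest} such point over all edges, so that $f_{\sigma'|e'}(q')=d(s,q')=\delta_1$ and the envelopes are correct everywhere for distances strictly below $\delta_1$. Then the subpath of the $\sigma'$-path to its crossing $q$ with $e$ is a genuine shortest path of length $\delta_0<\delta_1$; by correctness at $\delta_0$ it realises $\envM(e,\delta_0)$ at $q$, so it was already propagated, placing $\sigma'$ in $S_A(e',\delta_0)\subseteq S_A(e',\delta)$ and making the current insertion redundant. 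The well-foundedness comes from minimising over distance, not over events; that is the missing ingredient in your proposal.
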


\begin{proof}
  Assume by contradiction that there is a point $q' \in [a',b']$ for
  which $f_{\sigma'|e'}(q') < \envM(e',\delta)(q')$. Moreover, let
  $q'$ be the closest such point over all edges in the portalgon;
  i.e. so that $f_{\sigma'|e'}(q')=d(s,q')=\delta_1 \leq \delta$ (and
  thus for any $\delta_0 < \delta_1$ the function $\envM(e,\delta_0)$
  correctly represents the shortest path map on (any) edge $e$). Let
  $q$ be the point where the shortest path $\pi(\sigma')$ to $q'$
  intersects $e$; this subpath $\pi_0$ thus has length
  $f_{\sigma'|e}(q) = \delta_0 < \delta_1$. By subpath optimality,
  $\pi_0$ is a shortest path, and thus $\delta_0=d(s,q)$. We then have
  that $\envM(e,\delta_0)=f_{\sigma'|e}(q)$. However, it then follows
  that $\sigma'$ was already inserted into
  $S_A(e',\delta_0) \subseteq S_A(e',\delta)$ at ``time''
  $\delta_0$. Therefore,
  $\envM(e',\delta)(q') \leq \env_A(e',\delta)(q') \leq
  f_{\sigma'|e'}(q')$, and thus we obtain a contradiction.
\end{proof}





\frank{and ends here}

\begin{remark}
  Mitchell, Mount, and
  Papadimitriou~\cite{mitchell87discr_geodes_probl} use a similar
  overall algorithm. They define a notion of \emph{$T_A$-free} paths
  that arrive at edge $e$ from $T_B$. They prove that these paths act
  sufficiently like ``real'' shortest paths so that they can
  explicitly maintain the set of shortest $T_A$-free and $T_B$-free
  paths. Unfortunately, some of the arguments used hold only when a
  shortest path may cross a portal edge at most once (that is, when
  the portalgon is $1$-happy). In case of the weighted region problem,
  Mitchel and Papadimitriou show how to deal with this by extending
  this notion of \emph{$T_A$-free} paths to \emph{locally}
  \emph{$T_A$-free}
  paths~\cite{mitchell91weigh_region_probl}. However, it is unclear
  how to bound the number of such paths when the genus may be
  non-zero.

  In contrast, we insert a signature $\sigma$ into $S_B(e,\delta)$ \emph{only} when
  we can guarantee that $\sigma$ produces a shortest path to a point
  on edge of $T_B$ (rather than already when it produces a locally
  shortest $T_B$-free path to such an edge). We can thereby limit the
  size of the sets $S_B(e,\delta)$, by charging them to
  intervals in the actual shortest path map.
\end{remark}

\subparagraph{Analysis.} Let $k=|\SPM_{\partial \P}|$ denote the
complexity of the shortest path restricted to the edges. We first show
that the total number of events, over all edges of \P is
$O(\lambda_4(k))$. We first count the number of events for a fixed
edge $e$.

Observe that for any $\delta$,
$S_A(e,\delta) \subseteq S_A(e,\infty)$, hence every event of type
\enumit{a.} is a minimum of a function $f_{\sigma|e}$ with
$\sigma \in S_A(e,\infty) \cup S_B(e,\infty)$. As each function is
unimodal it contributes at most one minimum, and hence the total
number of events of type \enumit{a.} is $O(k_A + k_B)$, where
$k_A = |S_A(e,\infty)|$ and $k_B=|S_B(e,\infty)|$.

\begin{lemma}
  \label{lem:number_of_vertices_created}
  The total number of events of type \enumit{b.} corresponding to
  vertices in $\env_A(e,\delta)$, over all values $\delta$, is
  $O(\lambda_4(k_A))$.
\end{lemma}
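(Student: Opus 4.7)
My plan is to show that each type~\enumit{b.} event corresponds to a distinct vertex of the \emph{final} envelope $\env_A(e,\infty)$, and then invoke the standard Davenport--Schinzel bound to conclude that this envelope has $O(\lambda_4(k_A))$ vertices. The bound on pairwise intersections applies because each $f_{\sigma|e}$ is the restriction of a function of the form $d(s,v)+\|v-q\|$ to a subinterval of the line supporting $e$, and any two such functions intersect in at most two points.

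The core claim is this: if a type~\enumit{b.} event occurs at $v=(v_x,v_y)$, where $v$ is a vertex of $\env_A(e,v_y)$ encountered by the sweep line at time $v_y$, then $v$ remains a vertex of $\env_A(e,\delta)$ for every $\delta>v_y$. First I would establish $\env_A(e,\infty)(v_x)=v_y$: applying Lemma~\ref{lem:nocascades} to any later insertion $f_{\sigma''|e}$ at time $\delta'>v_y$, the hypothetical inequality $f_{\sigma''|e}(v_x)<v_y$ would force $\envM(e,\delta')(v_x)<v_y$, contradicting the fact that the sweep line fixed this value to $v_y$ at time $v_y$. Under a mild general-position assumption we even obtain strict inequality $f_{\sigma''|e}(v_x)>v_y$ for every later insertion. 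Consequently, no later function $g$ can cross $f_{\sigma|e}$ or $f_{\sigma'|e}$ exactly at $v_x$: at $v_x$ it lies strictly above both, and by continuity any crossing of $g$ with $f_{\sigma|e}$ or $f_{\sigma'|e}$ occurs on an interval whose closure does not reach $v_x$. Hence the two functions originally defining $v$ still realize the envelope on small one-sided neighborhoods of $v_x$, so $v$ persists as a vertex of $\env_A(e,\infty)$. The type~\enumit{b.} events coming from endpoints of intervals $I_{\sigma,e}$ are handled identically: such endpoints are fixed points in the plane, and once swept over on the envelope they cannot be removed by later insertions.

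Combining these observations yields an injection from type~\enumit{b.} events into vertices of $\env_A(e,\infty)$, so the total number of such events is at most the complexity of $\env_A(e,\infty)$, which is $O(\lambda_4(k_A))$. I expect the main subtlety to lie in the persistence argument above: the ``no cascade'' property of Lemma~\ref{lem:nocascades} is exactly what prevents a later insertion from ``undercutting'' a vertex that has already been processed, and is therefore what forces every processed event to correspond to a surviving feature of the final envelope.
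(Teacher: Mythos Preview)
Your approach differs from the paper's, and the argument has a genuine gap. You try to show that every type \enumit{b.} event at $v=(v_x,v_y)$ persists as a vertex of the \emph{final} envelope $\env_A(e,\infty)$, invoking Lemma~\ref{lem:nocascades} to rule out later undercutting. But Lemma~\ref{lem:nocascades} bounds $\envM$, not $\env_A$: it only guarantees that a late insertion $f_{\sigma''|e}$ at time $\delta'$ satisfies $\envM(e,\delta')(v_x)\le f_{\sigma''|e}(v_x)$ on the sub-sweepline part. Your contradiction step (``the sweep line fixed this value to $v_y$'') implicitly assumes $\envM(e,v_y)(v_x)=v_y$, yet a type \enumit{b.} vertex of $\env_A$ need not lie on $\envM$; if $\env_B(e,v_y)(v_x)<v_y$ then $\envM(e,v_y)(v_x)<v_y$ already, and nothing in Lemma~\ref{lem:nocascades} prevents a signature $\sigma''$ inserted at some later time $\delta'>v_y$ from satisfying $f_{\sigma''|e}(v_x)<v_y$. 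Such an insertion destroys the vertex $v$ on $\env_A(e,\infty)$, so the injection you propose can fail.

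The paper sidesteps this by charging events not to $\env_A(e,\infty)$ but to the envelope $\env_{F'}$ of a modified family $F'$: each $f_{\sigma|e}$ is \emph{clipped} to the domain where $f_{\sigma|e}(q)\ge\delta_\sigma$, with $\delta_\sigma$ the insertion time of $\sigma$. This splits each function into at most two pieces, so $|F'|=O(k_A)$ and $|\env_{F'}|=O(\lambda_4(k_A))$. At a type \enumit{b.} event at height $\delta$, the participating functions already lie in $S_A(e,\delta)$, hence have insertion time $\le\delta$, so their clipped versions are still defined there; one then checks that the event point lies on $\env_{F'}$ and is a vertex of it. The clipping is exactly what removes the potentially undercutting portion of a late insertion, which is why the paper's envelope works where $\env_A(e,\infty)$ does not.
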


\begin{proof}
  For each signature $\sigma \in S_A(e,\infty)$, consider the time
  $\delta$ at which $\sigma$ was added to $S_A(e,\delta)$. Now
  restrict each function $f_{\sigma|e}(q)$ to the domain on which
  $f_{\sigma|e}(q) \geq \delta$. This essentially splits
  $f_{\sigma|e}$ into two functions. Let $F'$ denote the resulting set
  of $O(k_A)$ functions, and let $\env_{F'}$ be its lower envelope. It
  follows that this lower envelope has complexity
  $O(\lambda_4(k_a))$~\cite{sharir1995davenport}.

  At every event \enumit{b.} corresponding to $\env_A(e,\delta)$, for
  some value $\delta$, the sweepline passes through a vertex
  $v=(q,\delta)$ of $\env_A(e,\delta)$. We claim that $v$ is also a
  vertex $\env_{F'}$. As the sweepline passes through such a vertex at
  most once, the number of such events is thus at most
  $O(\lambda_4(k_a))$.

  Since $v$ is a vertex of $\env_A(e,\delta)$, it is an endpoint of
  (the graph of) a function in $F'$. Next, we argue that
  $\env_A(e,\delta)(q) \leq \env_{F'}(q)$, and thus $v$ must also lie
  on $\env_{F'}$.

  For any time $\delta$, and any point $q$ we have
  $\env_A(e,\delta) \leq \env_{F'}(q)$: Assume by contradiction that
  $f'_{\sigma|e}(q)=\env_{F'}(q) = \delta_2 < \env_A(e,\delta)$. So
  $\sigma$ was added to $S_A(e,\infty)$ at some time
  $\delta_1 \leq \delta_2$. However, then $\sigma \in S(e,\delta)$,
  and thus $\env_A(e,\delta) \leq f_{\sigma|e}(q) =
  \delta_2$. Contradiction. We thus have $\env_A(e,\delta) \leq \env_{F'}(q)$, which completes the proof.
\end{proof}

By Lemma~\ref{lem:number_of_vertices_created} (and a symmetric counterpart for $\env_B$), the number of events of type \enumit{b.} is
at most $O(\lambda_4(k_A) +\lambda_4(k_B))$. Every event of type \enumit{c.}  produces a vertex
of $\envM(e,\delta)$, and thus of $\SPM_e$. Hence, it follows that the
total number of events on $e$ is
$O(\lambda_4(k_A)+\lambda_4(k_B))+|\SPM_e|)$.

A signature $\sigma$ appears in $S_B(e,\delta)$ only when there
is a shortest path to another edge $e'$ of $T_B$. Since every such edge $e'$ propagates to
at most four edges (of the two triangles incident to $e'$) it then
follows that
$\sum_{e \in \partial\P} (|S_A(e,\delta)| +
|S_B(e,\delta)|)=O(|\SPM_{\partial\P}|)$. Therefore, the total
number of events, over all edges $e$, is $O(\lambda_4(k))$ as
claimed.

For every edge $e$, and at any time $\delta$, the size of
$\env^{=\delta}(e,\delta)$ is at most
$O(|\env_A(e,\delta)|+|\env_B(e,\delta)|)$, as every function
intersects the sweep line at most twice. Furthermore, as we argued
above, the size of all those functions $\env_A(e,\delta)$,
$\env_B(e,\delta)$, $\envM(e,\delta)$, over all edges $e$, is at most
$O(\lambda_4(k))$.

To handle an event on edge $e$, we use a constant number of update
operations on the event queue(s), $\envM(e,\delta)$, and
$\env^{=\delta}(e,\delta)$, and a constant number of calls to
\algcall{Insert}, \algcall{NextLocalMinimum}, and \algcall{NextEvent}
on $\env_A(e,\delta)$ and $\env_B(e,\delta)$. It follows the former
operations take only $O(\log k)$ time, and by
Lemma~\ref{lem:lower_envelope_structure}, the total cost of the
operations on $\env_A(e,\delta)$ and $\env_B(e,\delta)$, over all
edges, is $O(\lambda_4(k)\log^2 k)$. It follows that we can thus
compute $\SPM_{\partial \P}$ in $O(\lambda_4(k)\log^2 k)$
time.

\subparagraph{Bounding the complexity of $\SPM_{\partial \P}$.} We now
prove that $|\SPM_{\partial \P}| = k = O(n^2h)$.

\frank{I think the following two lemmas could probably just move to a separate (earlier) subsection in which we prove the bound on the complexity of the spm.}
\maarten {I agree - I would find it more natural to first argue about the complexity, and only then talk about algorithms.}

\begin{lemma}
  \label{lem:function_appears_once_on_env1}
  For any three points $p_1,q,p_2$ in order on an edge $e$ and any signature $\sigma$ in $S_A(e,\delta)$, if $d_{\sigma|e}$ lies on $\env^{\leq\delta}(e,\delta)$ at both $p_1$ and $p_2$, but not at $q$, then there does not exist any signature $\tau\in S_A(e,\delta)$ such that $d_{\tau|e}$ lies on $\env^{\leq\delta}(e,\delta)$ at $q$.
  Symmetric properties apply to $\sigma\in S_B(e,\delta)$.
\end{lemma}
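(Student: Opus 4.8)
The plan is a proof by contradiction. Suppose some signature $\tau\in S_A(e,\delta)$ has $d_{\tau|e}$ lying on $\env^{\leq\delta}(e,\delta)$ at $q$; then necessarily $\tau\neq\sigma$, since otherwise $\sigma$ itself would lie on the envelope at $q$. As this envelope value is finite, Lemma~\ref{lem:envIsSPM} gives $f_{\tau|e}(q)=\env^{\leq\delta}(e,\delta)(q)=d(s,q)\leq\delta$, so the $s$--$q$ path $\pi$ realizing $f_{\tau|e}(q)$ is a shortest path, and it reaches $q$ straight through the interior of $T_A$ — in particular $\pi$ meets $e$ only at $q$ and transversally. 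On the other hand, because $\sigma$ lies on $\env^{\leq\delta}(e,\delta)$ at both $p_1$ and $p_2$, there are shortest $s$--$p_1$ and $s$--$p_2$ paths $\pi_1,\pi_2$ of lengths $f_{\sigma|e}(p_1)$ and $f_{\sigma|e}(p_2)$, whereas, $\sigma$ not lying on the envelope at $q$, we have $f_{\sigma|e}(q)>d(s,q)$. The goal is to contradict this last inequality.

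First I would pass to the unfolding belonging to $\sigma$. Since $p_1,p_2\in I_{\sigma,e}$ and $I_{\sigma,e}$ is an \emph{interval}, the whole subsegment $[p_1,p_2]\subseteq e$ lies in $I_{\sigma,e}$. Unfolding the fragments of $\sigma$ after its last vertex into the plane of $T_A$, let $v$ be the planar copy of that vertex, so that $f_{\sigma|e}(p)=d(s,v)+\|\overline{vp}\|$ for every $p\in[p_1,p_2]$; fixing a shortest $s$--$v$ path $\gamma$, the paths $\pi_1,\pi_2$ may be taken to be $\gamma$ followed by the straight segments $\overline{vp_1}$ and $\overline{vp_2}$. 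The Euclidean triangle $\triangle=\triangle(v,p_1,p_2)$ equals $\bigcup_{p\in[p_1,p_2]}\overline{vp}$, hence lies inside the unfolded corridor of $\sigma$, so it embeds in $\Sigma$ as a flat disk with boundary $\overline{vp_1}\cup[p_1,p_2]\cup\overline{vp_2}$, glued to $e$ along $[p_1,p_2]$ on the $T_A$-side.

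Next I would localize $\pi$ inside $\triangle$. Since $\pi$ reaches $q\in(p_1,p_2)$ transversally from the $T_A$-side, a terminal portion of $\pi$ lies in $\triangle$; let $x$ be the first point of $\pi$ (scanning from $s$ toward $q$) after which $\pi$ never again leaves $\triangle$. Then $\pi[x,q]$ is a shortest path between two points of the flat convex disk $\triangle$, hence equals the straight segment $\overline{xq}$, and $x$ lies on $\partial\triangle=\overline{vp_1}\cup[p_1,p_2]\cup\overline{vp_2}$. The case $x\in[p_1,p_2]$ cannot occur, because then $\overline{xq}$ would run along the supporting line of $e$, contradicting that $\pi$ meets $e$ only at $q$; hence $x$ lies on $\overline{vp_1}$ or on $\overline{vp_2}$, say the former (possibly $x=v$). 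Since $\overline{vp_1}$ is the part after $v$ of the shortest path $\pi_1$, subpath optimality gives $d(s,x)=d(s,v)+\|\overline{vx}\|$, so by the planar triangle inequality inside $\triangle$,
\[
\mathrm{len}(\pi)=d(s,x)+\|\overline{xq}\|=d(s,v)+\|\overline{vx}\|+\|\overline{xq}\|\ \geq\ d(s,v)+\|\overline{vq}\|=f_{\sigma|e}(q),
\]
which contradicts $\mathrm{len}(\pi)=d(s,q)<f_{\sigma|e}(q)$. (The remaining degenerate case $s\in\triangle$ is dispatched similarly, using that $\pi_1,\pi_2$ are minimum-complexity shortest paths through $v$.) Interchanging the roles of $T_A$ and $T_B$ proves the claim for $\sigma\in S_B(e,\delta)$.

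The geometric heart of this is short: the interval property of $I_{\sigma,e}$ makes the fan $\triangle$ convex and forces it to cover all of $[p_1,p_2]$, so any competitor reaching an interior point $q$ that ever enters $\triangle$ is immediately beaten by the path that goes straight from $v$ to $q$. The step I expect to be most delicate is the bookkeeping around the non-degeneracy assumptions — that $\pi$ meets $e$ only at $q$ and transversally (i.e., that no shortest path to $q$ slides along the portal edge $e$) and that $x$ is a well-defined boundary point of the unfolded $\triangle$ even when $\Sigma$ has positive genus — which is exactly where the minimum-complexity and tie-breaking conventions of Section~\ref{sec:defs} are invoked.
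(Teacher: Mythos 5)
Your approach is genuinely different from the paper's: you attempt a direct geometric argument by unfolding the entire corridor of $\sigma$ into the plane, whereas the paper proves the statement by induction on the length of $\sigma$, pushing the contradiction back one edge at a time (to $e'$, the previous portal edge of the corridor) and reasoning only about the single triangle $T_A$ at each step. The crucial gap in your version is the sentence claiming that the Euclidean triangle $\triangle=\triangle(v,p_1,p_2)$ ``embeds in $\Sigma$ as a flat disk.'' The unfolding map from the corridor of $\sigma$ to $\Sigma$ is only a \emph{local} isometry; it need not be injective. Precisely when $\Sigma$ has positive genus or when $\sigma$ crosses the same portal edge several times --- which is the central difficulty the paper is built around, and the reason a shortest path may revisit a fragment many times --- the image $f(\triangle)\subseteq\Sigma$ can overlap itself and fails to be an embedded flat disk. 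Once injectivity fails, your subsequent steps collapse: ``the first point $x$ of $\pi$ after which $\pi$ never again leaves $\triangle$'' is ill-defined as a point of $\partial\triangle$, and the claim ``$\pi[x,q]$ is a shortest path between two points of the flat convex disk $\triangle$, hence equals $\overline{xq}$'' has no basis. This is exactly why the paper never unfolds more than one triangle at a time; within a single $T_A$ the planar geometry is trivially valid, and the induction hypothesis on $e'$ replaces the global corridor argument you want to make.

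The argument is not unsalvageable: you could instead \emph{lift} a terminal portion of $\pi$ backward from $q$ into $\triangle$ via the local isometry (which is a covering map away from the apex $v$, since the interior of $\triangle$ contains no portalgon vertices by the definition of $I_{\sigma,e}$), and then argue that this lift is a straight segment $\overline{xq}$ ending on $\partial\triangle$. But this is a genuine reformulation, not a gloss on what you wrote, and it creates a new case you currently dismiss in a parenthetical: the lift may terminate at an interior point $\hat s$ of $\triangle$ with $f(\hat s)=s$ (this can happen exactly when the corridor wraps), and in that case the triangle-inequality computation you give does not directly yield $\mathrm{len}(\pi)\ge f_{\sigma|e}(q)$. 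You flag both of these as ``the most delicate'' points, but flagging them is not resolving them; as written the proof does not go through, whereas the paper's induction avoids both issues entirely.
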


\begin{proof}
    We prove this by induction on the length of the signature $\sigma$.
    If $\sigma$ consists of just $s$, and $e$ is an edge incident to $s$, then $d_{\sigma|e}$ is simply the distance along $e$ from $s$.
    Assume without loss of generality that $q$ lies on the segment $\overline{sp_1}$ of $e$ (otherwise swap $p_1$ and $p_2$).
    If a signature $\tau$ as described would exist, then $d_{\tau|e}(p_2)\leq d_{\tau|e}(q)+\overline{qp_2}<d_{\sigma|e}(q)+\overline{qp_2}=d_{\sigma|e}(p_2)$, which would contradict that $d_{\sigma|e}$ lies on $\env^{\leq\delta}(e,\delta)$ at $p_2$.
    
    The case where $\sigma$ consists of just $s$, but $e$ is the edge of $T_s$ not incident to $s$ is an easier version\footnote{Because it forces $p'_1=q'=p'_2=s$.} of the case where $\sigma$ is not just $s$, which we hence omit.

    Now consider the case where $\sigma$ consists of not just $s$.
    Let $\pi_1$ and $\pi_2$ be shortest paths from $s$ to $p_1$ and $p_2$ whose signatures reduce to $\sigma$.
    Then $\pi_1$ has length $d_{\sigma|e}(p_1)$ and $\pi_2$ has length $d_{\sigma|e}(p_2)$.
    Let $e'$ be the last edge that these paths visit (before $e$), and let $p'_1$ and $p'_2$ respectively be their last point of intersection with $e'$.
    Now, assume for a contradiction that a signature $\tau$ as described does exist.
    Let $\rho$ be the corresponding path from $s$ to $q$ of length $d_{\tau|e}(q)$ whose signature reduces to $\tau$.
    By Lemma~\ref{lem:envIsSPM}, $\pi_1$, $\pi_2$, and $\rho$ are shortest paths.

    See Figure~\ref{fig:one_interval_env1} (a).\rodrigo{I guess the figure should adopt the $\self e$ and $\twin e$ notation?}
    If $\rho$ crosses $\overline{p'_1p_1}$ or $\overline{p'_2p_2}$ in a point $x$, then we can replace the part of $\pi_1$ or $\pi_2$ after $x$ to obtain a path from $s$ to $q$ of length at most $\|\rho\|$ whose signature reduces to $\sigma$, contradicting that $d_{\sigma|e}(q)>d_{\tau|e}(q)$.
    Because $\rho$ does not cross $\overline{p'_1p_1}$ or $\overline{p'_2p_2}$, $\rho$ must enter $T_A$ in a point $q'$ on $e'$ that lies between $p'_1$ and $p'_2$, see Figure~\ref{fig:one_interval_env1} (b).
    Let $T_C$ be the second triangle incident to $e'$.
    Let $\sigma'$ and $\tau'$ be signatures in $S_C(e',\delta)$ such that $d_{\sigma'|e'}(p'_1)$, $d_{\sigma'|e'}(p'_2)$, and $d_{\tau'|e'}(q')$ lie on $\env(e',\delta)$ (such signatures exist because the subpaths of $\pi_1$, $\pi_2$, and $\rho$ until $p'_1$, $p'_2$, and $q'$ are shortest paths of length at most $\delta$).
    Because $d_{\tau|e}(q)<d_{\sigma|e}(q)$, we have $d_{\tau'|e'}(q')<d_{\sigma'|e'}(q')$, and hence $d_{\sigma'|e'}$ does not lie on $\env^{\leq\delta}(e',\delta)$ at $q'$.
    By induction, there does not exist any signature $\tau''\in S_C(e',\delta)$ such that $d_{\tau''|e'}$ lies on $\env^{\leq\delta}(e',\delta)$ at $q'$, but $\tau'$ is such a signature, which is a contradiction.\qedhere
  \begin{figure}[tb]
    \centering
    \includegraphics{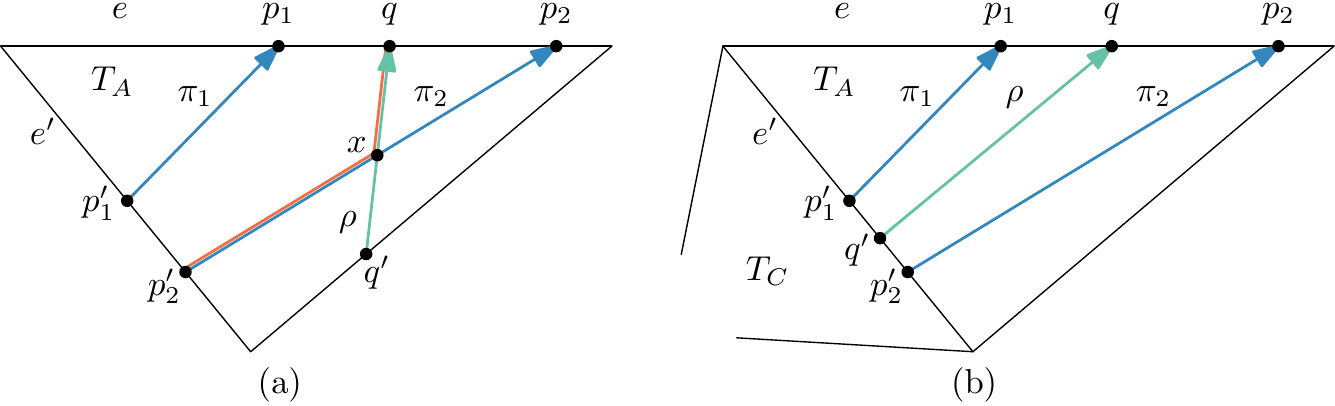}
    \caption{Cases of Lemma~\ref{lem:function_appears_once_on_env1}. Only the last segments of $\pi_1$, $\rho$, and $\pi_2$ are depicted.}
    \label{fig:one_interval_env1}
  \end{figure}
\end{proof}

\begin{lemma}\label{lem:linearEnv}
    The complexity of $\env^{\leq\delta}(e,\delta)$ is $O(|S_A(e,\delta)\cup S_B(e,\delta)|)$.
\end{lemma}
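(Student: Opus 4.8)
The plan is to cut the edge $e$ into $O(|S_A(e,\delta)\cup S_B(e,\delta)|)$ subintervals, on each of which at most one function from $S_A(e,\delta)$ and at most one from $S_B(e,\delta)$ can realize $\env^{\leq\delta}(e,\delta)$, and then to observe that each such subinterval contributes only $O(1)$ breakpoints.

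For a signature $\sigma\in S_A(e,\delta)$, let $W_\sigma\subseteq e$ be the set of points at which $f_{\sigma|e}$ realizes $\env^{\leq\delta}(e,\delta)$, and let its \emph{span} be the smallest subinterval $[\ell_\sigma,r_\sigma]$ of $e$ containing $W_\sigma$; define spans of signatures in $S_B(e,\delta)$ symmetrically. The key step is to prove, using Lemma~\ref{lem:function_appears_once_on_env1}, that the open spans of distinct signatures of $S_A(e,\delta)$ are pairwise disjoint, and moreover that inside $(\ell_\sigma,r_\sigma)$ the function $f_{\sigma|e}$ is the only one among $\{f_{\tau|e}:\tau\in S_A(e,\delta)\}$ that realizes $\env^{\leq\delta}(e,\delta)$. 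Suppose not: some $\tau\in S_A(e,\delta)$ with $\tau\neq\sigma$ realizes the envelope at a point $q\in(\ell_\sigma,r_\sigma)$; choosing $q$ in the interior of a piece of $\tau$ we may assume $f_{\sigma|e}(q)>f_{\tau|e}(q)$, so $f_{\sigma|e}$ does not realize the envelope at $q$. Since $q$ lies strictly between $\ell_\sigma=\inf W_\sigma$ and $r_\sigma=\sup W_\sigma$, there are points $p_1,p_2\in W_\sigma$ with $p_1<q<p_2$, so $f_{\sigma|e}$ realizes the envelope at $p_1$ and $p_2$ but not at $q$; Lemma~\ref{lem:function_appears_once_on_env1} then forbids \emph{any} signature of $S_A(e,\delta)$ from realizing $\env^{\leq\delta}(e,\delta)$ at $q$, contradicting the choice of $\tau$. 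The same argument applies to $S_B(e,\delta)$.

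Next I would overlay on $e$ its two endpoints together with the at most $2|S_A(e,\delta)|$ span-endpoints of $A$-signatures and the at most $2|S_B(e,\delta)|$ span-endpoints of $B$-signatures; this cuts $e$ into $O(|S_A(e,\delta)\cup S_B(e,\delta)|)$ elementary subintervals. Each elementary subinterval $C$ lies inside at most one $A$-span, say with signature $\sigma$ (if any), and at most one $B$-span, say with signature $\tau$ (if any). By the previous paragraph, on $C$ no $A$-signature other than $\sigma$ and no $B$-signature other than $\tau$ can realize $\env^{\leq\delta}(e,\delta)$, and a point of $C$ realized by neither $f_{\sigma|e}$ nor $f_{\tau|e}$ has envelope value exceeding $\delta$, i.e. lies in a gap. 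Hence $\env^{\leq\delta}(e,\delta)$ restricted to $C$ equals the part of the pointwise minimum of $f_{\sigma|e}$ and $f_{\tau|e}$ (whichever of them exist, the other taken as $+\infty$) at height at most $\delta$; since each $f_{\sigma|e}$ is unimodal and any two of these functions intersect at most twice, this restriction has $O(1)$ breakpoints. Summing over all elementary subintervals yields the claimed bound.

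I expect the disjointness-of-spans step to be the crux: it is the only place the structural Lemma~\ref{lem:function_appears_once_on_env1} enters, and some care is needed to choose the witness point $q$ so that $f_{\sigma|e}$ genuinely fails to realize the envelope there rather than merely being tied with $f_{\tau|e}$ — this is where the symbolic tie-breaking that makes shortest paths, and hence the pieces of the envelope, unique is used. The overlay and the $O(1)$-per-subinterval count should then be routine.
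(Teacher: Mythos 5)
Your proof is correct and uses the same two ingredients as the paper's — Lemma~\ref{lem:function_appears_once_on_env1} to rule out interleaving of same-side signatures, and the $O(1)$ pairwise-intersection bound — but packages them through a genuinely different decomposition. The paper works on the left-to-right \emph{sequence} $\Gamma$ of signatures appearing on $\env^{\leq\delta}(e,\delta)$: it argues that in any contiguous window of $c+2$ consecutive entries, either two entries come from the same side (so by Lemma~\ref{lem:function_appears_once_on_env1} the later one is new, i.e.\ had not appeared earlier in $\Gamma$), or the window has at most one $A$- and one $B$-signature and hence is at most $c+1$ long; this caps $|\Gamma|$ at $O(|S_A\cup S_B|)$. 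You instead define, for each signature, a geometric \emph{span} on $e$, prove via Lemma~\ref{lem:function_appears_once_on_env1} that the open spans on each side are pairwise disjoint and ``pure'', overlay their $O(|S_A\cup S_B|)$ endpoints to cut $e$ into elementary subintervals, and count $O(1)$ breakpoints per subinterval. The two routes are essentially dual views of the same structure and give the same bound. The sequence-based argument is slightly more economical and sidesteps the small technical issues you already flag yourself (choosing a witness point in the interior of a piece so that $f_{\sigma|e}$ genuinely exceeds the envelope rather than being tied, and handling degenerate or singleton spans when proving span disjointness); the span-based argument makes the partition of the edge more visually explicit. Your identification of the disjointness step as the crux, and your observation that tie-breaking is what makes it go through, are both accurate.
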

\begin{proof}
  Let $\Gamma$ be the sequence of signatures $\sigma$ in
  $S_A(e,\delta)\cup S_B(e,\delta)$ whose distance functions
  $f_{\sigma|e}$ appear on $\env^{\leq\delta}(e,\delta)$ as we
  traverse $e$ from left to right.  We argue that any contiguous
  subsequence $\Gamma'$ of $\Gamma$ of length $c+2$ contains a
  signature that did not appear on $\Gamma$ before, where $c = O(1)$
  is the number of times any pair of functions $f_{\sigma|e}$ and
  $f_{\sigma'|e}$ can intersect. Note that if a contiguous subsequence
  contains at least two signatures from $S_A(e,\delta)$ (or from
  $S_B(e,\delta)$), then by
  Lemma~\ref{lem:function_appears_once_on_env1}, the latter cannot
  have appeared on $\Gamma$ before.  On the other hand, if $\Gamma'$
  contains at most one signature from each $S_A(e,\delta)$ and
  $S_B(e,\delta)$, then the length of $\Gamma'$ is at most the
  complexity of the lower envelope of some $f_{\sigma|e}$ and
  $f_{\sigma'|e}$, which is at most $c+1=O(1)$.  It follows that the
  length of $\Gamma$, and hence the complexity of
  $\env^{\leq\delta}(e,\delta)$ is
  $O(|S_A(e,\delta)\cup S_B(e,\delta)|)$.
\end{proof}


\begin{lemma}
  \label{lem:spm_edge_complexity}
  For an $h$-happy portalgon, the complexity of the shortest path map restricted to the edges is $O(n^2h)$.
\end{lemma}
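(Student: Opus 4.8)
The plan is to prove $|\SPM_e| = O(nh)$ for every edge $e$ of the triangulated portalgon and then sum over its $O(n)$ edges. First I would invoke Lemma~\ref{lem:linearEnv} (taking $\delta=\infty$, so that by Lemma~\ref{lem:envIsSPM} the object $\envM(e,\infty)$ is exactly $\SPM_e$) to reduce the per-edge bound to showing $|S_A(e,\infty)| + |S_B(e,\infty)| = O(nh)$. The key point is that this bound must be obtained \emph{directly} from the geometry: bounding $|S_A(e,\infty)|$ by the sizes of $\SPM$ on the other two edges of $T_A$ (which is how these signatures are generated in the algorithm) only yields a recurrence that unwinds to the vacuous statement $k = O(k)$, so the happiness parameter $h$ has to be brought in explicitly at this step.

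To bound $|S_A(e,\infty)|$ directly I would group the signatures by their last vertex $v$ (the source, or one of the $n$ portalgon vertices). By subpath optimality (Lemma~\ref{lem:envIsSPM}) together with the consistent tie-breaking, the part of a signature from $s$ up to $v$ comes from a set of $O(n)$ canonical prefixes — one per vertex, plus $s$ — so the task reduces to bounding, for a fixed $v$, the number of signatures in $S_A(e,\infty)$ with last vertex $v$. After $v$ such a shortest path runs straight (through a sequence of portals) into $T_A$; unfolding the fragments it crosses turns it into a line segment from a developed copy of $v$ to $\partial T_A$. Developing all shortest paths from $v$ simultaneously gives the geodesic fan $\Phi_v$, a region star-shaped about $v$ and tiled by developed copies of the fragments, and each such signature corresponds to a distinct developed copy of the triangle $T_A$ in $\Phi_v$ that is hit by a shortest-path ray from $v$. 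So I would bound the number of copies of $T_A$ in $\Phi_v$ by $O(h)$: any ray from $v$ that stays inside $\Phi_v$ is a shortest path, and hence by $h$-happiness crosses $T_A$ at most $h$ times, so it meets at most $h$ copies of $T_A$; combined with the angular analogue of Lemma~\ref{lem:function_appears_once_on_env1} — between two angular ranges around $v$ realizing the same copy of $T_A$ there must lie a range realizing a different copy — this caps the angular sequence of $T_A$-copies around $v$ at $O(h)$. Summing over the last vertex, $|S_A(e,\infty)| = \sum_v O(h) = O(nh)$, and symmetrically for $S_B$.

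Combining the pieces, Lemma~\ref{lem:linearEnv} gives $|\SPM_e| = O(nh)$, and summing over the $O(n)$ edges of the triangulated portalgon yields $|\SPM_{\partial\P}| = O(n^2h)$. The hard part will be the $O(h)$ bound on the number of copies of a fixed fragment in the geodesic fan $\Phi_v$: the $h$-happiness hypothesis is inherently \emph{local} to a single ray from $v$, whereas ruling out many copies reached by pairwise disjoint angular ranges seems to require the \emph{global} structure of $\Phi_v$ — that it is a topological disk, star-shaped about $v$, whose boundary develops the cut locus of $v$ and therefore has complexity only $O(nh)$. A secondary technical nuisance is the tie-breaking bookkeeping for the $O(n)$ prefixes, since the prefix of a tie-broken shortest path need not be the tie-broken shortest path to its own last vertex; I would either argue the set of relevant prefixes still has size $O(n)$ or absorb this count into the per-vertex argument.
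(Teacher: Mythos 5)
Your high-level plan — reduce to bounding $|S_A(e,\infty)\cup S_B(e,\infty)|$ via Lemma~\ref{lem:linearEnv}, note that a naive recurrence over triangle boundaries is vacuous, and instead bring in $h$ directly — correctly mirrors the starting point of the paper's proof. However, the core of your argument, the claimed per-vertex bound that for fixed $v$ at most $O(h)$ signatures in $S_A(e,\infty)$ have $v$ as last vertex, is a genuine gap, and you yourself flag it as ``the hard part'' without closing it. The $h$-happiness hypothesis only says that a \emph{single} ray from $v$ meets at most $h$ copies of $T_A$; it says nothing about how many copies of $T_A$ can appear across the whole geodesic fan $\Phi_v$ in \emph{different} angular sectors, and the copies are disjoint only in the cover $\hat\Sigma$, where a ray of combinatorial length $h$ can sweep a fan that still contains $\omega(h)$ (even $\Theta(h^2)$-order) distinct developed copies at varying depths. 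Your proposed ``angular analogue'' of Lemma~\ref{lem:function_appears_once_on_env1} would at best show that the set of directions in which a fixed copy is hit is angularly connected; that controls repetitions in the angular sequence of copies, not the number of distinct copies, so it does not cap the count at $O(h)$. Moreover the paper's own argument does not deliver the $O(h)$-per-vertex bound either: unwinding its prefix-tree accounting gives only $O(nh)$ per vertex per edge, which would make your sum $\sum_v O(nh) \cdot O(n)$ edges $= O(n^3 h)$, too weak. So as it stands the proposal does not establish the lemma.

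The paper's proof takes a genuinely different route that sidesteps the per-vertex geometry entirely. It bounds the \emph{total} number of signatures (summed over all edges) by analyzing the prefix tree of signatures: the depth is $O(nh)$ because any edge appears at most $h$ times on a signature; every node has $O(1)$ children because a prefix determines a triangle and can only be extended by incident vertices or edges; and, crucially, only $O(n)$ nodes branch at all. The last step is where $O(n)$ really enters: a branching node either ends at a vertex (unique by tie-breaking, so $\leq n$ of these), or ends at an edge $e$ of a triangle $T_A$ with both $\sigma e_1$ and $\sigma e_2$ extendable — and using Lemma~\ref{lem:function_appears_once_on_env1} plus the non-crossing of shortest paths, there is at most one such $\sigma$ per $(T_A,e)$ pair, giving $O(n)$ more. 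Hence $O(n)$ leaves times $O(nh)$ depth gives $O(n^2h)$ signatures. This is a combinatorial count that never needs to reason about the global shape of a geodesic fan or the cut locus, which is exactly where your approach bogs down. If you want to salvage a per-vertex or per-edge decomposition, you would need a new geometric lemma bounding the complexity of the cut locus of $v$ in terms of $h$; this is not supplied by the existing lemmas in the paper and would require substantial new work.
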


\begin{proof}
    The complexity of the shortest path map (restricted to edges) corresponds to the total complexity of $\envR(e,\infty)$ (summed over all edges $e$), which by Lemma~\ref{lem:linearEnv} is linear in the number of signatures appearing in $S_A(e,\infty)\cup S_B(e,\infty)$.
    For a signature $\sigma$ corresponding to a point on an edge $e$, the last element of $\sigma$ is~$e$, so points on distinct edges have distinct signatures.
    Hence, the complexity of the shortest path map is bounded by the total number of signatures.

    We bound the total number of signatures of shortest paths by considering the prefix tree of these signatures.
    Any edge appears at most $h$ times on the signature of any shortest path, so the length of any such signature is $O(nh)$.
    Hence, the depth of the prefix tree is $O(nh)$.

    Observe that any node in the prefix tree has $O(1)$ children: a prefix corresponds to a unique triangle, and can be extended only by appending one of the constantly many vertices or edges incident to that triangle.

    We show that only $O(n)$ nodes have multiple children.  Due to
    tie-breaking, any vertex $v$ corresponds to a unique shortest
    path, so the prefix tree contains a unique signature that ends in
    $v$.  We also show that for any triangle $T_A$ with edges $e$,
    $e_1$, and $e_2$, there is at most one prefix $\sigma$ ending in
    $e$ such that both $\sigma e_1$ and $\sigma e_2$ are prefixes of
    shortest paths.  Let $p_1$ and $p_2$ be points on $e$
    corresponding to shortest paths with signatures $\sigma e_1$ and
    $\sigma e_2$.  By Lemma~\ref{lem:function_appears_once_on_env1}
    the shortest paths entering $T_A$ between $p_1$ and $p_2$ (and
    thus all appear in $S_B(e,\infty)$, where $T_B\neq T_A$ is the
    other triangle incident to $e$) all have signature $\sigma$.

    Now suppose for a contradiction that there is a second signature $\sigma'$ such that both $\sigma' e_1$ and $\sigma' e_2$ are prefixes of shortest paths, then similarly, these paths enter $T_A$ through points $p'_1$ and $p'_2$, between which the signature is $\sigma'$.
    Therefore, the segments $\overline{p_1p_2}$ and $\overline{p'_1p'_2}$ are interior-disjoint, but this means that either the shortest paths with signatures $\sigma e_1$ and $\sigma' e_2$, or the shortest paths with signatures $\sigma e_2$ and $\sigma' e_1$ must cross, which is impossible.

    It follows that only $O(n)$ nodes of the prefix tree have multiple children.
    Because all nodes have constant degree, the prefix tree has only $O(n)$ leaves.
    Combining this with the $O(nh)$ depth, there are $O(n^2h)$ signatures in total over all shortest paths.
\end{proof}

%

We thus obtain the following result.

\begin{lemma}
  \label{lem:spm_edge_algo}
  Let \P be a triangulated $h$-happy portalgon with $n$ vertices and
  $m$ portals, and let $s$ be a given source point. The shortest path
  map $\SPM_{\partial \P}(s)$ of $s$ has complexity $k=O(n^2h)$ and
  can be computed in $O(\lambda_4(k)\log^2 k)$ time.
\end{lemma}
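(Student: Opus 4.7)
The statement essentially packages three ingredients that have been developed through the section, so my plan is to assemble them in the right order rather than prove anything fundamentally new.

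First, I would invoke Lemma~\ref{lem:spm_edge_complexity} directly to obtain the complexity bound $k = |\SPM_{\partial\P}(s)| = O(n^2h)$. That handles the output-size half of the statement, and there is nothing more to argue there.

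For the running time, the plan is to cash in on the algorithm and data structure built up earlier in the section. I would first describe the initialization: for each of the $O(m)$ portal edges $e$ we set up the lower-envelope structure of Lemma~\ref{lem:lower_envelope_structure} for $\env_A(e,0)$ and $\env_B(e,0)$, the search trees $\envM(e,0)$ and $\envR^{=0}(e,0)$, and a single initial event $\nextE(e,0)$ pushed into the global priority queue; this setup costs $O(n)$ time. Then I would describe the main loop: repeatedly extract the smallest $\delta$ from the global priority queue and apply the event-handling procedure already described (updating $\envR^{=\delta}$, inserting at most one new function into the two $\env_A$/$\env_B$ structures on adjacent edges when a new signature appears on $\envM$, and issuing the corresponding \algcall{NextLocalMinimum} and \algcall{NextVertex} calls). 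Correctness of each individual event update follows from Lemma~\ref{lem:envIsSPM} together with Lemma~\ref{lem:nocascades}, which guarantees that inserting a new function never triggers a cascade of further SPM changes.

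For the time analysis, the key observation is that, by the analysis already carried out in the section, the total number of events over all edges is $O(\lambda_4(k))$: type \enumit{a.} events are charged to minima of signature distance functions, type \enumit{b.} events to vertices of $\env_A$ and $\env_B$ (bounded by Lemma~\ref{lem:number_of_vertices_created} and its symmetric counterpart), and type \enumit{c.} events to actual vertices of $\SPM_e$. Each event performs $O(1)$ operations on the global priority queue, on $\envM(e,\delta)$, and on $\envR^{=\delta}(e,\delta)$, all of size $O(k)$ so costing $O(\log k)$ each, plus $O(1)$ calls to \algcall{Insert}, \algcall{NextLocalMinimum}, and \algcall{NextVertex} on the structures of Lemma~\ref{lem:lower_envelope_structure}. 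Summing the amortized bound of that lemma across all edges, using that the sum of the sizes of the $\env_A(e,\infty)$ and $\env_B(e,\infty)$ sets is $O(k)$, gives a total of $O(\lambda_4(k)\log^2 k)$ time for all the lower-envelope operations, which dominates.

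I do not expect a real obstacle here since all substantive pieces are already in place; the only thing to be careful about is matching up the amortized accounting across edges, ensuring that the $k$ appearing inside $\lambda_4(\cdot)$ and inside the logarithms on a single edge can be safely replaced by the global $k$ (which is monotone in the per-edge sizes), so that Lemma~\ref{lem:lower_envelope_structure} composes cleanly when summed over all $O(m)$ edges.
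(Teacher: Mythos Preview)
Your proposal is correct and matches the paper's approach: the lemma is presented in the paper without a separate proof, simply as ``We thus obtain the following result,'' i.e., as a packaging of the preceding algorithm description, the event-count analysis (types \enumit{a.}, \enumit{b.}, \enumit{c.}), Lemma~\ref{lem:lower_envelope_structure}, and Lemma~\ref{lem:spm_edge_complexity}. Your write-up spells out exactly this assembly, with the same ingredients in the same roles.
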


\subsection{Extension to the interior}
\label{app:Extending_to_the_Interior}

We now extend the shortest path map $\SPM_{\partial T}$ on the
boundary of a triangle $T$ into the shortest path map of the entire
triangle $T$. When $\SPM_{\partial T}$ consists of $m$ intervals,
$\SPM_T$ has complexity $O(m)$ and can be computed in $O(m\log m)$
time, for example using the \algcall{Build-Subdivision} algorithm of
Mitchell \etal~\cite[Lemma
8.2]{mitchell87discr_geodes_probl}. Applying this to all triangles in
the portalgon, we obtain the following result.


\begin{theorem}
  \label{thm:compute_SPM}
  Let \P be a triangulated $h$-happy portalgon with $n$ vertices and
  $m$ portals, and let $s$ be a given source point. The shortest path
  map $\SPM_\P(s)$ of $s$ has complexity $k=O(n^2h)$ and can be
  computed in $O(\lambda_4(k)\log^2 k)$ time.
\end{theorem}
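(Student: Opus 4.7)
The plan is to combine Lemma~\ref{lem:spm_edge_algo}, which already produces $\SPM_{\partial\P}(s)$, with a triangle-by-triangle extension into the interiors, so the theorem essentially follows by stitching these together and showing the extension step is dominated by the boundary computation.

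First, I would invoke Lemma~\ref{lem:spm_edge_algo} to compute $\SPM_{\partial\P}(s)$ in $O(\lambda_4(k)\log^2 k)$ time, obtaining at the same time the complexity bound $k = |\SPM_{\partial\P}(s)| = O(n^2h)$ from Lemma~\ref{lem:spm_edge_complexity}. Along the boundary of each triangle $T \in \F$, this yields a subdivision of $\partial T$ into $m_T$ intervals, each tagged with the combinatorial signature of the shortest path from $s$ to the corresponding point. Note that $\sum_T m_T = O(k)$ since each interval of $\SPM_{\partial\P}$ is incident to only $O(1)$ triangles.

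Next, for each triangle $T$, I would extend the boundary information into the interior. Inside $T$ the metric is flat Euclidean, and for each interval $I \subseteq \partial T$ of $\SPM_{\partial T}$ the distance function is $d(s,v) + \|{\cdot}-v\|$ for a single unfolded source image $v$ in the plane of $T$. Thus extending $\SPM_{\partial T}$ to $T$ reduces to constructing the lower envelope of $m_T$ additively-weighted Euclidean distance functions whose sites are the unfolded source images. This is precisely the setting addressed by the \algcall{Build-Subdivision} procedure of Mitchell, Mount, and Papadimitriou~\cite[Lemma~8.2]{mitchell87discr_geodes_probl}, which in $O(m_T \log m_T)$ time produces the interior subdivision of $T$ of complexity $O(m_T)$. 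Summed over all triangles, this contributes $\sum_T O(m_T \log m_T) = O(k \log k)$ time and total complexity $\sum_T O(m_T) = O(k)$.

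Finally, I would combine both phases: the complete map $\SPM_\P(s)$ has complexity $O(k) = O(n^2h)$, and the total running time is $O(\lambda_4(k)\log^2 k) + O(k\log k) = O(\lambda_4(k)\log^2 k)$, the boundary computation dominating. The main subtlety, rather than a hard obstacle, is to verify that the per-triangle inputs fed into \algcall{Build-Subdivision} are indeed (unfolded) additively-weighted Euclidean sources as required by that algorithm; this is immediate because each interval of $\SPM_{\partial T}$ already carries its unfolded last-vertex position $v$ and geodesic distance $d(s,v)$ from the data maintained in Section~\ref{app:SPM_restricted_to_the_edges}. With that observation in place, the theorem is a direct assembly of Lemma~\ref{lem:spm_edge_algo} and the per-triangle extension.
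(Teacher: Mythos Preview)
Your proposal is correct and follows essentially the same approach as the paper: invoke Lemma~\ref{lem:spm_edge_algo} for the edges, then extend into each triangle via the \algcall{Build-Subdivision} procedure of Mitchell, Mount, and Papadimitriou~\cite[Lemma~8.2]{mitchell87discr_geodes_probl}, and observe that the per-triangle $O(m_T\log m_T)$ costs sum to $O(k\log k)$, which is dominated by the boundary computation. The paper's own argument is the same, stated a bit more tersely.
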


\section{Existence of happy portalgons}
\label{sec:Existence_of_Happy_Portalgons}

In this section, we show that for every portalgon there exists an equivalent $O(1)$-happy portalgon; specifically, the {\em intrinsic Delaunay
  triangulation}~\cite{bobenko2007discrete} $\T$ of $\Sigma$ induces a happy
portalgon, whose fragments correspond to the triangles of $\T$.  The
resulting portalgon may have more fragments than the original, but its
total complexity is still linear.
While our proof is constructive, its running time may be unbounded.

  \subsection{Intrinsic Delaunay triangulation}
    
    Let $\T$ be a triangulated portalgon\footnote {We assume a triangulated portalgon for ease of argumentation in this section; note that any portalgon can easily be transformed into a triangulated portalgon by triangulating its fragments and replacing the resulting diagonals by portals.} 
    and let $\Sigma$ be its surface;
    let $V$ be the set of vertices of $\Sigma$.
    Intuitively, the intrinsic Delaunay triangulation of $\Sigma$ has a (straight) edge between two vertices $u, v \in V$ when there exists a circle with $u$ and $v$ on its boundary, and which contains no other vertices of $V$ when we ``unfold'' $\Sigma$, for example by identifying edges of $\T$ as in Section~\ref {sec:Shortest_Paths_in_Portalgons}.
    
    To simplify geometric arguments, we derive a simply-connected space $\hat\Sigma$ from $\Sigma$.
    We can think of $\hat\Sigma$ as the universal cover of $\Sigma\setminus V$, with vertices reinserted at the corresponding locations.
    More formally, we can define $\hat\Sigma$ by considering the directed fragment graph $G$ of $\T$. 
    \maarten [suggests] {Move the following notation to Section 2, where we define the fragment graph?}
    For any portal $e$ of $\T$, we denote by $T_{\self e}, T_{\twin e} \in \F$ the triangles (nodes of $G$) that contain the respective portal edge; $G$ has a link \maarten [suggests] {Should we use ``node'' and ``link'' for the fragment graph, to distinguish from ``vertex'' and ``edge'' of the portalgon?} $\dot e$ from $T_{\self e}$ to $T_{\twin e}$ and a link $\dot e^{-1}$ from $T_{\twin e}$ to $T_{\self e}$.
    We say that $\dot e$ and $\dot e^{-1}$ are inverses of each other.
    A {\em walk} in $G$ from a triangle $T$ to a triangle $T'$ is a (possibly empty if $T=T'$) sequence of links of $G$, such that $T$ is the source of the first link, and the source of the $(i+1)$-st is the target of the $i$-th link, and the target of the last link is $T'$.
    For a walk $w$ from $T$ to $T'$, we write $w\from T\to_G T'$ and say that a walk is \emph{backtracking} if it contains two consecutive links that are inverses of each other.
    Fix an arbitrary root triangle $T_0$, and for a walk $w\from T_0\to_G T$, let $T_w$ be a copy of the target triangle $T$ placed in the Euclidean plane by unfolding $T_0, T_1, \ldots, T_w$ along their common portals.
    \maarten {Changing notation from $\T$ to $\F$ for consistency, since if $\T$ is a portalgon, $\F$ is its set of triangles.}
    Let $\hat{\F}=\bigsqcup\{T_w\mid T\in \F, w\from T_0\to T, \text{$w$ is not backtracking}\}$ be the disjoint union of Euclidean triangles of target triangles of walks starting at $T_0$.
    We can now define $\hat\Sigma$ as $\hat{\F}/{\hat\sim}$, where for any two walks $w$ and $w'$, where $w'$ is obtained from $w$ by removing its final link, $\hat\sim$ glues $T_{w'}$ to $T_w$ along the sides corresponding to the last link of $w$.
    Let $q\from\hat\Sigma\to\Sigma$ be the map that sends points of $\hat\Sigma$ to their corresponding point in $\Sigma$.
    A map $\hat g\from X\to\hat\Sigma$ is a \emph{lift} of $g\from X\to\Sigma$ if $g=q\circ\hat g$.
    Any path $\pi\from[0,1]\to\Sigma$ has a lift $\hat\pi$ in $\hat\Sigma$.
    
    There is a map $f\from\hat\Sigma\to\Real^2$ whose restriction to any triangle of $\hat\Sigma$ is an isometry, and whose restriction to any pair of adjacent triangles is injective.
    We can think of $f$ as unfolding $\hat\Sigma$ so that it lies flat in the plane and is locally isometric everywhere except at the vertices (vertices are the only source of curvature), see Figure~\ref{fig-full:sigmahat}.

    \begin{figure}[b!]
        \centering
        $\vcenter{\hbox{\includegraphics[page=3]{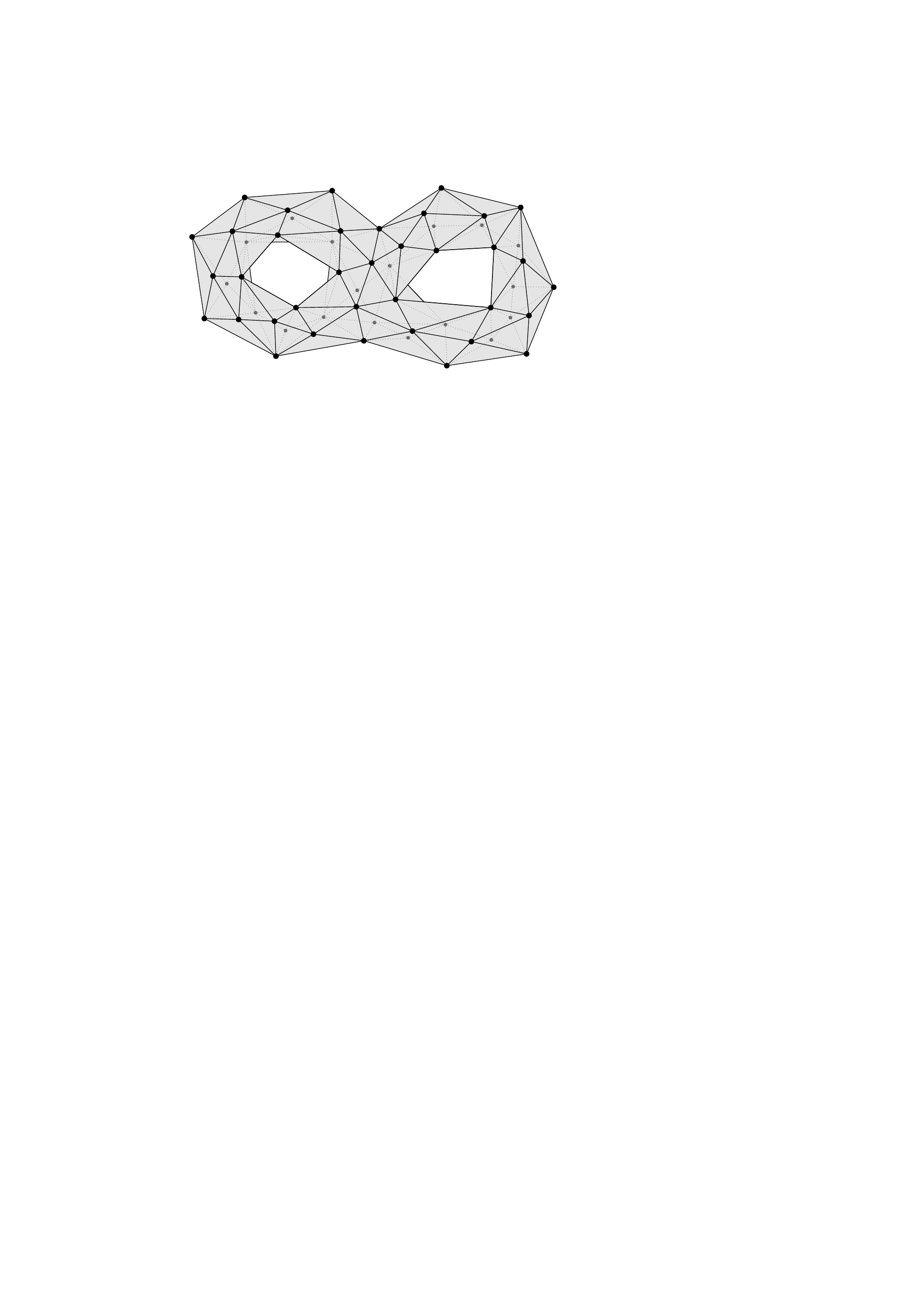}}}$
        \hfill
        $\vcenter{\hbox{\includegraphics[page=5]{portalfigs}}}$
        \caption{$\Sigma$ (left) and a local region of $\hat\Sigma$ (right). The dual graph of $\hat\Sigma$ is an infinite tree (red).}
        \label{fig-full:sigmahat}
    \end{figure}
    
    \begin{observation}\label{obs:isometry}
       Let $B_r(x,y)=\{p\in\Real^2\mid\|p-(x,y)\|<r\}$ be the open disk in the plane of radius $r$ centered at~$(x,y)$.
       If a component $U$ of $f^{-1}(B_r(x,y))$ contains no vertices, then the restriction of $f$ to the closure of $U$ is an isometry.
    \end{observation}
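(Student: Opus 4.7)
The plan is to show that $f|_U$ is a bijective local isometry onto $B_r(x,y)$, from which distance preservation follows, and then to extend to $\bar U$ by continuity.

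Since $f$ is isometric on each triangle and injective on adjacent pairs of triangles, $f$ is a local isometry (and in particular a local homeomorphism) at every non-vertex point of $\hat\Sigma$. By the no-vertex hypothesis, $f|_U$ is therefore a local isometry.

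The core step is a straight-segment lifting property: for any $p\in U$ and any $y \in B_r(x,y)$, the segment $L=\overline{f(p)\,y}$ lies in $B_r$ by convexity, and by repeatedly applying the local homeomorphism property I lift $L$ to a path $\tilde L$ in $\hat\Sigma$ starting at $p$. The lift remains inside the open set $f^{-1}(B_r)$, and so by connectedness inside the component $U$; it cannot be blocked by a vertex since $U$ contains none, and since $f$ preserves arclength a standard completeness argument rules out premature termination at some $t<1$. Fixing any $p_0 \in U$ and setting $g(y):=\tilde L(1)$ for the lift of $\overline{f(p_0)\,y}$, I obtain a continuous section $g : B_r \to U$ of $f|_U$ (continuity is local and follows from $g$ coinciding with the continuous local inverse of $f|_U$ around each point). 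The image $g(B_r)$ is open in $U$ by the local homeomorphism property, and closed in $U$ by a standard sequential argument: if $g(y_n) \to u \in U$ then $y_n = f(g(y_n)) \to f(u) \in B_r$, and continuity of $g$ at $f(u)$ forces $u = g(f(u))$. Connectedness of $U$ yields $g(B_r)=U$, so $f|_U$ is a homeomorphism with continuous inverse $g$.

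Distance preservation on $U$ is then immediate: for $p, q\in U$, lifting $\overline{f(p)f(q)}$ produces a path in $U$ from $p$ to $q$ of length $\|f(p)-f(q)\|$, showing $d_{\hat\Sigma}(p,q) \leq \|f(p)-f(q)\|$; conversely, any path $\gamma$ from $p$ to $q$ in $\hat\Sigma$ satisfies $\mathrm{length}(\gamma)=\mathrm{length}(f\circ\gamma)\geq\|f(p)-f(q)\|$ since $f$ preserves arclength throughout $\hat\Sigma$ (vertices contribute no length). Because both $d_{\hat\Sigma}$ and $\|f(\cdot)-f(\cdot)\|$ are continuous, the identity $d_{\hat\Sigma}(p,q)=\|f(p)-f(q)\|$ extends from $U\times U$ to $\bar U\times\bar U$.

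The main technical obstacle is the lifting step: checking that the lift of a straight segment neither escapes $U$ nor terminates prematurely. The three potential obstructions---reaching $\partial B_r$, encountering a vertex on $\partial U$, and failing to converge at a limiting parameter---are ruled out respectively by convexity of $B_r$, the no-vertex hypothesis, and length preservation under $f$.
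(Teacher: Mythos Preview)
The paper states this as an Observation with no proof, so there is nothing to compare against; your write-up is already more than the paper supplies. The overall strategy---show $f|_U$ is a local isometry, lift straight segments to build a section, deduce bijectivity and hence global distance preservation, then pass to the closure by continuity---is the standard one and is correct in the interior case.

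There is one genuine gap. You list three obstructions to completing a lift (leaving $B_r$, hitting a vertex, failure to converge) but omit a fourth: $\hat\Sigma$ can have boundary (coming from unmatched edges of the portalgon), and at a non-vertex boundary point $f$ is only a local isometry onto a \emph{half}-disk, not a local homeomorphism onto an open set of $\mathbb{R}^2$. If the segment $\overline{f(p_0)\,y}$ crosses the image of such a boundary edge, the lift stops there with $t_0<1$ even though no vertex is met and the limit exists. In that situation $f(U)\subsetneq B_r$, so your section $g\from B_r\to U$ is not defined on all of $B_r$ and the open-and-closed argument for $g(B_r)=U$ does not go through as written. This matters in the paper's own setting: Corollary~\ref{cor:boundaryChords} explicitly allows $C(T)$ to meet boundary edges.

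The repair is not hard but should be stated. One option: to prove injectivity of $f|_U$, take $p,q\in U$ and lift the segment $\overline{f(p)\,f(q)}$ simultaneously from both endpoints; if the two partial lifts fail to overlap, each terminates at a boundary edge of $\hat\Sigma$, and one checks (using that the segment is straight and lies in the convex disk $B_r$, while each boundary edge maps to a chord of $B_r$ with vertex endpoints outside $U$) that this forces a vertex into $U$. Once injectivity is established, your lifting argument gives $d_{\hat\Sigma}(p,q)\le\|f(p)-f(q)\|$ whenever the full lift exists, and the reverse inequality and the extension to $\bar U$ go through exactly as you wrote.
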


    For a triangle $T$ of $\hat\Sigma$, let $D(T)$ be the open disk bounded by the circumcircle of $f(T)$, and let $C(T)$ be the closure of the component of $f^{-1}(D(T))$ that contains the interior of $T$.
    A triangle $T$ of $\hat\Sigma$ is \emph{Delaunay} if $C(T)$ does not contain any vertices of triangles adjacent to $T$ in its interior.
    $\T$ is an \emph{intrinsic Delaunay triangulation} of $\Sigma$  if and only if all triangles of $\hat{\F}$ are Delaunay.
    Lemma~\ref{lem:delaunay} generalizes a well-known property of Delaunay triangulations in the plane.
    \begin{lemma}[Bobenko \etal~\cite{bobenko2007discrete}]\label{lem:delaunay}
       If $\hat{\mathcal{T}}$ is an intrinsic Delaunay triangulation
       of $\hat\Sigma$, then for any $T \in \hat{\T}$, $C(T)$ contains no vertices in its interior.
    \end{lemma}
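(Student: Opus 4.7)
The plan is to argue by contradiction, propagating the empty-circumdisk property from $T$ across adjacent triangles until reaching a triangle that has $v$ as a vertex. Suppose some vertex $v$ lies in $\mathrm{int}(C(T))$. The three vertices of $T$ themselves lie on $\partial D(T)$, hence on $\partial C(T)$, so $v$ is not a vertex of $T$. The first step is to pick a point $p$ in the interior of $T$ so that the straight segment $\overline{f(p)f(v)}\subset\Real^2$ avoids the $f$-images of all other vertices of $\hat\Sigma$ and crosses the edges of $f(\hat{\F})$ only transversally; this is a generic choice, since the set of bad $p$ in $T$ is a countable union of measure-zero line segments, one per other vertex of $\hat\Sigma$.

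Next, lift $\overline{f(p)f(v)}$ to a path $\gamma$ in $\hat\Sigma$ starting at $p$; this lift exists uniquely because $f$ is a local isometry away from vertices and $\gamma$ meets no vertex in its interior. Since every triangle of $\hat{\F}$ is congruent to one of the finitely many triangles of $\F$, the path $\gamma$ traverses only a finite sequence of triangles $T=T_0,T_1,\ldots,T_k$, with $v$ a vertex of $T_k$. The key claim, proved by induction on $i$, is that $v\in\mathrm{int}(C(T_i))$ for every $i$. The base case is the hypothesis. For the inductive step, let $e_i$ be the edge shared by $T_i$ and $T_{i+1}$, and let $c_i$ and $c_{i+1}'$ denote the apices of $T_i$ and $T_{i+1}$ opposite $e_i$. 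Since $T_i$ is intrinsic Delaunay, $c_{i+1}'\notin\mathrm{int}(D(T_i))$ when $T_{i+1}$ is unfolded adjacent to $T_i$ via $f$; equivalently, the angles opposite $e_i$ sum to at most~$\pi$. The classical planar circumdisk lemma then asserts that the portion of $D(T_i)$ on the far side of $e_i$ from $c_i$ is contained in $D(T_{i+1})$. Because $\gamma$ crosses $e_i$ before reaching $v$, the point $v$ lies strictly on that far side, so $v\in\mathrm{int}(D(T_{i+1}))$. The tail of $\gamma$ from $T_{i+1}$ to $v$ lifts a segment inside the convex disk $D(T_{i+1})$, and therefore lies in a single component of $f^{-1}(D(T_{i+1}))$, which by definition is $C(T_{i+1})$; this yields $v\in\mathrm{int}(C(T_{i+1}))$.

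Applying the claim with $i=k$ gives $v\in\mathrm{int}(C(T_k))$. But $v$ is a vertex of $T_k$, so $f(v)$ lies on the circumcircle $\partial D(T_k)$, and hence $v\in\partial C(T_k)$---a contradiction. The main conceptual obstacle is not the propagation step (which reduces to the standard planar Delaunay flip geometry, applicable here via the unfolding $f$ and Observation~\ref{obs:isometry}), but the bookkeeping that lets one move freely between $D(T_i)\subset\Real^2$ and $C(T_i)\subset\hat\Sigma$ along $\gamma$: one must verify at every step that the relevant preimage component of $D(T_i)$ is connected to both $T_i$ and to $v$, which the straight unfolded segment precisely provides.
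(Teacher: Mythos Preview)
The paper does not give its own proof of Lemma~\ref{lem:delaunay}; it is quoted from Bobenko and Springborn, so there is no in-paper argument to compare against. Your overall strategy---propagate the empty-circumdisk property across adjacent triangles using the local Delaunay angle condition---is the standard one and is sound in outline.

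There is, however, a real gap. You define $\gamma$ as the lift of the Euclidean segment $\overline{f(p)f(v)}$ starting at $p$, and then assert that $v$ is a vertex of the final triangle $T_k$. But the lift is determined by the segment and the basepoint~$p$, not by $v$: all you can conclude is $f(\gamma(1))=f(v)$, and nothing forces $\gamma(1)=v$. Since $f$ can fail to be injective on $C(T)$---injectivity of $f|_{C(T)}$ is exactly Corollary~\ref{cor:circum}, which the paper \emph{derives from} the present lemma---there may be a point $v'\neq v$ in the interior of $C(T)$ with $f(v')=f(v)$, and $v'$ need not be a vertex. Your inductive step (``the tail of $\gamma$ from $T_{i+1}$ to $v$\dots'') then only establishes $\gamma(1)\in C(T_{i+1})$, and the final contradiction evaporates unless $\gamma(1)$ is itself a vertex. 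Your closing paragraph shows you are aware that the bookkeeping between $D(T_i)$ and $C(T_i)$ is the delicate point, but the straight unfolded segment connects $T_i$ to $\gamma(1)$, not to $v$.

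A fix is to replace the Euclidean lift by a path $\alpha$ lying inside the open component of $f^{-1}(D(T))$ that genuinely ends at $v$; such an $\alpha$ exists because that component is open, connected, and contains $v$ by hypothesis. Then the last triangle has $v$ as a vertex by construction. The price is that $f\circ\alpha$ is no longer straight, so the step ``$f(v)$ lies on the far side of $f(e_i)$'' needs extra care---typically one chooses $\alpha$ with the minimum number of edge crossings, or runs a minimal-counterexample argument over all pairs $(T,v)$ and observes that when only one edge is crossed the apex of the adjacent triangle lies strictly inside $D(T)$, directly violating the local Delaunay condition.
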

    Combining Observation~\ref{obs:isometry} and Lemma~\ref{lem:delaunay}, we obtain the following corollaries, see also Figure~\ref{fig-full:emptycircle}.
    \begin{corollary}\label{cor:circum}
       For any triangle $T$ of an intrinsic Delaunay triangulation of $\hat\Sigma$, the restriction of $f$ to $C(T)$ is injective, and shortest paths intersect $C(T)$ in straight segments.
    \end{corollary}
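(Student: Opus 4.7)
The plan is to separate the claim into two parts: (i) injectivity of $f|_{C(T)}$, and (ii) the geodesic behaviour inside $C(T)$.

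First I would establish injectivity directly from the two results just stated. By definition, $D(T)=B_r(x,y)$ where $(x,y)$ is the circumcenter of $f(T)$ and $r$ is its circumradius, and the interior of $C(T)$ is, by construction, the component of $f^{-1}(D(T))$ containing the interior of $T$. Lemma~\ref{lem:delaunay} tells us that this component contains no vertices in its interior. Applying Observation~\ref{obs:isometry} with this $B_r(x,y)$ and this component $U$ immediately yields that $f|_{C(T)}$ is an isometry; in particular it is injective.

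Next I would use this isometry to transfer the convex geometry of $D(T)$ back to $C(T)$. Since $f|_{C(T)}$ is an isometry, it maps $C(T)$ homeomorphically onto its image $f(C(T))\subseteq\overline{D(T)}$. A short topological argument shows $f(C(T))=\overline{D(T)}$: the image of the interior of $C(T)$ is open in $D(T)$ (because $f$ is a local homeomorphism on a vertex-free region), and closed in $D(T)$ (using that $C(T)$ is closed and that any limit point of $f(\mathrm{int}\,C(T))$ inside $D(T)$ has a preimage in $C(T)$ by the local isometry property together with Observation~\ref{obs:isometry} around that limit point), so by connectedness of $D(T)$ it must equal $D(T)$; taking closures gives $f(C(T))=\overline{D(T)}$.

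Now let $\pi$ be a shortest path in $\hat\Sigma$ and let $\pi'$ be a connected component of $\pi\cap C(T)$, with endpoints $p,q\in C(T)$. Because $f|_{C(T)}$ is an isometry onto the convex set $\overline{D(T)}$, the straight segment $\overline{f(p)f(q)}$ lies in $\overline{D(T)}=f(C(T))$, so its preimage under $f|_{C(T)}$ is a path in $C(T)$ from $p$ to $q$ of length $\|f(p)-f(q)\|$. By subpath optimality of shortest paths in $\hat\Sigma$, $\pi'$ is itself a shortest path from $p$ to $q$, and its length is at most $\|f(p)-f(q)\|$; combined with the fact that $f$ is an isometry (so lengths match), $\pi'$ must coincide with the preimage of $\overline{f(p)f(q)}$, i.e., $f(\pi')$ is a straight segment.

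The main obstacle is the middle step: carefully justifying that the isometric image of $C(T)$ truly fills $\overline{D(T)}$, so that the convexity of the planar disk can be invoked to produce the straight geodesic. Injectivity alone is not enough; without surjectivity onto $\overline{D(T)}$ one cannot rule out that the straight segment $\overline{f(p)f(q)}$ leaves $f(C(T))$ and the pullback argument fails. The open–and–closed argument above, leveraging that $C(T)$ has no interior vertices (Lemma~\ref{lem:delaunay}) and that $f$ is a local isometry everywhere on $\hat\Sigma$ away from vertices, is the crux.
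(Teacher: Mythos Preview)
Your argument for injectivity (part~(i)) is fine and matches the paper: Lemma~\ref{lem:delaunay} removes interior vertices from $C(T)$, and then Observation~\ref{obs:isometry} gives that $f|_{C(T)}$ is an isometry, hence injective.

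The problem is your middle step, the claim that $f(C(T))=\overline{D(T)}$. This is false in general. The surface $\Sigma$ (and hence $\hat\Sigma$) is allowed to have boundary; the paper's very next corollary, Corollary~\ref{cor:boundaryChords}, states that $\partial f(C(T))$ is a union of circular arcs of $\partial D(T)$ \emph{and chords that are boundary edges}. So $f(C(T))$ can be a strict subset of $\overline{D(T)}$, carved by such chords. Your open--and--closed argument breaks precisely at the ``closed'' part: a sequence of points in $\operatorname{int}C(T)$ can converge to a point on $\partial\hat\Sigma$ (a boundary edge), and the corresponding image point in $D(T)$ then lies on a chord of $D(T)$ but need not lie in $f(\operatorname{int}C(T))$. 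There is no local isometry around such a boundary point that would let you extend, so the image is not closed in $D(T)$.

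The good news is that you do not need this step at all, and the paper does not use it. Once $f|_{C(T)}$ is an isometry, a connected piece $\pi'$ of a shortest path $\pi$ inside $C(T)$, with endpoints $p,q$, has length $d_{\hat\Sigma}(p,q)=\|f(p)-f(q)\|$; since $f$ preserves lengths of curves, $f(\pi')$ is a curve in the plane from $f(p)$ to $f(q)$ of length exactly $\|f(p)-f(q)\|$, which forces it to be the straight segment. Equivalently and even more directly: shortest paths in a polyhedral metric are locally straight away from vertices, and by Lemma~\ref{lem:delaunay} there are no vertices in the interior of $C(T)$, so each component of $\pi\cap C(T)$ is straight. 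Neither formulation requires $f(C(T))$ to be all of $\overline{D(T)}$, nor even convex.
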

    \begin{corollary}\label{cor:boundaryChords}
       For any triangle $T$ of an intrinsic Delaunay triangulation of $\hat\Sigma$, $\partial f(C(T))$ is a union of chords and circular arcs of $\partial D(T)$, where each chord is a boundary edge.
    \end{corollary}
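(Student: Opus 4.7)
The plan is to exploit the injectivity of $f|_{C(T)}$ established in Corollary~\ref{cor:circum}: since $f$ is continuous and injective on the compact set $C(T)$, its restriction is a homeomorphism onto the image, so $\partial f(C(T)) = f(\partial C(T))$. It thus suffices to understand the topological boundary $\partial C(T)$ inside the manifold-with-boundary $\hat\Sigma$ and track its image under $f$.

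I would decompose $\partial C(T)$ into arcs of two kinds. Type (a) arcs consist of points in the interior of $\hat\Sigma$ whose image under $f$ lies on $\partial D(T)$; these arise because $C(T)$ is, by construction, the closure of a component of $f^{-1}(D(T))$. Type (b) arcs consist of points lying on $\partial \hat\Sigma \cap C(T)$; these appear on $\partial C(T)$ because $\hat\Sigma$ itself has a boundary there. For a type-(a) arc, Observation~\ref{obs:isometry} implies that $f$ is an isometry in a neighborhood within $C(T)$, so the arc maps into $\partial D(T)$ and yields a circular arc. For a type-(b) arc, $\partial \hat\Sigma$ is a union of boundary edges of the triangulation, each contained in a single triangle on which $f$ acts isometrically, so the image is a straight line segment in $\overline{D(T)}$, that is, a chord.

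Finally, I need to argue that each such chord on $\partial f(C(T))$ coincides with the image of a full boundary edge of $\hat\Sigma$. A maximal straight subarc of $f(\partial C(T) \cap \partial\hat\Sigma)$ is contained in a single boundary edge, since at a vertex of $\hat\Sigma$ where two boundary edges meet, the edges lie in distinct triangles that form a positive angle under $f$. The endpoints of the straight subarc are therefore either endpoints of the boundary edge (vertices of $\hat\Sigma$) or points where the boundary edge meets $\partial D(T)$ and the boundary transitions back into a type-(a) arc. By Lemma~\ref{lem:delaunay} no vertex of $\hat\Sigma$ lies in the interior of $C(T)$, so any vertex endpoint of the edge that lies in $\overline{D(T)}$ already sits on $\partial C(T)$; combined with the previous step, this forces the straight subarc to span the entire portion of that boundary edge inside $\overline{D(T)}$. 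The main obstacle I foresee is the careful topological bookkeeping at vertices of $\hat\Sigma$ where type-(a) and type-(b) arcs meet, and invoking Lemma~\ref{lem:delaunay} at exactly the right moment to prevent stray vertices from breaking a chord into shorter pieces.
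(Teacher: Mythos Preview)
Your proposal is correct and follows the route the paper intends: the corollary is stated without its own proof, simply as a consequence of Observation~\ref{obs:isometry} and Lemma~\ref{lem:delaunay}, and you have filled in precisely those details via Corollary~\ref{cor:circum}. The only wrinkle is that the identity $\partial f(C(T)) = f(\partial C(T))$ requires reading $\partial C(T)$ as the manifold boundary of $C(T)$ (the frontier of $U$ in $\hat\Sigma$ together with $C(T)\cap\partial\hat\Sigma$), which is exactly the union of your type-(a) and type-(b) arcs, so your decomposition is consistent with this.
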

    \begin{figure}
        \centering
        \includegraphics[page=6]{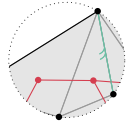}
        \caption{Although $f$ is not injective in general, for a triangle $T$ of a Delaunay triangulation, the restriction of $f$ to $C(T)$ is injective and contains no vertices of the triangulation in the interior.}
        \label{fig-full:emptycircle}
    \end{figure}

\subsection{Intrinsic Delaunay triangulations are happy}

    We are now ready to prove the main result of this section: the intrinsic Delaunay triangulation of any portalgon has constant happiness. \maarten {Do we know the constant? Or at least an upper bound?}

    Let $\mathcal{T}$ be an intrinsic Delaunay triangulation of $\Sigma$.
    We want to bound the number of intersections between a shortest path $\pi$ on $\Sigma$ and edges of the triangulation.
    For this, let $\pi$ be a shortest path between two given points on $\Sigma$, and among all such paths, assume that $\pi$ has a minimum number of crossings with edges of $\Sigma$.
    Here, we count crossings with an edge as the components of intersection with that edge.

    
    Now consider an arbitrary edge $e$ between two vertices $u$ and $v$ of the triangulation and let $m$ be its midpoint.
    We will show that $\pi$ intersects $e$ only constantly often.
    For a contradiction, suppose that $\pi$ intersects $e$ at least $7$ times.
    If a component of intersection of $\pi$ with $e$ is not transversal and does not contain a vertex, then it follows from Corollary~\ref{cor:circum} that $\pi$ is a segment of $e$, so $\pi$ has only one edge crossing.
    If a non-transversal intersection contains both vertices of $e$, then $\pi$ contains $e$ and hence intersects $e$ only once.
    Thus, every non-transversal intersection of $\pi$ with $e$ contains exactly one vertex of $e$, and hence contains the start or end of $\pi$.
    Therefore, $\pi$ intersects $e$ non-transversally at most twice.
    
    To bound the total number of intersections with $e$ we analyze the geometry of $\hat\pi$ in a local neighborhood of a lift $\hat e$ of $e$.
    Arbitrarily fix one of the triangles $T$ incident to $e$ and consider the neighborhood $C(\hat T)$ of $\hat e$ for the corresponding lift $\hat T$ incident to $\hat e$.
    By Corollary~\ref{cor:circum}, the restriction of $f$ to $C(\hat T)$ is an isometry, so $f(\Img(\hat\pi)\cap C(\hat T))$ is a straight line segment.
    Let $p_i$ be the $i$-th point of (or vertex of non-transversal) intersection of $\pi$ with $e$, and define $\lambda_i$ such that $\pi(\lambda_i)=p_i$.
    Let $\hat e_i$ be a lift of $e$ containing $\hat\pi(\lambda_i)$, and let $\hat p_i$, $\hat T_i$, $\hat u_i$, $\hat m_i$, $\hat v_i$ be the respective lifts of $p_i$, $T$, $u$, $m$, $v$ incident to $\hat e_i$, see Figure~\ref{fig-full:inward} (left).
    Let $D_i:=D(\hat T_i)$, and $f_i$ be the restriction of $f$ to $C_i:=C(\hat T_i)$.
    Let $c_i$ be the center of $D_i$, i.e., the circumcenter of $f(\hat T_i)$.
    \begin{figure}
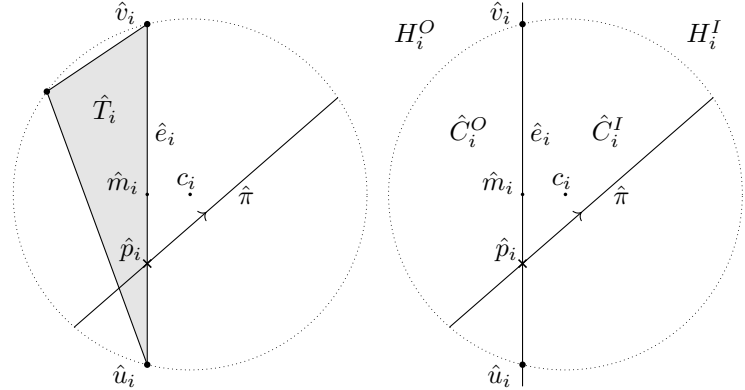

        \centering
        \includegraphics[page=7]{portalfigs}~
        \includegraphics[page=8]{portalfigs}
        \caption{The crossing $p_i$ of $\pi$ with $e$ is inward.}
        \label{fig-full:inward}
    \end{figure}
    
    Define $H_i^I$ and $H_i^O$ to be the two half-planes bounded by the line through $f(\hat e_i)$, such that $H_i^I$ contains $c_i$ (if $c_i=f(\hat m_i)$, label the half-planes by $H_i^I$ and $H_i^O$ arbitrarily).
    We respectively call $H_i^I$ and $H_i^O$ the \emph{inner} and \emph{outer} half-plane of $\hat e_i$, and define $C_i^I:=f_i^{-1}(H_i^I)$ to be the \emph{inner} component of $\hat e_i$, and $C_i^O:=f_i^{-1}(H_i^O)$ to be the \emph{outer} component of $\hat e_i$.
    \frank{some more intuitive descriptions of what these
      components/sets $C_i^I$ and $C_i^O$ look like/represent would be
    nice. Currently, they are hard to describe/envision.}

    We call the crossing $p_i$ \emph{inward} if $p_i$ is a non-transversal crossing or $\hat\pi$ crosses $\hat p_i$ from $C_i^O$ to $C_i^I$, see Figure~\ref{fig-full:inward} (right).
    
    Let $s$ be the segment of $e$ from $u$ to $m$.
    Assume without loss of generality that at least four of the (at least seven) crossings of $\pi$ with $e$ lie on $s$ (otherwise relabel $u$ and $v$).
    \begin{lemma}\label{lem:inward}
       For any inward crossing $p_i$ of $\pi$ with $s$, none of the crossings $p_j$ with $j>i$ lie on the segment of $s$ between $u$ and $p_i$.
    \end{lemma}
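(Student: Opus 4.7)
I proceed by contradiction. Suppose there is an index $j>i$ such that $p_j$ lies strictly between $u$ and $p_i$ on $s$. Lift $\pi$ to $\hat\pi$ so that $\hat p_i$ lies on $\hat e_i$, and let $\hat p_j$ be the corresponding lift of $p_j$ along $\hat\pi$. Since $p_i$ is an inward crossing, $\hat\pi$ enters $C_i^I$ at $\hat p_i$, and by Corollary~\ref{cor:circum} the subarc of $\hat\pi$ inside $C_i$ is mapped by $f$ to a straight line segment that meets the line through $f(\hat e_i)$ only at $f(\hat p_i)$. Consequently, while $\hat\pi$ remains inside $C_i$ it cannot meet $\hat e_i$ again, and in particular $\hat p_j$ either lies on $\hat e_i$ but outside $C_i$, or lies on a different lift $\hat e'$ of $e$ in $\hat\Sigma$.

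To derive a contradiction, I would consider the closed curve $\gamma$ in $\Sigma$ obtained by concatenating $\pi[p_i,p_j]$ with the segment of $e$ from $p_j$ back to $p_i$. Its lift starting at $\hat p_i$ ends at $\hat p_j$. Using the straight-line behavior of $\hat\pi$ inside each circumdisk, together with Corollary~\ref{cor:boundaryChords} and the empty-circumdisk property (Lemma~\ref{lem:delaunay}), I would construct a modified path $\pi'$ from $\pi$ by re-routing $\pi[p_i,p_j]$ through a neighborhood of $u$ (or a suitable lift of $u$ in $\hat\Sigma$). The construction is designed so that $\pi'$ has length at most that of $\pi$ but strictly fewer components of intersection with $e$, contradicting the choice of $\pi$ as a shortest path with a minimum number of edge crossings.

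The main difficulty, and the reason the inward hypothesis is essential, lies in this re-routing step. For outward crossings the sub-path could reach a closer-to-$u$ point of $\hat e_i$ via a short segment inside $C_i^O$, so no such re-routing is available; for inward crossings, the straight entry into $C_i^I$ forces $\hat\pi[\hat p_i,\hat p_j]$ to first move away from $\hat e_i$ before returning toward it, and the empty-circumdisk property quantifies the length slack that such a detour creates. Making this geometric intuition rigorous is the main technical obstacle, especially in the case where $\hat p_j$ lies on a distinct lift of $e$ and the subpath $\hat\pi[\hat p_i,\hat p_j]$ winds around lifts of vertices in $\hat\Sigma$; the analysis will likely proceed by tracking the position of $f(\hat\pi)$ relative to the circumcircles $\partial D_i$ and the perpendicular to $f(\hat e_i)$ through $f(\hat u_i)$.
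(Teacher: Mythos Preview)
Your proposal is not a proof but an outline, and you say so yourself: the re-routing step that would produce a competitor path $\pi'$ is identified as ``the main technical obstacle'' and is never carried out. Even if it could be, there are two concrete issues. First, $\pi$ is assumed to minimise the total number of crossings with \emph{all} edges of $\Sigma$, not just with $e$; a local re-routing near $u$ that reduces the number of intersections with $e$ could easily introduce new intersections with other edges incident to $u$, so you would have to control those as well. Second, you have not explained why the detour $\pi[p_i,p_j]$ is at least as long as any alternative route near $u$; the vague appeal to ``the empty-circumdisk property quantifies the length slack'' does not correspond to any stated lemma, and in fact the circumdisk of $T$ need not contain $u$ in any useful position relative to the path.

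The paper's argument is quite different and avoids comparing lengths of competing paths altogether. It introduces the open disk $Z$ centred at $f(\hat m_i)$ with $f(\hat e_i)$ as a diameter, and observes two things: (i) the concatenation $\hat m_i\to\hat p_i\to\hat p_j\to\hat u_j$ (along $\hat e_i$, then $\hat\pi$, then $\hat e_j$) has total length strictly less than $|s|=|e|/2$, so its image under $f$ stays inside $Z$; and (ii) $H_i^I\cap Z\subset D_i$, so $C_i^I$ has no interior vertices inside $Z$, which forces any path that enters the interior of $C_i^I$ to either leave $Z$ or cross the interior of $\hat e_i$ before exiting $C_i^I$. Since $\hat u_j$ is a vertex, it cannot lie in $C_i^I$; thus the path $\hat p_i\to\hat p_j\to\hat u_j$ must cross the interior of $\hat e_i$. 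As $\hat\pi$ does not cross $\hat e_i$ twice, the crossing must come from the edge segment $\hat p_j\to\hat u_j\subset\hat e_j$, contradicting the interior-disjointness of triangulation edges. This is a purely topological trapping argument inside $Z$; no competitor path or length comparison beyond the trivial bound $|s|$ is needed.
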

    \begin{proof}
        Suppose for a contradiction that for some $j>i$, $p_j$ lies on the segment of $e$ between $u$ and $p_i$.
        Let $Z$ be the open disk centered at $f(\hat m_i)$ with $f(\hat e_i)$ as diameter.
        \frank{drawing $Z$ in some figure would be helpful.}
        The path $\hat m_i\xrsquigarrow{\hat e_i}\hat p_i\xrsquigarrow{\hat\pi}\hat p_j\xrsquigarrow{\hat e_j}\hat u_j$ is strictly shorter than $s$, so the segment $f(\hat p_j\xrsquigarrow{\hat e_j}\hat u_j)$ lies interior to~$Z$.
        Note that $H_i^I\cap Z$ lies completely inside $D_i$, and hence does not contain any vertices of $C_i^I$ in its interior, so in particular $\hat u_j$ does not lie in $C_i^I$.
        Moreover, any path that enters the interior of $C_i^I$ cannot leave it without leaving $Z$ or crossing the interior of $\hat e_i$ (because $\hat e_i$ is the only common boundary of $C_i^I$ and $\hat\Sigma\setminus C_i^I$ that lies inside $Z$).
        However, the path $\hat p_i\xrsquigarrow{\hat\pi}\hat p_j\xrsquigarrow{\hat e_j}\hat u_j$ enters and leaves $C_i^I$ without leaving $Z$.
        The path $\hat p_i\xrsquigarrow{\hat\pi}\hat p_j$ does not repeatedly cross $\hat e_i$, so the segment of $\hat e_j$ between $\hat p_j$ and $\hat u_j$ must cross $\hat e_i$, so $\hat e_j$ crosses the interior of $\hat e_i$, but this is a contradiction because edges of the triangulation are interior-disjoint.
    \end{proof}
    If $p_i$ of $\pi$ is not inward, then it is inward on the reverse of $\pi$.
    Corollary~\ref{cor:notInward} follows.
    
   \begin{corollary}\label{cor:notInward}
       If some crossing $p_i$ of $\pi$ with $s$ is not inward, then none of the crossings $p_h$ with $h<i$ lie on the segment of $s$ between $u$ and $p_i$.
   \end{corollary}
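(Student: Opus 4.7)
The plan is to derive Corollary~\ref{cor:notInward} directly from Lemma~\ref{lem:inward} by applying it to the reversed path. Let $\pi^{-1}$ denote the path $t \mapsto \pi(1-t)$. Since $\pi$ is a shortest path with a minimum number of edge crossings between its endpoints, the same is true of $\pi^{-1}$: length and the components of intersection with any edge are invariant under reversal. If $\pi$ has crossings $p_1,\ldots,p_k$ with $e$ in order, then $\pi^{-1}$ has crossings $p_k,\ldots,p_1$ in order; in particular, $p_i$ appears as the $(k+1-i)$-th crossing of $\pi^{-1}$, and each $p_h$ with $h<i$ appears as a crossing of $\pi^{-1}$ with index $k+1-h>k+1-i$.

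Next I verify that $p_i$, which is not inward with respect to $\pi$, is inward with respect to $\pi^{-1}$. By the definition, every non-transversal crossing is inward, so a non-inward crossing must be transversal and, on $\pi$, cross from $C_i^I$ to $C_i^O$. The components $C_i^I$ and $C_i^O$ are intrinsic to the lift $\hat e_i$ and the adjacent triangle $\hat T_i$ (they do not depend on the orientation of $\pi$), so on the reversed path $\pi^{-1}$ the same crossing traverses $\hat p_i$ from $C_i^O$ to $C_i^I$, hence is inward in the sense of the definition.

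Applying Lemma~\ref{lem:inward} to $\pi^{-1}$ at its inward crossing $p_i$, none of the later crossings of $\pi^{-1}$, i.e.\ none of the $p_h$ with $h<i$, lie on the segment of $s$ between $u$ and $p_i$. This is exactly the statement of the corollary. The only caveat to check is that the segment ``between $u$ and $p_i$'' is the same on $s$ regardless of how $\pi$ is oriented, which is immediate since $u$, $m$, and $p_i$ are fixed points on $e$; the choice of the half of $e$ analyzed (namely $s$, with at least four crossings by assumption) is also independent of orientation.
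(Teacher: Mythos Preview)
Your argument is correct and follows exactly the paper's approach: the paper's entire justification is the single sentence ``If $p_i$ of $\pi$ is not inward, then it is inward on the reverse of $\pi$. Corollary~\ref{cor:notInward} follows.'' You have simply spelled out the details of this reversal argument (why reversal preserves the shortest-path-with-minimum-crossings hypothesis, why the index order flips, and why the sides $C_i^I$, $C_i^O$ do not depend on orientation), all of which are as you describe.
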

    \begin{figure}[b]
        \centering
        \includegraphics[page=13]{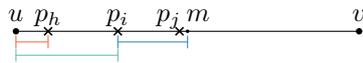}
        \caption{A subsequence of crossings $p_h$, $p_i$, $p_j$ of $\pi$ with $s$ whose distance from $u$ is increasing.}
        \label{fig-full:increasingSequence}
    \end{figure}
    Lemma~\ref{lem:inward} implies that for the sequence of intersections $p_i$ of $\pi$ with $s$, the distance function from $u$ to $p_i$ along $s$ has only one local minimum.
    So there exists a subsequence of at least three crossings $p_h$, $p_i$ and $p_j$ ($h<i<j$) of $\pi$ with $s$ such that $p_i$ lies between $p_h$ and $p_j$.
    Assume without loss of generality that the distances (along $s$) from $u$ to $p_h$, $p_i$, and $p_j$ are increasing (the other case follows by considering the reverse of $\pi$). 
    At this point, we observe:
    \begin {observation}
     $f(\hat m_i) \ne c_i$
    \end {observation}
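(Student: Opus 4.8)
The plan is to argue by contradiction: suppose $f(\hat m_i) = c_i$, i.e., the circumcenter of $f(\hat T_i)$ lies exactly at the midpoint of $f(\hat e_i)$. Geometrically this means that $f(\hat e_i)$ is a diameter of the circumcircle $D_i$ of $f(\hat T_i)$, so the third vertex of $f(\hat T_i)$ lies on a semicircle and subtends a right angle at $f(\hat e_i)$. First I would translate this into a statement about the disk $Z$ (the open disk with $f(\hat e_i)$ as diameter, centered at $f(\hat m_i)$, which already appears in the proof of Lemma~\ref{lem:inward}): the hypothesis $f(\hat m_i) = c_i$ is exactly the statement $Z = D_i$. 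In particular, the two half-disks $H_i^I \cap Z$ and $H_i^O \cap Z$ are then both equal to half-disks of $D_i$ itself, and $C_i^I$ and $C_i^O$ are symmetric in a strong sense.

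Next I would use this symmetry to collapse the distinction between "inward" and "not inward" crossings at $p_i$. Recall that a crossing is inward (at $p_i$) if $\hat\pi$ passes from $C_i^O$ to $C_i^I$; but when $Z = D_i$, the roles of the inner and outer components relative to the disk $Z$ become interchangeable, so the argument of Lemma~\ref{lem:inward} (which only used that $H_i^I \cap Z$ lies inside $D_i$) applies verbatim to $H_i^O \cap Z$ as well, because $H_i^O \cap Z = H_i^O \cap D_i$ also lies inside $D_i$. Consequently, whether $p_i$ is inward or not, the conclusion of Lemma~\ref{lem:inward} holds: no later crossing $p_j$ with $j>i$ lies between $u$ and $p_i$ on $s$, and by Corollary~\ref{cor:notInward} no earlier crossing $p_h$ with $h<i$ lies between $u$ and $p_i$ either. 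This forces $p_i$ to be the crossing closest to $u$ among all crossings on $s$, for \emph{every} $i$ indexing a crossing on $s$ --- which is impossible once there are at least two such crossings (and we have arranged for at least four). That contradiction establishes $f(\hat m_i) \ne c_i$.

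The main obstacle I anticipate is making the symmetry argument of the second paragraph fully rigorous: one must check carefully that when $Z = D_i$, the portion $H_i^O \cap Z$ really does lie inside $D_i$ and contains no vertices of $C_i^O$ in its interior, so that the "cannot enter and leave $C_i^O$ without leaving $Z$ or crossing $\hat e_i$" topological argument goes through symmetrically. This hinges on the Delaunay/empty-circle property (Lemma~\ref{lem:delaunay} and Corollary~\ref{cor:circum}), which tells us $C_i$ contains no vertices in its interior, hence neither half does. An alternative, possibly cleaner, route would be to observe directly that if $f(\hat m_i) = c_i$ then $\hat m_i$ is equidistant (in $\hat\Sigma$, via the local isometry $f_i$) from the two endpoints $\hat u_i, \hat v_i$ and from the third vertex of $\hat T_i$, and then derive a contradiction with the assumed increasing-distance subsequence $p_h, p_i, p_j$ on $s$ --- but I expect the half-plane symmetry argument to be the shortest path to the contradiction, so I would pursue that first and fall back to the metric computation only if the topological bookkeeping becomes unwieldy.
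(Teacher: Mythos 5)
Your proposal is correct, and it reaches the same contradiction as the paper but via a more explicit route. The paper's own proof is a terse meta-argument: since the choice of which half-plane to call $H_i^I$ and which $H_i^O$ was arbitrary when $f(\hat m_i)=c_i$, and since that choice is exactly what made ``increasing'' the derived conclusion (via Corollary~\ref{cor:notInward}, which says a crossing preceded by a closer crossing must be inward), the other labeling would force the symmetric conclusion that the distances decrease --- but the distances are a labeling-independent geometric fact, contradiction. You instead unpack the geometry underlying that symmetry: $f(\hat m_i)=c_i$ forces $Z=D_i$, so both $H_i^I\cap Z$ and $H_i^O\cap Z$ lie in $D_i$, and the empty-circle argument used to prove Lemma~\ref{lem:inward} (that $\hat u_j$ cannot lie in the relevant half of $C_i$, and a path entering that half without crossing $\hat e_i$ again cannot leave $Z$) then works with $C_i^O$ in place of $C_i^I$. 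So every transversal crossing on $s$ satisfies both the conclusion of Lemma~\ref{lem:inward} and that of Corollary~\ref{cor:notInward}, forcing all such crossings to be equidistant from $u$, which contradicts the strictly increasing triple $p_h,p_i,p_j$. This re-derivation is slightly longer than the paper's one-liner but has the virtue of exhibiting the exact containment ($H_i^O\cap Z\subseteq D_i$) that the labeling-swap argument implicitly relies on; a reader of the paper has to reconstruct for themselves which steps of Lemma~\ref{lem:inward}'s proof actually depend on $c_i\in H_i^I$, whereas you identify and reuse that step directly. Both arguments are sound.
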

    \begin {proof}
      Suppose for contradiction that $f(\hat m_i)$ is $c_i$. 
      In that case we labeled $H_i^I$ and $H_i^O$ arbitrarily, and the other labeling tells us that the distances from $u$ to $p_h$, $p_i$, and $p_j$ must be decreasing instead of increasing, which is impossible if they were increasing to begin with (switching the labeling does not change the distances from $u$).
    \end {proof}
    
    By Corollary~\ref{cor:notInward}, $p_i$ is an inward crossing.
    We show how the possible locations of the subsequent crossings inside $C_i$ are constrained.
    Define $D'_i$ to be the disk concentric with $D_i$, whose boundary passes through $f(\hat m_i)$.
    Any chord of $D_i$ that passes through $D'_i$ is at least as long as $e$.
    Let $\overline e_h:=f(\hat e_h)\cap D_i$. We observe:
    \begin {lemma}
     $\overline e_h$ cannot intersect the closure of $D'_i$.
    \end {lemma}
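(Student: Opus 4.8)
The plan is to argue by contradiction: assume $\overline e_h$ intersects the closure of $D'_i$, and derive that the path $\pi$ must self-intersect or cross an edge of the triangulation transversally in a way that is forbidden. Recall that $f_i$ restricts to an isometry on $C_i$ (Corollary~\ref{cor:circum}), so inside $C_i$ we may reason purely in the Euclidean plane about the images under $f$ of the relevant lifts. The crossing $p_i$ is inward, so $f(\Img(\hat\pi)\cap C_i)$ is a straight segment that enters $C_i$ through $f(\hat e_i)$ from the outer side $H_i^O$ into the inner side $H_i^I$ containing $c_i$. Since the distances along $s$ from $u$ to $p_h$, $p_i$, $p_j$ are increasing, and $p_h$ is visited before $p_i$, the lift $\hat e_h$ must be reachable from $\hat p_i$ by following $\hat\pi$ backward; the relevant geometric fact is that $f(\hat e_h)$ ``looks like'' a nearby parallel copy of $f(\hat e_i)$ (both are lifts of the same edge $e$), so $\overline e_h$ is a chord of $D_i$ of length at least $|e|$.

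The key step is to exploit the definition of $D'_i$: any chord of $D_i$ meeting $\overline{D'_i}$ has length at least $|e|$ — that is exactly how $D'_i$ was defined, via its boundary passing through $f(\hat m_i)$ (the midpoint of the diameter-length-$|e|$ object $f(\hat e_i)$). First I would pin down where $\overline e_h$ sits relative to the line through $f(\hat e_i)$: because the distance from $u$ along $s$ is strictly larger at $p_i$ than at $p_h$, and because of the orientation of the inward crossing, $\overline e_h$ must lie (weakly) on the outer side $H_i^O$ — it cannot poke into the inner component $C_i^I$ without forcing $\hat e_h$ to cross the interior of $\hat e_i$, which is impossible since triangulation edges are interior-disjoint (this is the same mechanism used in the proof of Lemma~\ref{lem:inward}). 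But $D'_i\cap H_i^O$ is empty when $f(\hat m_i)\ne c_i$: since $D'_i$ is concentric with $D_i$ and its boundary passes through $f(\hat m_i)$, and $c_i\in H_i^I$ strictly (by the Observation that $f(\hat m_i)\ne c_i$), the disk $D'_i$ lies strictly inside $H_i^I$. Hence $\overline e_h$, lying in $H_i^O$, cannot meet $\overline{D'_i}$, contradiction.

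I expect the main obstacle to be the second sentence of the key step: carefully justifying that $\overline e_h$ must lie on the outer side of the line through $f(\hat e_i)$, rather than straddling it or lying inside $C_i^I$. This requires combining (i) the subpath $\hat p_h \xrsquigarrow{\hat\pi} \hat p_i$ of $\pi$ does not repeatedly cross $\hat e_i$ (minimality of crossings, together with the containment of everything in $C_i$), (ii) the interior-disjointness of $\hat e_h$ and $\hat e_i$ as triangulation edges, and (iii) the length bound coming from the fact that $\hat m_i \xrsquigarrow{\hat e_i}\hat p_i \xrsquigarrow{\hat\pi} \hat p_h$ is a path shorter than $s$ (analogous to the disk $Z$ argument of Lemma~\ref{lem:inward}), which forces $\hat e_h$'s lift to stay close to $\hat e_i$ and on the correct side. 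Once the side is established, the final contradiction with $D'_i$ is immediate from the concentricity and the Observation $f(\hat m_i)\ne c_i$.
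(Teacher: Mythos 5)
Your proposal has the right spirit — localize inside $C_i$ where $f$ is an isometry, use the inward orientation at $p_i$, and exploit the fact that $\hat e_h$ and $\hat e_i$ are interior-disjoint — but as written it has two genuine gaps, one of which you partially anticipate. First, the half-plane separation is too coarse for the \emph{closure} of $D'_i$: the open disk $D'_i$ does lie strictly inside $H_i^I$, but $\overline{D'_i}$ touches the dividing line at $f(\hat m_i)$, and $\overline e_h$ being (weakly) in $\overline{H_i^O}$ likewise permits touching that line. So the argument as stated only excludes $\overline e_h\cap D'_i$, not $\overline e_h\cap\overline{D'_i}$; some extra work is required precisely at $f(\hat m_i)$. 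Second, the claim that $\overline e_h$ lies weakly in $H_i^O$ does not follow from interior-disjointness of edges alone: $\hat e_h$ could share a vertex with $\hat e_i$ (say $\hat u_i$), in which case $f(\hat e_h)$ may legitimately cross from $H_i^O$ into $H_i^I$ through the endpoint $f(\hat u_i)$ of $f(\hat e_i)$ without crossing its interior. (As a side remark, your assertion that $\overline e_h$ has length \emph{at least} $|e|$ is reversed — since $\overline e_h = f(\hat e_h)\cap D_i$ is a subsegment of a length-$|e|$ edge lifted isometrically, its length is at most $|e|$.)

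The paper closes both gaps differently and more carefully. It does not try to confine $\overline e_h$ to a half-plane; instead it (i) uses the Delaunay emptiness of $C_i$ to show that no lift of $e$ enters the open region $R_i$, the interior of the convex hull of $D'_i\cup f(\hat e_i)$, and (ii) constrains $\overline\pi$ (the chord through $f(\hat p_i)$ containing $f(\hat\pi\cap C_i)$) using the bound $\mathrm{dist}(\hat p_i,\hat p_j)<\mathrm{dist}(f(\hat p_i),f(\hat m_i))$ to conclude that $\overline\pi$ misses $D'_i$. This pins $f(\hat p_h)$ to one side of the second tangent chord $\omega$ from $f(\hat p_i)$ to $D'_i$, which, together with (i), rules out $\overline e_h$ meeting $\overline{D'_i}$ — including the tangential contact your half-plane argument cannot exclude. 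If you want to repair your proof rather than adopt the paper's, you would need at minimum the observation that $C_i$ has no interior vertices (so no lift of $e$ can cross into $R_i$), plus a separate argument for the borderline case where $\overline e_h$ either passes through $f(\hat m_i)$ or through an endpoint of $f(\hat e_i)$; at that point you are essentially reconstructing the $R_i$/$\omega$ machinery.
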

    
    \begin {proof}
    Let $R_i$ (shaded in Figure~\ref{fig-full:proof} (left)) be the interior of the convex hull of $D'_i\cup f(\hat e_i)$.
    Because $C_i$ has no interior vertices, no lift of $e$ intersects $f_i^{-1}(R_i)$, see Figure~\ref{fig-full:proof}~(left).
    In correspondence with the figures, we will without loss of generality assume that $f(\hat e_i)$ is vertical, with $f(\hat u_i)$ on the bottom, and $H_i^O$ on the left.
    \begin{figure}[h]
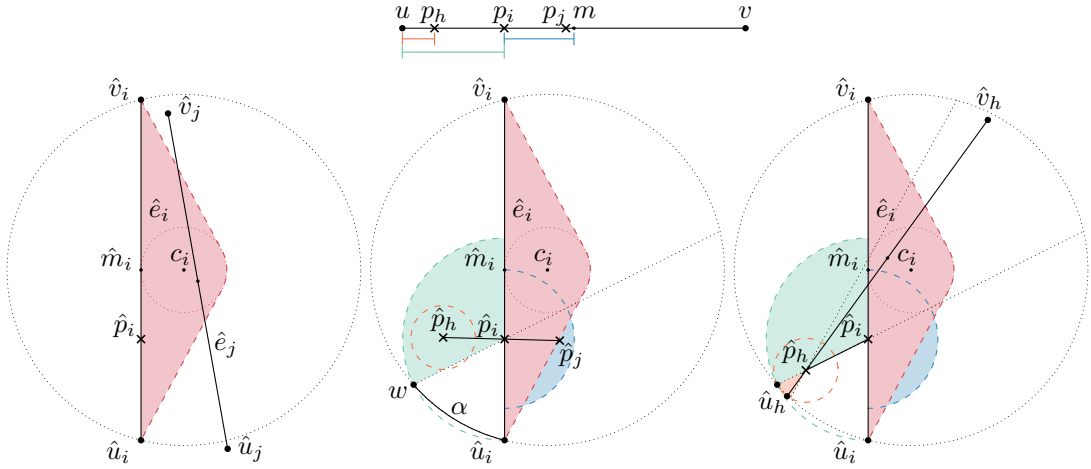

        \centering
        \hspace{.85em}
        \includegraphics[page=13]{portalfigs}\\[.5em]
        \hspace{-1em}
        \includegraphics[page=9]{portalfigs}~
        \includegraphics[page=10]{portalfigs}~
        \includegraphics[page=11]{portalfigs}
        \caption{
            (Left) There is no lift of $e$ that intersects the interior of the shaded region $R_i$ of $C_i$.
            (Middle) $\hat p_j$ lies in the blue region, $\hat p_h$ in the green region.
            (Right) $\hat u_h$ cannot lie interior to $C_i$.}
        \label{fig-full:proof}
    \end{figure}
    
    The fact that the subsequent intersection $p_j$ of $\pi$ with $e$ lies on the segment of $e$ between $p_i$ and $m$ constrains the angle at which $\pi$ can intersect $p_i$.
    Indeed, $f\circ\pi$ must leave $R_i$ before the next intersection, and the distance from $\hat p_i$ to $\hat p_j$ is less than that from $f(\hat p_i)$ to $f(\hat m_i)$.
    Figure~\ref{fig-full:proof} (middle) illustrates the possible locations of $\hat p_j$.
    Crucially, the chord (call it $\overline\pi$) of $D_i$ that contains $f(\hat\pi\cap C_i)$ does not intersect $D'_i$.
    
    Because $f(\hat m_i)\neq c_i$, two chords of $D_i$ pass through $f(\hat p_i)$ and are tangent to $D'_i$.
    One of those chords is $f(\hat e_i)$, and the other chord (call it $\omega$) has one endpoint (call it $w$) interior to $H_i^O$.
    Because $\overline\pi$ does not intersect $D'_i$, the point $f(\hat p_h)$ lies on or above the line through $w$ and $f(\hat p_i)$.
    Then $\overline e_h$ is a chord of $D_i$, and it cannot intersect the closure of $D'_i$.
    \end {proof}
    We will arrive at a contradiction to our initial assumption  by showing:
    \begin{lemma}
      \label{lem:intersects}
      If $\pi$ crosses $e$ at least $7$ times then
      $\overline e_h$ must intersect the closure of $D'_i$.
    \end{lemma}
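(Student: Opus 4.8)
The plan is to complete the proof by contradiction that this section has been building towards. The preceding lemma showed that $\overline e_h$ \emph{cannot} meet the closure of $D'_i$; Lemma~\ref{lem:intersects} shows that, under the standing hypothesis that $\pi$ crosses $e$ at least seven times, $\overline e_h$ \emph{must} meet it. The contradiction forces $\pi$ to cross $e$ at most six times, and since $e$ and the endpoints of $\pi$ were arbitrary this bounds how often any shortest path enters any triangle of the intrinsic Delaunay triangulation of $\Sigma$ — that is, the portalgon it induces is $O(1)$-happy, which is the goal of the section.

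Recall the structure already in place: of the $\ge 7$ crossings of $\pi$ with $e$ at most two are non-transversal, so after relabelling $u\leftrightarrow v$ at least four \emph{transversal} crossings lie on $s$; by Lemma~\ref{lem:inward} their distances from $u$ along $s$ form a unimodal sequence, hence contain a monotone run $p_h,p_i,p_j$ with $h<i<j$ along $\pi$ and $d(u,p_h)<d(u,p_i)<d(u,p_j)$, where (after reversing $\pi$ if necessary) $p_i$ and $p_j$ are inward by Corollary~\ref{cor:notInward}. The preceding lemma used $p_i$ (inward) together with the fact that $p_j$ lies between $p_i$ and $m$ (which bounds the turn of $\hat\pi$ at $\hat p_i$) to conclude that the chord $\overline\pi$ of $D_i$ carrying $f(\Img(\hat\pi)\cap C_i)$ avoids $D'_i$, and then deduced, from the position of $f(\hat p_h)$ relative to the tangent chord $\omega$ and the fact that $\hat u_h$ is a vertex (hence not interior to $C_i$), that $\overline e_h$ avoids $\overline{D'_i}$.

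For Lemma~\ref{lem:intersects} the plan is to run the \emph{same} tangent-chord argument a second time, now anchored at the later inward crossing $p_j$ instead of $p_i$. Since $p_j$ is inward, $\hat\pi$ crosses $\hat e_j$ from $C_j^O$ into $C_j^I$ at $\hat p_j$, and — because $p_i$ precedes $p_j$ with $d(u,p_i)<d(u,p_j)$, so $p_i$ lies between $u$ and $p_j$ — the previous lemma's geometry, applied around $C_j:=C(\hat T_j)$ with $p_i$ now in the role of the earlier crossing, constrains where (in the developed picture) $\hat\pi$ enters the neighbourhood of $\hat p_j$, and hence pins down the developed position of $\hat p_i$, and then, tracing $\hat\pi$ back one step further, of $\hat p_h$. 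Transporting this information to the coordinate frame of the preceding lemma — legitimate because $C_i$, $C_j$, and the flat region swept by $\hat\pi[\hat p_h,\hat p_j]$ contain no vertices (Corollaries~\ref{cor:circum}--\ref{cor:boundaryChords}), so $f$ develops everything in sight with no cone point in the way — places $f(\hat p_h)$ on the side of $\omega$ opposite to where the preceding lemma put it; equivalently, with $\hat u_h$ a vertex forbidden from the interior of $C_i$, the chord $\overline e_h$ is forced to extend across the centre of $D_i$ and hence to pierce $\overline{D'_i}$. This contradicts the preceding lemma, so $\pi$ crosses $e$ at most six times and the theorem follows. I expect the crux to be exactly this repeated, transported bookkeeping: one must develop several copies of $e$ simultaneously (near $\hat e_h$, $\hat e_i$, $\hat e_j$), keep their orientations consistent across the flat region, and convert the qualitative facts (``$p_i,p_j$ inward'', ``$p_i,p_j$ between their predecessors and $m$'', ``$\hat u_h$ not interior to $C_i$'') into a precise chord-versus-disk incidence — the same tangent-line and chord-length reasoning as before, but now relating two distinct circumdisks rather than working inside a single one, all of it resting on the Delaunay (empty-circumdisk) property of $\mathcal T$.
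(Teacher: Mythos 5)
Your proposal takes a genuinely different route from the paper, and I don't think it closes the argument.

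The paper's proof of Lemma~\ref{lem:intersects} never leaves the coordinate frame of $C_i$ and never re-applies the tangent-chord construction around $\hat p_j$. It is a direct, quantitative distance argument inside $D_i$: the point $q=\overline e_h\cap\partial D_i$ nearest $f(\hat u_h)$ is shown to satisfy $d(q,f(\hat p_h))\leq d_s(u,p_h)=d_s(u,p_i)-d_s(p_i,p_h)\leq d_s(u,p_i)-d(f(\hat p_h),f(\hat p_i))=d(b,f(\hat p_h))$, so $q$ lies on a specific sub-arc $\gamma\subseteq\partial D_i\cap H_i^O$ contained between $f(\hat u_i)$ and $w$. Combining this with the constraint from the preceding lemma — that $f(\hat p_h)$ lies on or above $\omega$ — and a symmetry argument around the midpoint $\gamma_{\frac12}$, one shows that every chord through $f(\hat p_h)$ and $\gamma$ pierces $D'_i$. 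So the ``new'' ingredient of Lemma~\ref{lem:intersects} is this metric bound on where $q$ can sit, not a second tangent-chord argument.

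Your plan, by contrast, re-runs the preceding lemma's tangent-chord geometry anchored at $p_j$, then ``transports'' the resulting constraint on $\hat p_h$ back into the $C_i$ picture. There are two concrete problems with that. First, the preceding lemma's argument needs a \emph{later} crossing beyond the anchor to control the direction of $\overline\pi$: when anchored at $p_i$ that role is played by $p_j$, so when anchored at $p_j$ you would need a fourth monotone crossing $p_k$ ($k>j$, $d_s(u,p_k)>d_s(u,p_j)$). The setup only guarantees a monotone run of length three (from four crossings on $s$ and unimodality), so this $p_k$ need not exist. Second, the ``transport'' step is not justified: $\hat\pi[\hat p_h,\hat p_j]$ is a geodesic in $\hat\Sigma$ that may bend at vertices outside $C_i\cup C_j$, so you cannot in general claim that the whole stretch develops flat and that the constraints from the $C_j$ frame compose cleanly with those from the $C_i$ frame. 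Finally, even granting both, you only argue that $f(\hat p_h)$ sits on the wrong side of $\omega$, which is not the lemma's conclusion — the lemma asserts an incidence between the chord $\overline e_h$ and the disk $D'_i$, and deriving it still requires a chord-versus-disk estimate of exactly the kind the paper supplies directly. So the proposal both misses the key step and introduces an unsupported hypothesis.
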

    
    \begin {proof}
    Let $q\in\partial D_i$ be the point of $\overline e_h$ closest to $f(\hat u_h)$.
    We will analyze the possible locations of $q$.
    Let $d_i$ be the distance from $f(\hat p_i)$ to $f(\hat u_i)$.
    Let $B_i$ be the disk of radius $d_i$ centered at $f(\hat p_i)$, so that $B_i$ intersects $\partial D_i\cap H_i^O$ in the arc (labeled $\alpha$ in Figure~\ref{fig-full:proof}) connecting $w$ and $f(\hat u_i)$.
    Let $b$ be the point inside $H_i^O$ where $\overline\pi$ intersects $\partial B_i$.
    Let $B_h$ be the disk centered at $f(\hat p_h)$ whose boundary intersects $b$, so that $B_h\subseteq B_i$.
    Let $d_s(\cdot,\cdot)$ be the distance along $s$, then $q$ lies in $B_h$ because
    \[
        d(q,f(\hat p_h))\leq
        d_s(u,p_h)=
        d_s(u,p_i)-d_s(p_i,p_h)\leq
        d_s(u,p_i)-d(f(\hat p_h),f(\hat p_i))=
        d(b,f(\hat p_h)).
    \]
    Thus, $q$ lies on the arc (call it $\gamma$) in which $B_h$ intersects $\partial D_i\cap H_i^O$.
    This arc lies on the arc between $f(\hat u_i)$ and $w$.
    Let $\gamma_0$ be the point of $\gamma$ closest to $w$, $\gamma_1$ be the point closest to $f(\hat u_i)$, and $\gamma_{\frac 12}$ be the midpoint of $\gamma$, see Figure~\ref{fig-full:proof_pt2}.
    The line through $\gamma_{\frac 12}$ and $f(\hat p_h)$ passes through the center $c_i$ of $D'_i$, so the chord $\overline e_h$ intersects the vertical line through $c_i$ at or below $c_i$.
    \begin{figure}[h]
        \centering
        \includegraphics{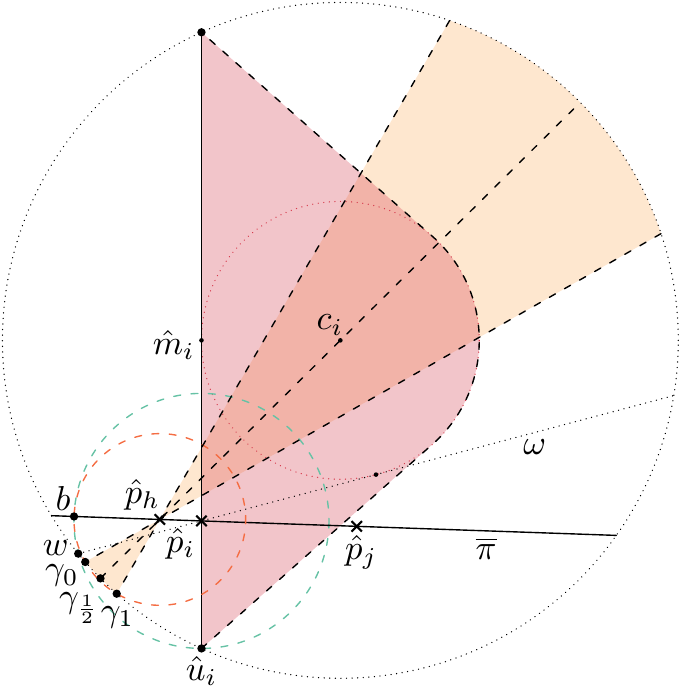}
        \caption{Any chord of $D_i$ through $f(\hat p_h)$ and $\gamma$ intersects $D'_i$.}
        \label{fig-full:proof_pt2}
    \end{figure}

    We distinguish two cases, first consider the case where $q$ lies on the arc of $\gamma$ between $\gamma_0$ and $\gamma_{\frac 12}$.
    Note that $\overline e_h$ is the chord of $D_i$ that passes through $q$ and $f(\hat p_h)$.
    Because $f(\hat p_h)$ lies on or above $\omega$, the segment of $\overline e_h$ from $q$ to $f(\hat p_h)$ either coincides with $\omega$ or intersects $\omega$ from below, so the remainder of $\overline e_h$ coincides with $\omega$, or lies above $\omega$.
    In particular, $\overline e_h$ lies above the point where $\omega$ is tangent to $D'_i$.
    However, any such line that also passes on or below $c_i$ must intersect $D'_i$, so $\overline e_h$ intersects $D'_i$ if $q$ lies on the arc of $\gamma$ between $\gamma_0$ and $\gamma_{\frac 12}$.
    
    The case where $q$ lies on the arc between $\gamma_{\frac 12}$ and $\gamma_1$ follows by the symmetry of $D'_i$, $f(\hat p_h)$, and $\gamma$ after reflection through the line through $f(\hat p_h)$ and $c_i$.
    Hence, the chord $\overline e_h$ intersects $D'_i$.
  \end {proof}  
    By Lemma~\ref {lem:intersects} the chord $\overline e_h$ intersects $D'_i$, which is a contradiction.
    We conclude that $\pi$ intersects $e$ at most six times.
    %
    We summarize our discussion in the following theorem. 
    
\begin{theorem}
  \label{thm:universal_happiness}
  Let \P be a portalgon with $n$ vertices.
  There exists a portalgon $\P' \equiv \P$ with $O(n)$ vertices that is $O(1)$-happy.
\end{theorem}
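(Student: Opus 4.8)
The plan is to take $\P'$ to be the portalgon induced by the intrinsic Delaunay triangulation $\T$ of the surface $\Sigma=\Sigma(\P)$: its fragments are the triangles of $\T$, and its portals glue triangle edges exactly as $\Sigma$ does. Such a $\T$ exists by the work of Bobenko and Springborn (the discussion around Lemma~\ref{lem:delaunay}), and since $\P'$ is just another representation of $\Sigma$ we get $\P'\equiv\P$ for free. It then remains to verify the two quantitative claims: that $\P'$ has $O(n)$ vertices and that it is $O(1)$-happy. Note that we only assert the existence of $\P'$, since it is not known how to compute the intrinsic Delaunay triangulation within any bound that depends on $n$ alone.

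To bound the happiness I would simply collect the results of this section. Fix two points of $\Sigma$ and let $\pi$ be a shortest path between them that, among all such shortest paths, crosses the edges of $\T$ a minimum number of times; the case analysis together with Lemmas~\ref{lem:inward}--\ref{lem:intersects} and Corollaries~\ref{cor:circum} and~\ref{cor:notInward} show that $\pi$ crosses every edge of $\T$ at most six times. Since each fragment of $\P'$ is a triangle bounded by three edges of $\T$, and every maximal subpath of $\pi$ inside such a fragment $T$ is a straight segment delimited by transversal crossings of $\pi$ with $\partial T$ -- except the at most one subpath carrying an endpoint of $\pi$, the at most three subpaths running along an edge of $T$, and singleton subpaths at corners of $T$ (which vanish once $\pi$ is taken to have minimum complexity) -- the path $\pi$ visits $T$ at most $O(1)$ times. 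Hence $\H(T)=O(1)$ for every fragment $T$, so $\H(\P')=O(1)$. Since happiness is a maximum over \emph{all} shortest paths between a pair of points, one should additionally check that the six-crossing bound is not special to the minimum-crossing representative; this is where I would be most careful, and it should follow from subpath optimality exactly as in the lemmas above.

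For the size bound I would show that $\T$ has only $O(n)$ triangles. Its vertex set is the set $V$ of vertices of $\Sigma$, so $|V|\le n$. Reading a CW-structure off $\P$ -- the vertices of $\Sigma$, its edges (of which there are $n-m/2$ after identifying the $m$ portal edges in pairs), and its fragments as $2$-cells -- gives $\chi(\Sigma)=|V|-(n-m/2)+|\F|\ge 2-n$, so the (possibly non-orientable) genus of $\Sigma$ is $O(n)$. Consequently any triangulation of $\Sigma$ on the vertex set $V$ has $O(|V|+\mathrm{genus})=O(n)$ triangles, and therefore $\P'$ has $O(n)$ vertices and $O(n)$ portals. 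I expect this Euler-characteristic bookkeeping -- getting the constants right, and handling surfaces with boundary and non-orientable surfaces -- together with the ``all shortest paths'' point above, to be the only places needing genuine care; the rest of the argument is pure assembly of what the section has already established.
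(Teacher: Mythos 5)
You take the same route as the paper: use the intrinsic Delaunay triangulation, and invoke the section's ``at most six crossings per edge'' bound to conclude $O(1)$-happiness. The paper itself leaves both the $O(n)$-size claim and the conversion from per-edge crossings to per-triangle visits implicit; your Euler-characteristic computation and explicit bookkeeping of endpoint, along-edge, and corner subpaths fill these in correctly, and the concern you flag about maximizing over \emph{all} shortest paths (the paper's argument fixes a minimum-crossing representative) is a genuine subtlety that the paper's own proof also does not explicitly address.
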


\section{Making portalgons happy}
\label{sec:Making_Portalgons_Happy}

In this section we study conditions for a fragment to be happy, and we
present a method to rearrange an unhappy fragment into an equivalent
one that is happy.

We start observing that the approach from Section~\ref{sec:Existence_of_Happy_Portalgons} is, in principle, constructive.
However, its running time depends on the number of edge flips required to reconfigure an initial triangulation of the fragments into the intrinsic Delaunay triangulation, and this number may not be expressible in terms of the input complexity. Whether the intrinsic Delaunay triangulation can be computed in a different way is an open question.
We note that when there is a bound on the minimum angle on the input triangles, the number of flips can be bounded~\cite {anglebounds}.

The rest of this section is devoted to an analysis of what we can do when there is no such minimum angle bound.
The main result  in this section is that we can reduce any portalgon for which the fragment graph contains at most one cycle to the case of a portalgon with just one fragment and a single portal, which we can rearrange to have only fragments of constant happiness (Theorem~\ref{thm:makemehappy}).

We begin with a simple observation that will be useful.

\begin{observation}
  \label{obs:single_portaledge_fragments}
  Any fragment $F$ with at most one portal edge $e$ is $2$-happy.
\end{observation}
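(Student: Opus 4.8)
The plan is to show that any shortest path enters the interior of $F$ at most twice, which gives $\H(F)\le 2$. First I would dispose of the case that $F$ has no portal edge at all: then $F$ is a connected component of $\Sigma$, every shortest path is either contained in $F$ or disjoint from it, and $\H(F)\le 1$. So assume $F$ has exactly one portal edge $\self e$, belonging to a portal $e=(\self e,\twin e)$; the fragment $F'$ containing $\twin e$ is necessarily different from $F$, since otherwise $F$ would have two portal edges. Every edge of $F$ other than $\self e$ lies on $\partial\Sigma$, so a path can enter or leave $F$ only by crossing the portal $e$ through $\self e$.

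The heart of the argument concerns a ``middle'' visit to $F$. Fix a shortest path $\pi$ between $p$ and $q$, split by the fragments into maximal subpaths, and among the maximal subpaths of $\pi$ that lie in $F$ let $V$ be one that is neither the first nor the last (which exists as soon as there are at least three such subpaths; if there are fewer we are already done). Since $V$ is neither the first nor the last $F$-subpath, its start is not $p$ and its end is not $q$, so $\pi$ enters $F$ at the start of $V$ and leaves $F$ at the end of $V$; hence both endpoints of $V$, say $\self a$ and $\self b$, lie on $\self e$. Because $F$ is isometric to a Euclidean polygon, the subpath $V$, which lies in $F$, is at least as long as the Euclidean distance from $\self a$ to $\self b$. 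On the other hand $V$ is a subpath of a shortest path, so its length equals $d(\self a,\self b)$, and $d(\self a,\self b)$ is at most that Euclidean distance because the straight segment from $\self a$ to $\self b$ along $\self e$ lies in $\Sigma$. Hence $V$ has exactly that length and must be the segment of $\self e$ between $\self a$ and $\self b$; in particular it does not meet the interior of $F$. Consequently the only maximal $F$-subpaths of $\pi$ that can reach the interior of $F$ are the first and the last one, so $\pi$ enters the interior of $F$ at most twice, and $\H(F)\le 2$.

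The point that needs the most care, and which I expect to be the main obstacle, is a degenerate possibility the argument above leaves open if one counts how often a shortest path ``goes through $F$'' by connected components of its intersection with the closed fragment: a middle subpath $V$ running along $\self e$ could then contribute a spurious component. To rule this out I would additionally assume $\pi$ has minimum complexity and reroute the portion of $\pi$ traversing $V$ along the identified segment of $\twin e$, which lies on $\partial F'$ and has the same length; the resulting path is still a shortest path, but its combinatorial description has two fewer crossings of $e$ and no new vertices, a contradiction. One must also dispatch the cases where an endpoint of $V$ coincides with a vertex of $\self e$ (so that what looks like a portal crossing is really a pass through a portalgon vertex) and where $p$ or $q$ lies on $\self e$ or in $F$; in each of these the same rerouting, or the fact that a shortest path cannot turn at a non-vertex point of an edge, finishes the argument.
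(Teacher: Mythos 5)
Your argument takes essentially the same route as the paper's: assume three or more components of $F\cap\pi$, extract a ``middle'' component whose two endpoints both lie on $\self e$, and apply the triangle inequality in $F$ to derive a contradiction. The paper simply asserts that this middle subpath must contain a point in the interior of $F$ (and then notes $\overline{rs}$ is strictly shorter), whereas you push the triangle-inequality step all the way through and conclude that the middle subpath is forced to be the segment of $\self e$ itself; this is a slightly more honest reading, since the paper's interior-point claim is exactly where the degenerate case you worry about lives.

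However, your patch for that degenerate case is not quite right. Rerouting the sub-segment of $\pi$ ``along $\twin e$ instead of $\self e$'' produces the same path in $\Sigma$: $\self e$ and $\twin e$ are identified, so the image of $\pi$, and hence $F\cap\pi$ and its number of connected components, is unchanged. Minimum complexity in the paper's sense is a property of the combinatorial representation, not of $c(F\cap\pi)$, so it does not directly help here. The cleaner way to dispatch the degenerate case is the ``no bending off a portal edge except at a vertex'' principle: if a maximal middle component $V$ of $F\cap\pi$ runs along $\self e$, the path enters $V$ from the interior of $F'$ and leaves back into $F'$, which forces bends at both endpoints of $V$; since a shortest path can only bend at a vertex, both endpoints of $V$ are endpoints of $\self e$, so $V=\self e$, and then the preceding component of $F\cap\pi$ has no place to end on $\self e$ disjoint from $V$, a contradiction. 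This closes the gap without appealing to any re-description of $\pi$.
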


\begin{proof}
  Assume by contradiction that $F$ is not 2-happy. Then
  there must be two points $p,q \in \P$ for which $\geod(p,q)$ crosses
  $e$ at least three times, and thus there is a subpath $\geod(r,s)$
  of $\geod(p,q)$ such that: $r$ and $s$ lie on $e$,
  $\geod(r,s) \subseteq F$, and $\geod(r,s)$ contains a point in the
  interior of $F$. By triangle inequality $\overline{rs}$ is then
  shorter than $\geod(r,s)$. Contradiction.
\end{proof}

  \label {sec:par}
  We now consider the situation in which there is only a single portal, and give a constructive result on how to compute an equivalent happy portalgon.
  First, we focus on the easier case where the two portal edges are parallel.
  In Appendix~\ref {app:nonpar} we extend this result to non-parallel edges.
  We also note there that if the angle between the two edges is at least a constant, then the fragment is already happy.
  Thus the near-parallel case is of most interest.
  
  \subsection{Single portal, parallel edges}
\subsubsection{Analysis}

  The happiness of a fragment with a single portal depends on the {\em shift} of the two
  (parallel) portal edges. The shift of two portal edges $\self e, \twin e$ can be
  defined as the distance between the two perpendiculars to the portal
  edges that go through the start vertices of $\self e$ and $\twin e$,
  respectively. In the following we assume without loss of generality
  that the portals are horizontal, thus the shift $\Delta$ is simply
  the difference in $x$-coordinate between the start vertices of the
  portal edges.
  Further, let $v$ denote the vertical distance between 
    $\self e$ and $\twin e$.

 \begin {lemma}
   \label {lem:h0}
   Let $F$ be a fragment with exactly two portal edges $\self e, \twin e$, which are parallel and belong to the same portal. If the shift $\Delta$ of $\self e, \twin e$ is
   $0$, then the fragment is $2$-happy.
 \end {lemma}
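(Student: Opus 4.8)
The plan is to show that any minimum-complexity shortest path crosses the portal at most once. Since the single portal of $F$ glues $F$ to itself, a path crossing this portal $c$ times is split into at most $c+1$ maximal subpaths lying inside $F$, so it suffices to prove $c\le 1$: then $c(F\cap\geod)\le 2$ for every minimum-complexity shortest path $\geod$, i.e.\ $\H(F)\le 2$. As in the proof of Observation~\ref{obs:single_portaledge_fragments}, we argue only about minimum-complexity shortest paths.

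First I would fix coordinates. Because $\self e$ and $\twin e$ are parallel, equally long, equally oriented, and have shift $0$, after a rigid motion we may assume $\self e$ is the horizontal segment from $(0,0)$ to $(L,0)$, $\twin e$ the horizontal segment from $(0,v)$ to $(L,v)$, and the portal identifies $(t,0)\in\self e$ with $(t,v)\in\twin e$ for all $t\in[0,L]$. Hence crossing the portal is a pure vertical translation, so it does not change the $x$-coordinate of a point of $F$; in particular, on the common seam $\self e\equiv\twin e$ of $\Sigma$, the point at parameter $t$ has $x$-coordinate $t$ whether we read it off $\self e$ or off $\twin e$. We may also assume $\Sigma=F/{\sim}$, since the interior of $F$ meets no other component of $\Sigma$ and pairs of points lying in other components contribute nothing to $\H(F)$.

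Now assume for contradiction that a minimum-complexity shortest path $\geod=\geod(p,q)$ crosses the portal at least twice; let $c_1,c_2$ be its first two crossings, at seam parameters $\chi_1$ and $\chi_2$, and let $\beta$ be the subpath of $\geod$ between $c_1$ and $c_2$ -- a curve contained in $F\subseteq\R^2$ both of whose endpoints lie on $\self e\cup\twin e$ and have $x$-coordinates $\chi_1$ and $\chi_2$, respectively. Since the $x$-coordinate is $1$-Lipschitz, the length of $\beta$ is at least $|\chi_1-\chi_2|$, and equality forces $\beta$ to be a horizontal segment; as its endpoints lie on the horizontal edges $\self e$ (at height $0$) and $\twin e$ (at height $v$), equality therefore forces $\beta$ to lie entirely along $\self e$ or entirely along $\twin e$. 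In that degenerate case $\geod$ enters and leaves $F$ along one portal edge between $c_1$ and $c_2$, and rerouting it to run along the opposite portal edge instead removes both crossings $c_1$ and $c_2$ without changing its length, contradicting that $\geod$ has minimum complexity. Hence the length of $\beta$ is strictly greater than $|\chi_1-\chi_2|$.

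Finally, replace the part of $\geod$ between $c_1$ and $c_2$ by the sub-arc of the seam $\self e\equiv\twin e$ joining $c_1$ to $c_2$: it lies in $\partial F$, has length exactly $|\chi_1-\chi_2|$, and connects $c_1$ to $c_2$, so splicing it into $\geod$ gives a path from $p$ to $q$ strictly shorter than $\geod$, contradicting that $\geod$ is a shortest path. Thus a minimum-complexity shortest path crosses the portal at most once, and $\H(F)\le 2$. The one step needing care is the degenerate subcase where $\beta$ runs along a portal edge -- this is exactly what minimum complexity rules out, just as in Observation~\ref{obs:single_portaledge_fragments} -- and everything else is elementary.
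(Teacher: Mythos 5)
Your proof is correct and takes essentially the same approach as the paper: with $\Delta=0$ the portal identification is a vertical translation, so the seam provides a strictly shorter shortcut between any two consecutive crossings, contradicting shortest-ness. You phrase the key inequality via 1-Lipschitz-ness of the $x$-coordinate rather than the paper's right-triangle argument, and you explicitly dispose of the degenerate case where the subpath runs along a portal edge (which the paper handles somewhat informally), but the underlying idea is identical.
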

    
    \begin {proof}
      Let $F$ be a fragment with two parallel portal edges
      $\self e, \twin e$ with a shift of $\Delta=0$, and recall that $\self e$
      and $\twin e$ are assumed to be horizontal.
      
      Suppose that there are two points $p, q \in F$ whose shortest
      path crosses $\self e$ (and thus also $\twin e$) twice, say first at
      point $r$ and then at $s$.
      Consider the subpath of the shortest path  from
      $\twin r$ to $s$.  This subpath must have length at least
      $|\twin rs|$.  Since $e$ and $e'$ are parallel and $\Delta=0$,
      $\triangle r \twin r s$ has a right angle at $r$ and
      $\twin rs$ is its hypotenuse, hence $|\twin rs| > |\self rs|$. Thus
      going from $r$ to $s$ along $e$ is shorter, reaching a
      contradiction.  It follows that $F$ is  $2$-happy.
    \end {proof}

    If $\Delta \ne 0$, a fragment might be happy or not, depending on several other circumstances: the length of the portals, the distance between the portal edges, and whether the boundary of $F$ interferes or not. 
    In the following we present a method to transform a fragment that is not happy into an equivalent portalgon that is happy.

    The general idea is to create a new portal by cutting $F$ through
    a line in the direction orthogonal to the line through the two
    start vertices of the portals.  
    
    First we present the idea for the case where $F$ is a parallelogram;
    we refer to its two parallel portal edges by $\self e$ and $\twin e$.
    
    Assume without loss of generality
    that $\self e$ is above $\twin e$, and that $\self e$ starts to the left of
    $\twin e$.  In this case, the slope of the line in the direction
    orthogonal to the portal start points is $z = -\Delta/v$.  
    We begin at the leftmost vertex of $\twin e$, and shoot a ray with
    slope $z$ in the inside of $F$ until we hit the boundary. 
    Every time the ray crosses the portal, we ``cut'' along this ray, creating a new portal along it.
    This results in several smaller fragments,
    which we then glue together again along the pieces of the original portals, into a fragment.
    The resulting fragment is a
    rectangle $F'$. 
    See Figure~\ref{fig-full:example_glue} for an
    illustration. Note that, by definition of $z$, this new fragment $F'$ now has shift zero. Hence:

   \begin {lemma}
     \label {lem:parallelogram}
     For any parallelogram with two parallel portal edges, there
     is an equivalent parallelogram that has $\Delta=0$ and  therefore is $2$-happy.
   \end {lemma}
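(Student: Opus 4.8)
The plan is to apply the cut-and-reglue construction described just above to $F$, obtaining an equivalent fragment $F'$ that is a rectangle -- hence, in particular, a parallelogram -- whose two portal edges have zero shift, and then to conclude with Lemma~\ref{lem:h0}. If $\Delta=0$ there is nothing to prove, so assume $\Delta\neq 0$.

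First I would check that the construction is well defined and terminates. Viewing $F$ with its portal $e$ glued as a flat cylinder, the ray shot from the leftmost vertex of $\twin e$ with slope $z$ is a straight geodesic; since $z$ depends on $\Delta\neq 0$ it is not parallel to the portal edges, so each time the ray reaches $\self e$ it re-enters through $\twin e$ shifted horizontally by a fixed nonzero amount, and after finitely many (roughly $L/|\Delta|$) such crossings it leaves $F$ through one of the two lateral edges of the parallelogram. The maximal sub-segments of the ray between consecutive crossings of the portal are pairwise interior-disjoint chords of $F$, so cutting $F$ along all of them decomposes it into finitely many convex sub-polygons.

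Next I would establish equivalence. Turning a chord of a fragment into a portal yields an equivalent portalgon, and gluing two fragments along a common portal edge (here, along pieces of the original portal $e$), provided it causes no overlap, also yields an equivalent portalgon -- these are exactly the operations discussed in the paragraph on equivalent portalgons. Performing the cuts above and then regluing the resulting pieces along the pieces of $\self e$ and $\twin e$ is a finite sequence of such operations, so the portalgon $\P'=(\{F'\},\{e'\})$ produced at the end satisfies $\P'\equiv\P$, where the new portal $e'$ is assembled from the (pieces of the) original lateral edges of $F$. Finally, the slope $z=-\Delta/v$ is chosen precisely so that, once the reglued pieces are reassembled, the images of the cut chords line up into two straight edges perpendicular to the portal edges; equivalently, one checks -- e.g.\ by computing the length of the two boundary loops and the distance between them on the cylinder $\Sigma(\P)$ -- that $F'$ is the axis-aligned rectangle with those two dimensions, with its horizontal edges identified without shift. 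Thus $F'$ is a parallelogram with $\Delta'=0$, and Lemma~\ref{lem:h0} shows that $F'$, and hence $\P'$, is $2$-happy.

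The step I expect to be the main obstacle is this last one: verifying carefully that the finitely many reglued pieces fit together, without overlap, into a single connected rectangle, and that the resulting shift is exactly $0$ rather than merely smaller. This amounts to a direct but slightly fiddly computation with the coordinates of the cut points along $\self e$ and $\twin e$ (illustrated in Figure~\ref{fig-full:example_glue}), and it is there that the specific value $z=-\Delta/v$ is genuinely used.
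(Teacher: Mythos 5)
Your approach is the same as the paper's: the lemma is stated as an immediate consequence of the cut-and-reglue construction sketched in the preceding paragraph, and you reproduce that construction, adding useful detail on termination and on why each cut/reglue step preserves equivalence. The plan is sound, and the reduction to Lemma~\ref{lem:h0} at the end is exactly what the paper intends.

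There is, however, a genuine confusion about what the construction produces. After cutting $F$ along the ray and regluing the pieces along $\self e$ and $\twin e$, the old portal $e$ becomes an interior seam and disappears; the two sides of the cut are the \emph{new} portal $e'$, and the choice $z=-\Delta/v$ is precisely what makes that identification have zero shift. The two lateral (non-portal) edges of $F$ were never portals and stay boundary: they become the two sides of the rectangle $F'$ perpendicular to the cut. You assert the opposite twice --- that ``the new portal $e'$ is assembled from the (pieces of the) original lateral edges of $F$'', and that ``the images of the cut chords line up into two straight edges perpendicular to the portal edges'' (which would make the cut boundary and the lateral edges portal). This is backwards, and it is even inconsistent with your own equivalence argument, which correctly treats the cut chords as the portals that survive after the $\self e/\twin e$ pieces are glued away. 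The final conclusion (rectangle, zero shift, hence $2$-happy) is happily insensitive to the mislabeling, but the slip would matter anywhere the identity of $e'$ is actually used, e.g.\ in Lemma~\ref{lem:compute_parallelogram}. As a very minor aside, the ray advances along the portal by $(\Delta^2+v^2)/|\Delta|$ per crossing, so the crossing count is closer to $L|\Delta|/(\Delta^2+v^2)$ than to $L/|\Delta|$; only finiteness matters, so this does not affect anything.
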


\begin{figure}[tb]
  \centering
     \includegraphics {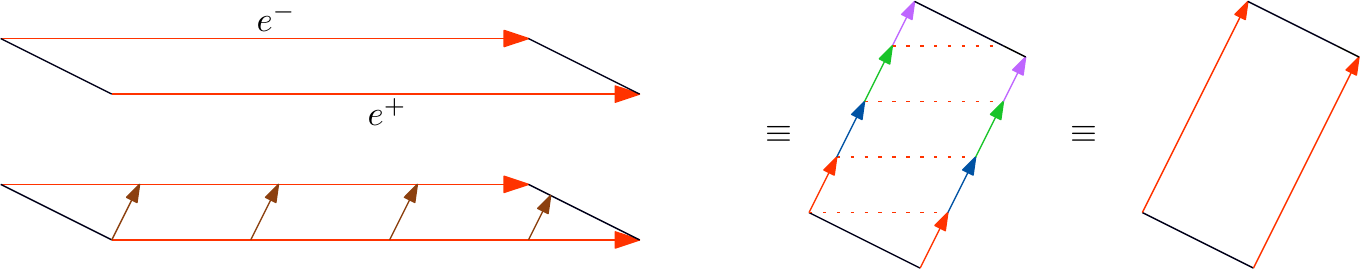}
  \caption{Left: a fragment with two parallel edges and non-zero shift. Center: result of cutting the fragment along a perpendicular ray, resulting in a new fragment with several portals, which is equivalent to a fragment with one portal and zero shift (right).}
  \label{fig-full:example_glue}
\end{figure}

Note that for a fragment with two parallel portals and non-zero shift, there is a unique 
equivalent fragment with zero shift, which is the one obtained by cutting along the perpendicular ray.
However, if $F$ is not a parallelogram, it may occur that when
gluing together the new smaller fragments we obtain a non-simple
polygon. Therefore we may not be able to transform $F$ into a single
equivalent $h$-happy fragment, as shown in the example in Figure~\ref{fig-full:bulge}. In that case, cutting along the perpendicular ray produces a non-simple fragment.

 \begin{figure}[tb]
  \centering
      \includegraphics {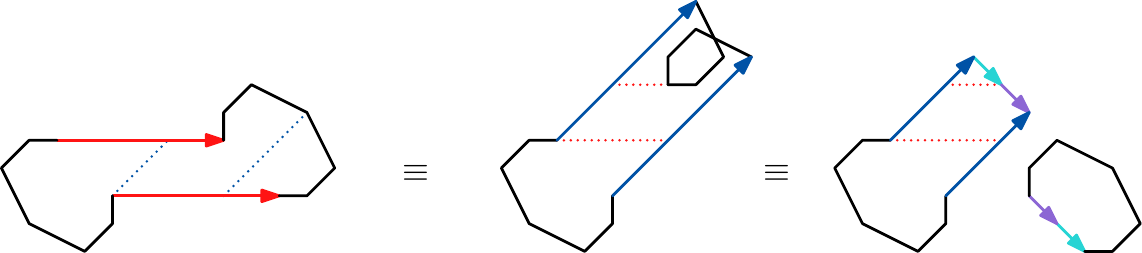}
      \caption{Example of a fragment (left) where the technique of cutting along the perpendicular ray produces a non-simple polygon (center). However, this can be fixed by using one more fragment (right).}
      \label{fig-full:bulge}
\end{figure}

\begin {observation}
  There exists a fragment with two horizontal portal edges for which
  there is no equivalent single fragment with $\Delta=0$.
\end {observation}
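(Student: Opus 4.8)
The plan is to take for the fragment the shape of Figure~\ref{fig-full:bulge}(left) and argue in two steps: first, that \emph{any} equivalent portalgon consisting of a single fragment whose one portal has two horizontal portal edges with $\Delta=0$ must be exactly the fragment $F'$ obtained from $F$ by the perpendicular‑ray cut‑and‑reglue construction described just above Lemma~\ref{lem:parallelogram}; and second, that for this particular $F$ that construction produces a non‑simple polygon, which is therefore not a valid fragment. Putting the two together yields the statement.

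For the first step I would make precise the uniqueness already asserted in the discussion above. Let $\Delta\neq 0$ be the shift of $F$ and $v$ its vertical distance, let $\Sigma=\Sigma(F)$, and suppose $G$ is an equivalent single fragment whose two portal edges are horizontal with shift $0$ and vertical distance $v'$. Working in the universal cover $\hat\Sigma$ of $\Sigma\setminus V$, where $V$ is the vertex set, the flat structure has a developing map into $\R^2$, unique up to post‑composition with a Euclidean isometry. Unrolling the portal of $F$ shows that the deck transformation $T$ generating the loop around the portal develops, in the chart provided by $F$, to the translation $x\mapsto x+(\Delta,v)$, while in the chart provided by $G$ it develops to $x\mapsto x+(0,v')$. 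Since the two charts differ by a single Euclidean isometry $A$, the translation $x\mapsto x+(0,v')$ is the $A$‑conjugate of $x\mapsto x+(\Delta,v)$, so $(0,v')$ is the image of $(\Delta,v)$ under the rotational part of $A$; hence $v'=\sqrt{\Delta^2+v^2}$, and pulling the two horizontal portal edges of $G$ back to the chart of $F$ they become segments perpendicular to the direction $(\Delta,v)$. A fundamental domain for a translation whose two identified boundary edges are perpendicular to it is determined up to applying that translation, and equals the region of $F$ (with its identifications) cut off by two such perpendicular lines one period apart --- which is precisely the fragment $F'$ of the construction, different choices of the starting line giving only relabelings of the same fragment. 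Hence $G=F'$.

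For the second step I would fix $F$ to be, concretely, a long thin parallelogram with horizontal portal edges $\self e$ (top) and $\twin e$ (bottom) of length $\ell$, small vertical distance $v$, and shift $\Delta$ with $\ell\gg\Delta\gg v>0$, to which a thin spike is attached, pointing outward, along a short piece near the middle of one of the two slanted sides; this is the shape of Figure~\ref{fig-full:bulge}(left). The cut rays have the steep slope $-\Delta/v$; cutting $F$ along them and regluing the resulting slabs along the pieces of $\self e$ and $\twin e$ shears the parallelogram part into a rectangle of dimensions $(\ell v/\sqrt{\Delta^2+v^2})\times\sqrt{\Delta^2+v^2}$ --- the dimensions being forced by preservation of area, $\ell v=\ell'v'$, together with $v'=\sqrt{\Delta^2+v^2}$ --- while the spike, rigidly attached to a point of the old boundary, is carried by the same shear so that it comes to lie over the interior of this rectangle, as in Figure~\ref{fig-full:bulge}(center). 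Thus $F'$ is not embedded in the plane, hence not a simple polygon, and by the first step no equivalent single fragment with $\Delta=0$ exists.

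The main obstacle is the uniqueness step: one has to check that the developing‑map argument survives the cone points at the polygon vertices --- it does, because one argues in the universal cover of $\Sigma\setminus V$ and only uses the germ of the developing map along a neighborhood of the portal, which already pins down the comparison isometry $A$ --- and that the phrase ``a perpendicular‑edged fundamental domain for a translation is unique up to that translation'' is applied correctly, including the observation that every admissible position of the cut line yields the same fragment up to relabeling. The second step is then routine planar bookkeeping, the only care being to make the spike long enough and to place it so that the shear unavoidably folds it back over the strip.
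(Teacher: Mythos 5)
Your overall approach is the same one the paper uses implicitly: first justify that the rectangle obtained by the perpendicular‐ray cut is the \emph{only} candidate for an equivalent single fragment with $\Delta=0$, and then exhibit a concrete $F$ for which that rectangle is non‐simple (Figure~\ref{fig-full:bulge}). Your first step is actually a nice, more careful justification of what the paper merely asserts (``there is a unique equivalent fragment with zero shift''): identifying the holonomy translation up to conjugation by the comparison isometry $A$, concluding $v'=\sqrt{\Delta^2+v^2}$ and that the cut must develop perpendicular to $(\Delta,v)$, and observing that moving the starting point of the cut by one period only translates the resulting fragment, is exactly the right way to pin this down.

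The gap is in the second step. A thin spike attached to the \emph{middle of a slanted side} and pointing \emph{outward} cannot be ``carried over the interior of the rectangle.'' The slanted sides of the parallelogram belong entirely to the first and last slabs of the cut, and those two slabs become the two non‑portal sides of $F'$. The spike is rigidly attached to one of these slabs, so after the piecewise translation it still sits on a non‑portal side of $F'$ and still points to the outward side of that boundary --- the notions of ``outward from $F$'' and ``outward from $F'$'' agree there, since the slab carrying the spike is moved rigidly and the interior lies on the same side of the boundary edge. Concretely, in the unrolled picture the copies of $F$ sit at translations $k(\Delta,-v)$, and a spike emanating outward from the middle of a slanted side always lies on the opposite side of that side from every translated copy, so no overlap can occur. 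To make the construction fail one needs the non‑portal boundary to wrap over the level of a portal edge (a ``lobe'' near a portal endpoint that protrudes above $\self e$ or below $\twin e$), so that when the surface is unrolled the lobe of one copy lands on the parallelogram part of an adjacent copy; this is what Figure~\ref{fig-full:bulge} actually depicts. With the example replaced by such a lobe (or simply by citing the figure as the paper does), the argument goes through.
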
 
%

Fortunately, we can transform any fragment with two parallel portal edges into a constant
number of $h$-happy fragments.
    First, we make a useful observation.

    \begin {lemma}
      \label {lem:cat}
      Consider a fragment $F$ with two horizontal portals $\self e$
      and $\twin e$.
      Let $m$ be the line through the start points of $\self e$ and       $\twin e$.
      %
      %
        If there is no line
      segment parallel to $m$ from a point on $\self e$ to a point on
      $\twin e$ whose interior lies in the interior of $F$, then $F$ is
      $5$-happy.
    \end {lemma}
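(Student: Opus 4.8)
The plan is to reduce to the case where the portalgon is the single fragment $F$ glued to itself along its one portal $e=(\self e,\twin e)$ (any other fragment carries no portal edge, hence forms a separate component that never meets $F$), and to prove that a minimum‑complexity shortest path $\pi$ in $\Sigma=F/{\sim}$ crosses $e$ at most four times, which yields $c(F\cap\pi)\le 5$ (the extra unit of slack absorbing paths that start or end on $e$ or pass through a portal endpoint). The gluing of $\twin e$ onto $\self e$ is a translation $\vec w$ whose direction is exactly that of $m$, since it sends the start vertex of one portal edge to that of the other, and whose length equals the common distance between glued points. I would unfold along $e$, building the line of copies $\hat\Sigma=\dots,F_{-1},F_0,F_1,\dots$ with $F_i=F+i\vec w$ glued consecutively along copies $L_0,L_1,\dots$ of the portal edge, and lift $\pi$ to a shortest path $\hat\pi$ in $\hat\Sigma$.

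The first real step is to show $\hat\pi$ traverses the copies monotonically. If $\hat\pi$ went $F_i\to F_{i+1}\to F_i$, it would enter and leave $F_{i+1}$ through the same edge $L_i$; exactly as in the proof of Observation~\ref{obs:single_portaledge_fragments}, that detour could be replaced by a subpath running along $L_i$ of length equal to the distance between the two crossing points, which is strictly shorter unless the detour already ran along the edge — but in the latter case $\pi$ would not be of minimum complexity. Hence $\hat\pi$ visits $F_0,F_1,\dots,F_N$ in order and crosses each intermediate $L_0,\dots,L_{N-1}$ exactly once, so $N$ is precisely the number of crossings of $e$ by $\pi$, and it remains to bound $N$. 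Here the hypothesis enters. Since $\self e$ and $\twin e$ differ by the translation $-\vec w$, they have the same orthogonal projection onto the line perpendicular to $m$, and the segment from any $p_1\in\self e$ to its glued partner on $\twin e$ is parallel to $m$ and of length $|\vec w|$. The hypothesis says none of these "identification segments'' has its interior inside $F$; translated to $\hat\Sigma$ this means that on every copy $F_i$ the $\vec w$‑direction is blocked, so $\hat\pi\cap F_i$ cannot run straight (even piecewise) from $L_{i-1}$ to $L_i$ and must detour around the "blocking'' part of $\partial F_i$ — and the blocking obstacles on consecutive copies are exact $\vec w$‑translates of one another, hence occupy the same range in the direction orthogonal to $m$.

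The main obstacle is to convert this picture — a shortest path weaving through a stack of congruent obstacles, all aligned in the $m$‑perpendicular direction — into the constant bound $N\le 4$. I would proceed in the spirit of the Delaunay analysis of Section~\ref{sec:Existence_of_Happy_Portalgons}: track the positions of the crossing points $a_0,\dots,a_{N-1}$ along $\self e$, establish an analogue of the one‑local‑minimum property of Lemma~\ref{lem:inward} for this sequence (using that $\hat\pi$ is a shortest path and that distinct lifts of $e$ are pairwise disjoint), and then show that after a constant number of crossings any further crossing would force $\hat\pi$ to re‑cross some $L_i$, contradicting monotonicity. Carefully handling the degenerate configurations — where $F$ is not contained between its two portal edges, so an identification segment may leave and re‑enter $F$; where $\pi$ meets $e$ non‑transversally or through a portal endpoint; and where $\hat p$ or $\hat q$ lies inside one of the thin collars bordering an $L_i$ — is the delicate part of the argument. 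By contrast, the reduction, the unfolding, and the monotone‑traversal step are routine.
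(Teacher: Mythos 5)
Your setup---reduce to a single fragment glued to itself along one portal, unfold along $e$ to get the line of copies $F_i = F + i\vec w$, lift $\pi$, and show the lift traverses the copies monotonically---is sound, and you read the hypothesis correctly: the absence of an $m$-parallel segment from $\self e$ to $\twin e$ with interior in $F$ means every copy $F_i$ blocks the $\vec w$-direction, and the blockers on consecutive copies are exact $\vec w$-translates of one another.

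The gap is precisely the step you flag as ``the main obstacle'': turning this picture into the bound of at most four crossings. You leave it as a plan---``proceed in the spirit of the Delaunay analysis,'' establish a one-local-minimum analogue of Lemma~\ref{lem:inward}, then force a re-crossing. That plan does not transfer. The argument in Section~\ref{sec:Existence_of_Happy_Portalgons} rests entirely on the empty-circumcircle property (Corollaries~\ref{cor:circum} and~\ref{cor:boundaryChords}): it needs a vertex-free disk $C(T)$ through which the lifted shortest path is a single straight segment and whose boundary is a union of chords and circular arcs. The hypothesis of Lemma~\ref{lem:cat} provides no such disk; the blocking obstacles may hug the portal edges arbitrarily closely and there is no neighbourhood in which straight-line behaviour of $\hat\pi$ is guaranteed. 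You also never justify why five crossings would force a re-crossing, which is the contradiction your sketch aims for. So the quantitative heart of the lemma is left unproved, and the route you suggest toward it would need an essentially new idea.

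The paper's proof is short, local, and never unfolds. Assume $\pi$ crosses the portal at least five times, at $p_1,\dots,p_5$. Any vertex of $F$ on one of the two middle traversals $\geod(\twin{p_2},\self{p_3})$ or $\geod(\twin{p_3},\self{p_4})$ would be shared with an adjacent component of $\pi\cap F$, which a shortest path cannot do; so both traversals are straight segments. Because the gluing of $\twin e$ onto $\self e$ is a translation and a shortest path does not bend when crossing a portal, those two segments are parallel and of equal length, so $\twin{p_2}\,\self{p_3}\,\self{p_4}\,\twin{p_3}$ is an empty parallelogram inside $F$. Its diagonal $\self{p_3}\,\twin{p_3}$ joins a point of $\self e$ to its corresponding point on $\twin e$ and is therefore parallel to $m$---exactly the segment the hypothesis forbids. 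This local observation is what your unfolding scaffolding would ultimately have to rediscover; without it the monotonicity step, while correct, does not get you to the constant.
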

    
    \begin{proof}
      \begin{figure}[tb]
        \centering
        \includegraphics{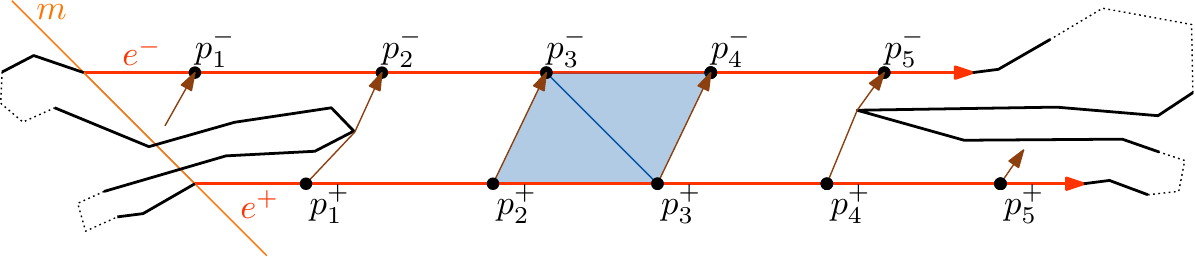}
        \caption{Illustration of the proof of Lemma~\ref{lem:cat}; $\pi$ is shown in brown. The
          segment $\self {p_3}\twin {p_3}$ is parallel to $m$ and is fully contained in the fragment.}
        \label{fig-full:parallelogram_m}
      \end{figure}

      Suppose that $F$ is not $5$-happy. 
      Then there is a shortest
      path $\geod$ whose intersection with $F$ consists of at least
      six connected components, and thus $\pi$ crosses the portal at
      least five times, in points
      $p_1,\ldots,p_5$. See
      Figure~\ref{fig-full:parallelogram_m}. Since $\geod$ is a shortest path, these components cannot share vertices (otherwise we could
      simply shorten the path). 
      It follows that $\geod(\twin {p_2},\self {p_3})$ must be a line segment (otherwise, if it would include a vertex from $F$, such a vertex would be shared with $\geod(\twin {p_1},\self {p_2})$ or with $\geod(\twin {p_3},\self {p_4})$).
      Analogously, $\geod(\twin {p_3},\self {p_4})$ must be a line segment.
      Furthermore, $\geod(\twin {p_2},\self {p_3})$ and $\geod(\twin {p_3},\self {p_4})$ are parallel, and thus $\twin {p_2} \self {p_3} \self {p_4} \twin {p_3}$
      is an empty parallelogram. Observe that, since $\self {p_3}$ and $\twin {p_3}$ are corresponding points (i.e., twins), 
      the segment between them has the same slope as $m$.
      Therefore, there is a line segment parallel to $m$ whose interior is inside $F$.
    \end{proof}

    If there is a segment parallel to $m$ contained in $F$ connecting $\self e$ to $\twin e$, then in particular there is also a leftmost and a rightmost such segment (we consider here $F$ as a closed set). 
    These segments, together with the pieces
    of portal edges, form an empty parallelogram. 
    Let $Z$ be this
    parallelogram, bounded by two horizontal edges and two edges parallel to $m$.

    \begin{observation}
      \label{obs:shortest_path_component}
      Let $F$ be a simple fragment, let $\overline{ab}$ be a line
      segment inside $F$. A single connected component $\geod \cap F$
      of a shortest path $\geod$ can intersect $\overline{ab}$ in at
      most one segment.
    \end{observation}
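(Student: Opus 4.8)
The plan is to argue by contradiction using the same ``shortcut a shortest path'' principle already exploited in the proofs of Observation~\ref{obs:single_portaledge_fragments}, Lemma~\ref{lem:h0} and Lemma~\ref{lem:cat}. Write $\gamma$ for the connected component of $\geod\cap F$ under consideration; it is the restriction of $\geod$ to a subinterval of its domain, hence itself a shortest path in $\Sigma$ between its two endpoints. Since $\gamma$ is the continuous image of a compact interval and $\overline{ab}$ is closed, the set $\gamma\cap\overline{ab}$ is compact, so if it fails to be a single segment then it has (at least) two distinct connected components $C_1\neq C_2$. I would then pick points $y_1\in C_1$ and $y_2\in C_2$; then $y_1\neq y_2$, the subpath $\gamma[y_1,y_2]$ is a subpath of $\geod$, and the straight segment $\overline{y_1y_2}$ lies in $\overline{ab}$ by convexity, hence inside $F$, hence is a legitimate path in $\Sigma$ of length $\|\overline{y_1y_2}\|>0$.

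Next I would split into two cases according to the length of $\gamma[y_1,y_2]$. If it is strictly longer than $\|\overline{y_1y_2}\|$, then replacing the subpath $\gamma[y_1,y_2]$ inside $\geod$ by the segment $\overline{y_1y_2}$ produces a strictly shorter path in $\Sigma$ with the same endpoints as $\geod$, contradicting that $\geod$ is a shortest path. Otherwise $\gamma[y_1,y_2]$ has length exactly $\|\overline{y_1y_2}\|$; since a rectifiable curve whose length equals the Euclidean distance between its endpoints must coincide with the straight segment joining them, this forces $\gamma[y_1,y_2]=\overline{y_1y_2}\subseteq\overline{ab}$. But then $y_1$ and $y_2$ are joined by a path lying entirely in $\gamma\cap\overline{ab}$, so they lie in the same connected component, contradicting $C_1\neq C_2$. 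In either case the assumption fails, so $\gamma\cap\overline{ab}$ is a single segment (possibly a single point).

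I do not expect a genuine obstacle here: the argument is a direct instance of local shortcutting, and the only two points requiring a moment's care are (i) recording that a compact connected subset of a line is a segment or a point, so that ``not a single segment'' really does yield two distinct components, and (ii) the elementary fact that a curve realizing the Euclidean distance between its endpoints is the straight segment between them (obtained, e.g., by projecting the curve onto the direction $y_2-y_1$). As an alternative to case~(ii) one could instead invoke subpath optimality to note that $\gamma[y_1,y_2]$ is a shortest path within the simple polygon $F$ and combine this with the uniqueness of shortest paths in a simple polygon, but the direct length comparison above avoids even that.
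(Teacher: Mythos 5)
The paper states this observation without any proof at all, evidently treating it as immediate, so there is no official argument to compare against. Your proof is correct and is exactly the kind of local shortcutting argument the authors use for the analogous facts they do prove (Observation~\ref{obs:single_portaledge_fragments}, Lemma~\ref{lem:h0}, Lemma~\ref{lem:cat}): you use subpath optimality to reduce to a subpath $\gamma[y_1,y_2]$, note that $\overline{y_1y_2}\subseteq\overline{ab}\subseteq F$ is an available shortcut, and handle the equality case via the standard fact that a curve realizing the Euclidean distance between its endpoints is that straight segment, forcing $y_1,y_2$ into one component. Both edge cases you flag — that a compact connected subset of a line is a (possibly degenerate) segment, and the equality case — are the only places needing care, and you handle them correctly; the proof is a complete and appropriately elementary justification for the statement the paper leaves implicit.
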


    In particular, Observation~\ref{obs:shortest_path_component}
    implies that the two endpoints $p$ and $q$ of a maximal component
    of $\geod \cap F$ cannot lie on the same portal edge $e$ unless
    $\geod(p,q) = \overline{pq}$.

   \begin {lemma}
     \label {lem:happy_portalgon_exists_parallel}
     Let \P be a portalgon with one fragment $F$ with $n$
     vertices, and one portal whose edges are parallel. There exists
     a $5$-happy portalgon $\P'$ equivalent to \P consisting of at
     most three fragments and total complexity $O(n)$.
   \end {lemma}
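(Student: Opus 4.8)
The plan is to argue by a dichotomy on whether $F$ contains a ``twin segment'' in its interior. Call a segment a \emph{twin segment} if it connects a point $\self p$ on $\self e$ to its partner $\twin p$ on $\twin e$; since the two portal edges are parallel, the portal identification is the translation carrying $\self e$ onto $\twin e$, which is parallel to $m$, so twin segments are exactly the segments parallel to $m$ from $\self e$ to $\twin e$. If no twin segment has its interior inside $F$, then by Lemma~\ref{lem:cat} the fragment $F$ is already $5$-happy and we take $\P'=\P$ (one fragment, $n$ vertices). Otherwise, let $\ell$ and $r$ be, respectively, the leftmost and rightmost twin segments with interior inside $F$; together with the sub-edges of $\self e$ and $\twin e$ between their endpoints, $\ell$ and $r$ bound the parallelogram $Z$ introduced just before this lemma. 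Cutting $F$ along $\ell$ and along $r$ (turning each chord into a portal, as in Section~\ref{sec:defs}) yields three fragments: the two ``ears'' $F_L$ (the part of $F$ outside $\ell$) and $F_R$ (the part outside $r$), and the parallelogram $Z$ in the middle; $\ell$ becomes a portal joining $F_L$ to $Z$ and $r$ a portal joining $Z$ to $F_R$. Some of these fragments may be empty or degenerate, in which case fewer remain.

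For the middle piece, $Z$ is a parallelogram whose two horizontal edges form a portal (the restriction of the original one), so by Lemma~\ref{lem:parallelogram} it is equivalent to a zero-shift parallelogram $Z'$, i.e.\ a rectangle, which by Lemma~\ref{lem:h0} is $2$-happy; the transformation of $Z$ into $Z'$ carries along the gluings of $Z$ to $F_L$ and $F_R$. For the ears, observe that because $\ell$ is the leftmost twin segment inside $F$, no twin segment connecting the two sub-edges of $\self e$ and $\twin e$ that lie on $\partial F_L$ can have its interior inside $F_L$: such a segment would be a twin segment of $F$ strictly to the left of $\ell$. Hence the hypothesis of Lemma~\ref{lem:cat} is met for $F_L$ with respect to its internal portal, and symmetrically for $F_R$. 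Combining Lemma~\ref{lem:cat}'s argument with Observation~\ref{obs:shortest_path_component} (a single component of a shortest path meets any segment, in particular any portal edge, in at most one sub-segment, so in particular cannot enter and leave a fragment through the same portal edge unless it runs straight along it) bounds by a constant the number of times any shortest path in $\P'$ can pass through $F_L$ or through $F_R$, so $\P'$ is $5$-happy.

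Finally, $F_L$ and $F_R$ together contain all original vertices lying outside $Z$ plus only the $O(1)$ endpoints of $\ell$ and $r$, while $Z'$ has $O(1)$ vertices; thus $\P'$ has at most three fragments and total complexity $O(n)$. I expect the last step to be the main obstacle: the cuts leave each of $F_L$, $F_R$, and $Z'$ with a handful of portal edges (an internal portal plus the cut edge(s)), so Lemmas~\ref{lem:h0} and~\ref{lem:cat} do not apply to them verbatim. The work is to re-run their arguments in this setting and to combine them correctly --- using that crossing the internal portal of an ear keeps a path inside that ear, that Observation~\ref{obs:shortest_path_component} prevents a shortest path from entering and leaving an ear through the cut edge without a detour, and that the $O(1)$-happy rectangle $Z'$ separates the two ears and so caps the number of times a shortest path can oscillate between them --- to obtain the uniform happiness bound claimed.
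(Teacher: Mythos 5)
Your decomposition is the same as the paper's: locate the leftmost and rightmost twin segments (parallel to $m$, connecting $\self e$ to $\twin e$ inside $F$), cut off the parallelogram $Z$ they bound, rectify $Z$ via Lemma~\ref{lem:parallelogram} and Lemma~\ref{lem:h0}, and treat the two remaining pieces using Lemma~\ref{lem:cat}. So you have found the paper's route, and you have correctly identified where it needs work.

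But the step you flag as ``the main obstacle'' is not a loose end -- it is the entire content of the lemma, and your proposal does not close it. After the cut, the ear $F_L$ carries three (pieces of) portal edges: the sub-edges of $\self e$ and $\twin e$ that lie on it (which form a self-loop portal of $F_L$) and the new portal edge $\ell$ linking $F_L$ to $Z$. Lemma~\ref{lem:cat} is stated for a fragment whose only portal edges are the two twins of one portal, and its proof traces three consecutive twin-to-twin sub-segments of a single in-and-out path; once the path can also enter and leave $F_L$ through the $\ell$-portal, that argument no longer applies. Saying that Observation~\ref{obs:shortest_path_component} plus the $2$-happiness of $Z'$ ``caps the oscillation'' does not by itself give a constant: the paper instead goes back to the \emph{uncut} fragment $F$, decomposes $\geod\cap F$ into maximal components, classifies them (containing $s$, containing $t$, or type-(iii) ``$\self e$ to $\twin e$''), and then shows via Observation~\ref{obs:shortest_path_component} and a monotonicity argument that at most two type-(iii) components can meet $T$ -- one whose endpoint $q_i$ lies in $B$ (forced through the vertex $\ell$) and at most one more, the last type-(iii) component. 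That counting is what yields the $4$-happiness of the ears, and it is absent from your proposal.

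A second, smaller gap: cutting along the extremal segments $\ell$ and $r$ need not produce three \emph{simple} pieces, since those segments are tangent to reflex vertices of $F$ and can touch $\partial F$ in their interiors. The paper first splits into up to seven pieces under a general-position assumption, and then obtains exactly three simple pieces by shifting $m_\ell$ and $m_r$ inward by a small $\eps$; your direct cut skips this and would need the same fix.
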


    \begin {proof}
      Assume without loss of generality that both portal edges $\self e$ and
      $\twin e$ are horizontal and oriented left-to-right. We now argue
      that when no three vertices of $F$ are colinear, and $F$ is not
      already $5$-happy, we can split $F$ into at most seven $4$-happy
      fragments of total complexity $O(n)$. Refer to
      Figure~\ref{fig-full:socks}. Finally, we show how to reduce the
      number of fragments to three, while remaining $4$-happy, even
      without the general position assumption.

      Let $m$ be the line through the start points of $\self e$ and
      $\twin e$.  By Lemma~\ref{lem:cat}, if there is no translate of
      $m$ whose intersection with $F$ contains a segment connecting
      $\self e$ to $\twin e$, $F$ is already $5$-happy. Let $m_\ell$ be the
      leftmost such translate of $m$ and $m_r$ the rightmost such
      translate; $m_\ell$ contains a vertex $\ell$ of $F$ and $m_r$
      contains a vertex $r$ of $F$ (possibly, $\ell$ or $m$ is an
      endpoint of $\self e$ or $\twin e$).  Let $\self a$ and $\twin a$ be the
      intersection points of $m_\ell$ with $\self e$ and $\twin e$, and let
      $\self b$ and $\twin b$ be the intersection points of $m_r$ with $\self e$
      and $\twin e$.  We cut the parallelogram
      $Z = \self a \self b \twin a \twin b$ from $F$, which splits $F$ into at most
      seven fragments (since, by general position,
      $\overline{\self a\twin a}$ and $\overline{\self b\twin b}$ contain at
      most two reflex vertices each). We now transform $Z$ into a
      $2$-happy fragment using Lemma~\ref{lem:parallelogram}.
      %
      Let $T$ and $B$ be the fragments containing the starting points
      of $\self e$ and $\twin e$, respectively. We argue that $T$ is
      $4$-happy. The argument that $B$ is $4$-happy is symmetric. The
      same holds for the fragments containing the endpoints of $\self e$ and
      $\twin e$. Any other fragments (if they exist) contain only one
      portal edge and are thus $2$-happy by
      Observation~\ref{obs:single_portaledge_fragments}.

  \begin{figure}[tb]
\centering
\includegraphics{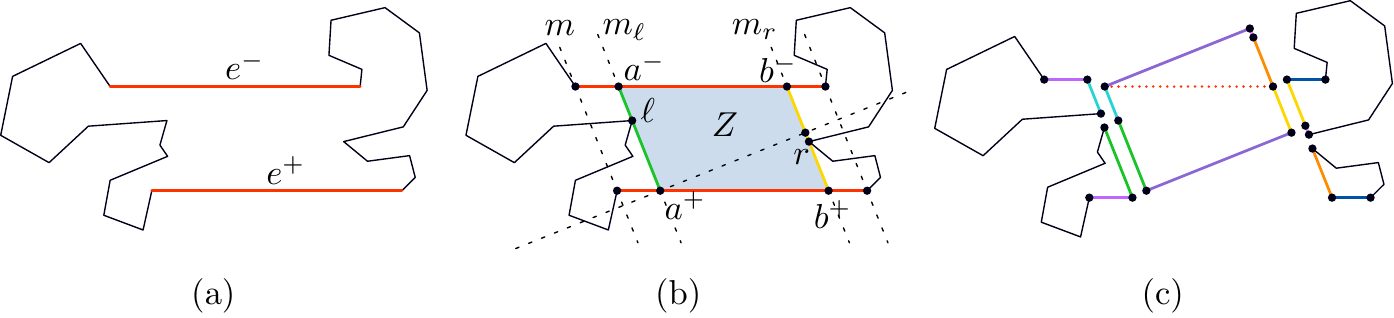}

\caption
{
        (a) Fragment $F$ with two parallel portals.
        (b) The lines $m_\ell$ and $m_r$ intersecting $\self e$ and $\twin e$ in
        $\self a,\twin a$ define a parallelogram $Z= \self a \self b \twin a \twin b$ that
        splits $F$ into at most seven sub fragments.
        (c) The resulting set of $5$-happy fragments. Note that it is
        possible to reduce the number of fragments by shifting $m_\ell$
        slightly to the right and $m_r$ slightly to the left.
      }
\label{fig-full:socks}
\end{figure}

      Consider the maximal connected components of a shortest path
      $\geod = \geod(s,t)$ with $F$. By
      Observation~\ref{obs:shortest_path_component} such a component
      either: (i) contains $s$, (ii) contains $t$, or (iii) connects a
      point $p_i$ on $\self e$ to a point $q_i$ on $\twin e$. Again by
      Observation~\ref{obs:shortest_path_component} each such
      component can intersect $\overline{\self a\twin a}$ at most once, so
      each such component can intersect $T$ at most once.

      We now further classify the type (iii) components into three
      types, depending on whether $p_i$ lies on the part of $\self e$ in $T$
      and whether $q_i$ lies on the part of $\twin e$ in $B$.

      If $p_i$ lies outside $T$, then
      Observation~\ref{obs:shortest_path_component} implies that
      $\geod(p_i,q_i)$ does not intersect (the interior of) $T$ at all
      since $\geod(p_i,q_i)$ would have to intersect
      $\overline{a\ell}$ twice.

      If $p_i$ lies in $T$ and $q_i$ lies in $B$, then
      Observation~\ref{obs:shortest_path_component} implies that
      $\geod(p_i,q_i)$ contains point $\ell$. Hence, there can be at
      most one such component of this type that intersects $T$.

      If $p_i$ lies in $T$ but $q_i$ lies outside of $B$, we have that
      $q_i = \twin t$ for some point $\self t$ on $e$. 
      If the component
      starting in $\self t$ is again of type (iii) (i.e., $\self t=p_{i+1}$) it
      cannot intersect $B$ as this would imply that
      $\geod(p_{i+1},q_{i+1})$ intersects $\geod(p_i,q_i)$. Nor can it
      intersect $T$ (since $q_{i+1}$ lies on $\twin e$ it would have
      to intersect $\overline{a\ell}$ twice). The same argument holds
      for any component on $\geod(p_j,q_j)$ with $j > i$. Hence,
      $\geod(p_i,q_i)$ intersects $T$ only if this is the last
      component of type (iii). Clearly, there is only one such a
      component.

      It follows that there are only four components of
      $\geod \cap F$ that intersect $T$, and each such component
      intersects $T$ in only one consecutive subpath. Hence $T$ is
      $4$-happy.

      We conclude that the portalgon \P' that we obtain is
      $4$-happy, equivalent to \P, and has at most seven
      fragments (note that there will be seven fragments if there are two reflex vertices on $m_\ell$, and two on $m_r$). 
      Furthermore, every vertex of \P appears in at most
      $O(1)$ fragments of $\P'$, and thus $\P'$ has complexity $O(n)$.

      Finally, observe that (before splitting $F$) we can actually
      shift the left and right sides of $Z$ inwards by some
      arbitrarily small $\eps$ (in particular, something smaller than
      the smallest distance between two non-adjacent edges of $Z$). It
      then follows that $F$ is now split into only three fragments,
      two of which are already $4$-happy. We transform the remaining
      fragment (parallelogram $Z$) into a $2$-happy parallelogram as
      before. We now obtain three fragments, of total complexity
      $O(n)$), even if $F$ contains three or more colinear vertices.
    \end {proof}

  \subsubsection {Computation}

The previous analysis leads to an algorithm, as we show next.
In the following, recall that $z$ is the slope of a line orthogonal to the line through the two start points of the portal edges.

 \begin{lemma}
   \label{lem:compute_parallelogram}
   Let $F$ be a parallelogram with two parallel portal edges, we can
   compute an equivalent $2$-happy parallelogram $F'$ in $O(1)$    time.
 \end{lemma}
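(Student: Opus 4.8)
The plan is to give an explicit constant-time construction. Given a parallelogram $F$ with two parallel portal edges $\self e$ and $\twin e$, I would first translate and rotate so that $\self e$ and $\twin e$ are horizontal; this costs $O(1)$ time since a parallelogram has only four vertices. After this normalization I can read off the shift $\Delta$ (the difference in $x$-coordinate between the start vertices of $\self e$ and $\twin e$) and the vertical distance $v$ between the two portal edges. If $\Delta = 0$ we are already done by Lemma~\ref{lem:h0}, so assume $\Delta \neq 0$.

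Next I would follow the cut-and-reglue recipe described just before Lemma~\ref{lem:parallelogram}: set the slope $z = -\Delta/v$, shoot a ray of slope $z$ from the leftmost vertex of $\twin e$ into the interior of $F$, and each time the ray would re-enter through the portal, cut along it and continue. Since $F$ is a parallelogram this ray crosses the portal a bounded number of times — in fact the number of cuts is determined by $\lceil |\Delta|/w \rceil$ where $w$ is the length of the portal edge, which might not be $O(1)$. To keep this in constant time I would instead observe that all the cut pieces are congruent parallelogram slabs, so rather than performing each cut individually I can compute the final reassembled rectangle $F'$ directly by a closed-form formula: the reglued fragment is a rectangle of width $\sqrt{w^2+\Delta^2}\cdot(w/\sqrt{w^2+\Delta^2}) = $ (more simply) the width equal to the length of the slanted side after straightening, height $v$, with the two short sides becoming the new single portal of zero shift. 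All of this is arithmetic on a constant number of coordinates, hence $O(1)$ time.

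Finally I would invoke Lemma~\ref{lem:parallelogram} (or just Lemma~\ref{lem:h0} applied to $F'$, which now has $\Delta = 0$) to conclude that $F'$ is $2$-happy, and note that $F \equiv F'$ because the cut-and-reglue operation only transforms chords into portals and glues along existing portal pieces, which preserves the surface and hence the metric (this is the general equivalence principle discussed around Figure~\ref{fig:equivalence}). Reporting $F'$ explicitly as a portalgon with one fragment and one portal completes the construction.

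The main obstacle I anticipate is the bookkeeping to argue that the whole procedure runs in $O(1)$ time despite the ``cut every time the ray crosses the portal'' description suggesting a number of cuts that grows with $\Delta/w$: the key realization that makes it constant time is that for a parallelogram the reassembled object is again a single parallelogram (a rectangle), whose vertices can be written down directly without iterating the cuts. Once that is in place, everything else is routine coordinate geometry and an appeal to Lemma~\ref{lem:parallelogram}.
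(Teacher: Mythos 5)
Your approach is the same as the paper's: recognize that the literal cut-and-reglue construction would cost time proportional to the number of portal crossings, and instead compute the reassembled rectangle directly in closed form, then invoke Lemma~\ref{lem:parallelogram} and Lemma~\ref{lem:h0}. That is the right strategy, and you correctly identify the key realization that makes the procedure $O(1)$.

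However, the explicit rectangle you write down is wrong. Your displayed width $\sqrt{w^2+\Delta^2}\cdot(w/\sqrt{w^2+\Delta^2})$ simplifies to $w$, whereas the ``length of the slanted side'' is $\sqrt{\Delta^2+v^2}$ --- these two descriptions contradict each other --- and the height is not $v$. The correct rectangle has one pair of sides realized by the two slanted boundary edges $b_l,b_r$ of the parallelogram (each of length $\sqrt{\Delta^2+v^2}$), and the other pair has length equal to the perpendicular distance between the supporting lines of $b_l$ and $b_r$, which is $wv/\sqrt{\Delta^2+v^2}$. Your claimed rectangle would have area $v\sqrt{\Delta^2+v^2}$, which in general differs from the parallelogram's area $wv$, so it cannot be isometric to the original surface. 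The observation that pins down the missing side --- and which you should work out rather than defer to ``routine coordinate geometry'' --- is that the ray $\rho$ is perpendicular to $b_l$, so the total length traversed by $\rho$ until it exits through a non-portal edge equals the distance between the supporting lines of $b_l$ and $b_r$. Once this correction is made, the rest of your argument (normalize in $O(1)$, invoke Lemma~\ref{lem:h0}, justify equivalence by cut-and-reglue) is sound and matches the paper.
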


  \begin{proof}
    By Lemma~\ref{lem:parallelogram} there exists a parallelogram $F'$
    equivalent to $F$ that has shift zero, and is thus $2$-happy. The
    main task is now to compute such an equivalent fragment.

    %
    To compute the new, equivalent fragment $F'$, we
    could explicitly generate the ray $\rho$, and compute its
    intersection points with the portal edges, tracing the ray until it hits a non-portal edge of the fragment.
    See Figure~\ref{fig-full:parallel}(a). However, this would
    result in a running time linear in the number of such
    intersections. Instead, we can compute the resulting fragment directly, by exploiting the geometry of the situation.
    
    Let $b_l$ be the line segment between the left endpoints of $\self e$ and $\twin e$, and let $b_r$ be defined analogously for the right endpoints of $\self e$ and $\twin e$. See Figure~\ref{fig-full:parallel}(b).
We observe that since the ray $\rho$ is orthogonal to $b_l$, the total length of the ray until it hits a fragment edge is the same as the distance $|\rho|$ between the two parallel lines that go through $b_l$ and $b_r$, respectively. See Figure~\ref{fig-full:parallel}(b).
After cutting along the ray and gluing  the pieces together, the zero-shift guarantees that the resulting shape is a rectangle. The rectangle has $b_l$ as base, and has height equal to the length of the ray until hitting $t$, $|\rho|$. 
This rectangle can be computed in $O(1)$ time by considering the supporting lines of  $\self e$ and $\twin e$, and intersecting them with two orthogonal lines going through the endpoints of $b_l$; refer to the dashed rectangle in Figure~\ref{fig-full:parallel}(b). 
The four intersections of these four lines give the vertices of the rectangle, which is the equivalent 2-happy portalgon.
    %
    %
    %
%
%
%
  \end{proof}

    \begin{figure}[tb]
\centering
      \includegraphics[page=1]{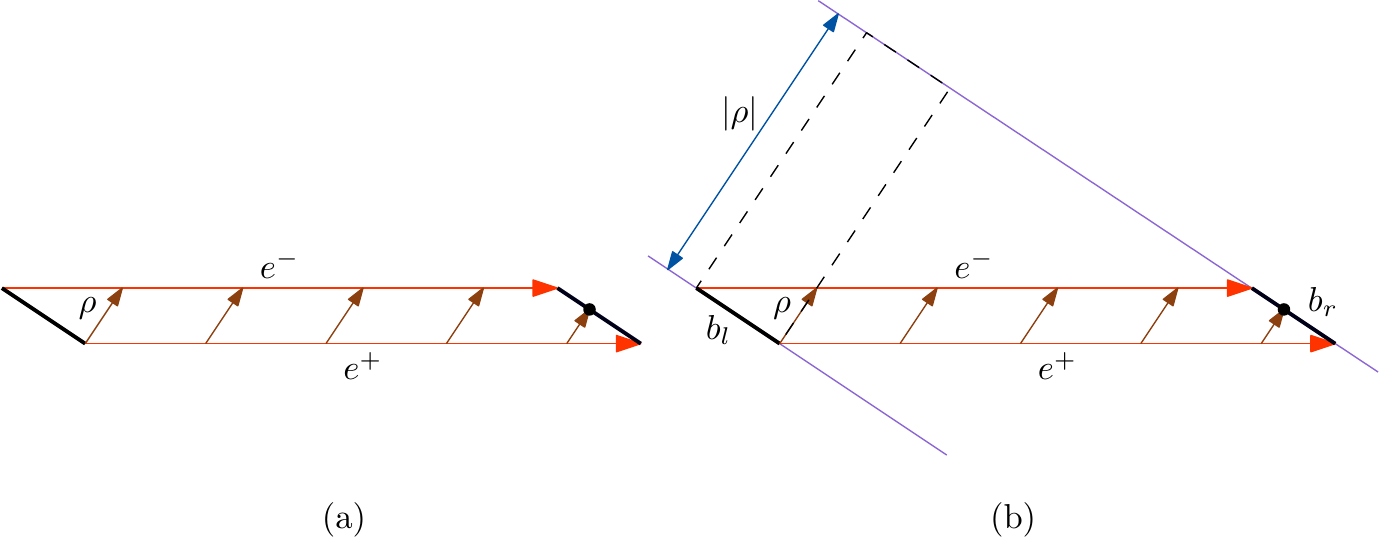}
      \caption{Cutting procedure for two parallel portals. (a) Initial situation. (b) The resulting happy portalgon is the rectangle indicated with dashes.}
      \label{fig-full:parallel}
    \end{figure}

Next we address the case where the fragment is not a parallelogram.

  \begin{lemma}
    \label{lem:compute_leftmost_and_rightmost_segments}
    Let $F$ be a simple fragment containing two horizontal portal
    edges $\self e$ and $\twin e$, and let $z$ be a slope. In $O(n_F)$ time
    we can compute the leftmost segment $s_\ell$ and the rightmost
    segment $s_r$ with slope $z$ with one endpoint on $\self e$ and one
    endpoint on $\twin e$, or report that no such segment exists.
  \end{lemma}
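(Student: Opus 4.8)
The plan is to reduce the problem, by an affine change of coordinates, to finding the extreme \emph{vertical} chords of a simple polygon between a designated top edge and a designated bottom edge. First handle the degenerate input $z=0$ in $O(1)$ time (two distinct horizontal edges admit no horizontal segment between them, so none exists). Otherwise apply the shear $\phi\colon(x,y)\mapsto(x-y/z,\,y)$; $\phi$ is an affine bijection, it maps every line of slope $z$ to a vertical line, and it keeps $\self e$ and $\twin e$ horizontal, so $\phi(F)$ is again a simple polygon with $n_F$ vertices and a segment of slope $z$ from $\self e$ to $\twin e$ becomes a vertical chord of $\phi(F)$ from the image of $\self e$ to the image of $\twin e$. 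Rename $\phi(F),\phi(\self e),\phi(\twin e)$ by $F,\self e,\twin e$, and say $\self e$ lies at height $y_1>y_2$, where $y_2$ is the height of $\twin e$ (otherwise swap the two edges). Using the two polygon edges incident to $\self e$ and to $\twin e$, check in $O(1)$ time that $F$ lies locally below $\self e$ and locally above $\twin e$; if either check fails, no valid segment exists. Let $[t_{\min},t_{\max}]$ be the overlap of the $x$-projections of $\self e$ and $\twin e$; if it is empty, no valid segment exists.

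Next I would prove the structural fact that the set $J$ of abscissae $t$ for which the vertical segment at $x=t$ joins $\self e$ to $\twin e$ and lies in $\bar F$ is a closed subinterval $[x_L,x_R]$ of $[t_{\min},t_{\max}]$. Given $t<t''$ both in $J$, consider the axis-parallel rectangle $Q$ bounded by the two chords at $x=t$ and $x=t''$ together with the pieces of $\self e$ and $\twin e$ between them. If $Q$ met the exterior of $F$, a bounded component $W$ of $(\mathbb R^2\setminus F)\cap\mathrm{int}\,Q$ would be disjoint from all four sides of $Q$ (since $F$ lies locally inside along all of them), so $\partial W$ would be a closed proper subset of the Jordan curve $\partial F$; such a set is contained in an arc and cannot separate the plane, contradicting that $W$ is a bounded component of its complement. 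Hence $Q\subseteq\bar F$, and since $\partial F$ then cannot enter $\mathrm{int}\,Q$, every chord between $x=t$ and $x=t''$ lies in $F$. Consequently $s_\ell$ is the chord at $x=x_L$ and $s_r$ the chord at $x=x_R$; each endpoint of $J$ is pinned either at an endpoint of $[t_{\min},t_{\max}]$ (i.e.\ at a portal-edge endpoint) or at a vertex of $F$, matching what is claimed in the proof of Lemma~\ref{lem:happy_portalgon_exists_parallel}.

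For the computation I would triangulate $F$ in $O(n_F)$ time using Chazelle's linear-time triangulation of a simple polygon. Each of $\self e,\twin e$ is a boundary edge, hence incident to exactly one triangle, $\tau_{\mathrm{top}}$ and $\tau_{\mathrm{bot}}$; let $\tau_{\mathrm{top}}=\tau_1,\dots,\tau_p=\tau_{\mathrm{bot}}$ be the unique path between them in the dual tree, found in $O(n_F)$ time. Since a vertical segment meets each convex triangle in a connected subsegment, the triangles it visits form a simple walk in the dual tree from $\tau_1$ to $\tau_p$, which must be $\tau_1,\dots,\tau_p$; thus every valid chord crosses exactly the shared edges $\tau_1\cap\tau_2,\dots,\tau_{p-1}\cap\tau_p$, in this order, as it descends. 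I would therefore traverse $\tau_1,\dots,\tau_p$ once, maintaining the interval $I\subseteq[t_{\min},t_{\max}]$ of abscissae $t$ for which the downward vertical from $(t,y_1)$ stays inside $\tau_1\cup\dots\cup\tau_j$ and leaves $\tau_j$ through $\tau_j\cap\tau_{j+1}$ (for $j=p$, through $\twin e$). Moving from $\tau_j$ to $\tau_{j+1}$ restricts $I$ to the $x$-extent of that shared edge, further restricted according to which of $\tau_j$'s edges is its lower boundary at each abscissa, an $O(1)$ update determined by the $x$-coordinates of $\tau_j$'s three vertices. After the traversal $I=J=[x_L,x_R]$; if $I=\emptyset$ we report that no valid segment exists, otherwise we output $s_\ell$ and $s_r$ as the $\phi^{-1}$-images of the chords at $x_L$ and $x_R$. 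The total running time is $O(n_F)$.

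I expect the main obstacle to be making the interval-maintenance step and the degenerate configurations (collinear vertices, a chord tangent to $\partial F$, a portal endpoint lying on a triangulation diagonal) completely watertight; the genuinely nontrivial ingredient is the Jordan-curve argument that $J$ is a single interval, which is exactly what lets a single pass along the dual path replace an $O(n_F\log n_F)$ plane sweep.
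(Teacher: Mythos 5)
Your proposal is correct, but it takes a genuinely different route from the paper's.

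The paper's proof is a single boundary traversal with no triangulation: it restricts attention to the parallelogram $S$ spanned by $\self e$ and $\twin e$, chops $\partial F \cap S$ into maximal chains, classifies each chain as type $L$, $R$, or $M$ according to which of $S$'s two slanted sides its endpoints lie on, reports failure if any chain is of type $M$, and otherwise locates $s_\ell$ at the rightmost $z$-slope line through a vertex of a type-$L$ chain and $s_r$ at the leftmost $z$-slope line through a vertex of a type-$R$ chain. Everything is computed by walking once around $\partial F$, so it is linear time with essentially no machinery. Your proof instead shears to make slope-$z$ lines vertical, invokes Chazelle's $O(n)$ triangulation of a simple polygon, pulls out the unique dual-tree path from the triangle on $\self e$ to the triangle on $\twin e$, and maintains the feasible abscissa interval along that path. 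This is conceptually clean, but it leans on a notoriously heavyweight subroutine to achieve a bound that the problem structure already gives you for free. One small clarification on your own closing remark: the Jordan-curve argument that $J$ is a single interval is nice but not actually load-bearing for your algorithm — the dual-path traversal itself proves interval-hood by induction, since each step intersects the running interval with the $x$-range over which a fixed entry edge of a convex triangle is matched with a fixed exit edge, and an intersection of intervals is an interval. So you could drop the Jordan-curve lemma entirely and still have a self-contained argument. Finally, be a little careful with the phrase "a closed proper subset of a Jordan curve is contained in an arc": proper closed subsets of a circle need not lie in an arc (e.g.\ two antipodal closed arcs), though they still fail to separate the plane, which is what you actually use; the latter is the correct invocation and it holds because removing a single point from the circle already disconnects your set from being the whole curve, and the complement of any proper compact subset of a Jordan curve is connected.
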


  \begin{proof}
    Assume without loss of generality that the portal edges
    $\self e=\overline{pq}$ and $\twin e=\overline{st}$ are oriented from
    left to right, with $\self e$ above $\twin e$. Let $S$ be the
    parallelogram defined by $\self e$ and $\twin e$, and let $X$ denote
    the part of the boundary of $F$, excluding the interior of $\self e$ and
    $\twin e$ intersecting $S$, i.e.
    $X = S \cap (\partial F \setminus (\self e \cup \twin e \setminus
    \{p,q,s,t\}))$. Since $F$ is simple, $X$ is a collection of
    polygonal chains. There are three types of chains, depending on
    their endpoints: A chain is of type $L$ if its endpoints lie on
    the left boundary $\overline{ps}$ of $S$, of type $R$ if its
    endpoints lie on the right boundary $\overline{qt}$ of $S$, and of
    $M$ if one of its endpoints lies on $\overline{ps}$ and one of its
    endpoint lies on $\overline{qt}$. Clearly, if there are any chains
    of type $M$, the segments $s_\ell$ and $s_r$ do not exist.

    Observe that $s_\ell$, if it exists, contains a vertex of $F$ on a
    chain of type $L$. In particular, let $\ell$ be the vertex on such
    a chain for which the line $m_\ell$ through $\ell$ with slope $z$
    is rightmost (note that there is a well-defined order, since all
    lines with slope $z$ are parallel (and $z \neq 0$)). Segment
    $s_\ell$ contains $\ell$. Symmetrically, $s_r$, if it exists,
    contains the vertex $r$ on a chain of type $R$ whose line $m_r$ is
    leftmost.

    Finally, observe that if $s_\ell$ exists, the line $m_\ell$ must
    be left of the line $m_r$. Conversely, if $m_\ell$ does not lie
    left of $m_r$, $s_\ell$ does not exist.

    We can compute the chains in $X$, and their type, in linear time
    by traversing the boundary of $F$. Computing $\ell$ and $r$ can
    again be done in linear time. Given $\ell$ and $r$, we can then
    decide if $s_\ell$ and $s_r$ exist, and, if they do, we can compute them
    in constant time.
  \end{proof}

  \begin{lemma}
    \label{lem:compute_parallel}
    Let $\P=(\F,P)$ be a portalgon with $n$ vertices and only
    one portal, whose edges are parallel. An equivalent $2$-happy portalgon $\P'$
    with $|\F|+6$ fragments and total complexity $O(n)$ can be computed in $O(n)$ time.
  \end {lemma}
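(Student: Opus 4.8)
The plan is to turn the existence argument of Lemma~\ref{lem:happy_portalgon_exists_parallel} into an $O(n)$-time algorithm, using Lemma~\ref{lem:compute_leftmost_and_rightmost_segments} to locate the cut lines and Lemma~\ref{lem:compute_parallelogram} to process the central parallelogram. First I would dispose of the degenerate configuration of the single portal: if its two edges $\self e$ and $\twin e$ lie in \emph{different} fragments of $\F$, then every fragment of $\P$ carries at most one portal edge, so $\P$ is already $2$-happy by Observation~\ref{obs:single_portaledge_fragments} and can be returned after an $O(1)$ check. Hence from now on $\self e$ and $\twin e$ lie in a common fragment $F$, with $n_F \le n$ vertices; the other $|\F|-1$ fragments carry no portal edges, are trivially $1$-happy, and stay untouched. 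It thus suffices to replace $F$, in $O(n_F)$ time, by at most seven fragments of total complexity $O(n_F)$ realizing the decomposition in the proof of Lemma~\ref{lem:happy_portalgon_exists_parallel}.

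After a rigid motion I may assume $\self e,\twin e$ are horizontal, directed left to right, with $\self e$ above $\twin e$, and read off the shift $\Delta$ and the vertical distance $v$. If $\Delta=0$, $F$ is already $2$-happy by Lemma~\ref{lem:h0} and we are done. Otherwise let $m$ be the line through the two start vertices. I invoke Lemma~\ref{lem:compute_leftmost_and_rightmost_segments} with slope equal to that of $m$ to compute, in $O(n_F)$ time, the leftmost and rightmost segments $s_\ell,s_r$ parallel to $m$ joining $\self e$ to $\twin e$ inside $F$, or to certify that no such segment exists; in the latter case $F$ is already happy by Lemma~\ref{lem:cat}. In the remaining case $s_\ell$ and $s_r$, together with the sub-segments of $\self e$ and $\twin e$ between their endpoints, bound the empty parallelogram $Z$ of Lemma~\ref{lem:happy_portalgon_exists_parallel}.

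Now I would cut $F$ along $s_\ell$ and $s_r$, turning each into a new portal. Each of $s_\ell,s_r$ meets $\partial F$ only at its two endpoints and, under the (temporary) general-position assumption, at most two reflex vertices, so the cut splits $F$ into $Z$ plus at most six further simple polygons; performing the cut and rewiring the boundary cycles and the portal incidences is a single traversal of $\partial F$, i.e.\ $O(n_F)$ time. Then I apply Lemma~\ref{lem:compute_parallelogram} to replace $Z$ by an equivalent zero-shift parallelogram in $O(1)$ time, which is $2$-happy by Lemma~\ref{lem:h0}; every other piece carries at most one portal edge and is $2$-happy by Observation~\ref{obs:single_portaledge_fragments}. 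Collinear vertices are handled exactly as in Lemma~\ref{lem:happy_portalgon_exists_parallel}, by shifting the two cut lines inward by an $\eps$ smaller than the minimum distance between non-adjacent edges of $Z$ before cutting; this preserves equivalence and keeps the fragment count at most seven. Adding back the untouched fragments, the result has at most $|\F|+6$ fragments; each vertex of $F$ ends up in $O(1)$ of them and each cut adds $O(1)$ new vertices, so the total complexity is $O(n)$, and the running time is dominated by the one call to Lemma~\ref{lem:compute_leftmost_and_rightmost_segments} and the boundary traversal, hence $O(n)$.

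I expect the main obstacle to be the combinatorial surgery of the third step rather than any geometry: one must carefully argue that cutting $F$ along $s_\ell$ and $s_r$ yields only a constant number of simple pieces (this is exactly where the bound on reflex vertices on the cut lines, or the $\eps$-perturbation, enters), correctly split the two halves of the original portal $\{\self e,\twin e\}$ and attach the two new portals among those pieces, and rule out degenerate (zero-area or self-touching) pieces. The happiness bound of each resulting fragment, the $O(n)$ complexity, and the equivalence are then inherited from Lemmas~\ref{lem:h0}, \ref{lem:cat}, \ref{lem:parallelogram}, \ref{lem:happy_portalgon_exists_parallel}, and \ref{lem:compute_parallelogram}.
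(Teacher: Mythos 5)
Your proposal reproduces the paper's approach: locate the parallelogram $Z$ with Lemma~\ref{lem:compute_leftmost_and_rightmost_segments}, cut it out of $F$, replace it by a zero-shift rectangle via Lemma~\ref{lem:compute_parallelogram}, and argue the remaining pieces are already happy. The algorithmic content and the $O(n)$ running-time analysis are correct, and you usefully make explicit some cases the paper's proof leaves implicit (the two portal edges lying in different fragments, $\Delta=0$, and the $\eps$-perturbation for collinear vertices).

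One claim in your argument is wrong, though. You assert that after the cut ``every other piece carries at most one portal edge and is $2$-happy by Observation~\ref{obs:single_portaledge_fragments}.'' The cuts $s_\ell$ and $s_r$ themselves must become portal edges: they separate $Z$ from the rest of $F$, but both sides remain part of the surface and must be re-identified. Consequently, the pieces $T$ and $B$ (containing the starting points of $\self e$ and $\twin e$), as well as the analogous pieces at the other endpoints of the portal edges, inherit \emph{both} a sub-segment of the original portal \emph{and} a sub-segment of $s_\ell$ or $s_r$. Those pieces carry two portal edges, and Observation~\ref{obs:single_portaledge_fragments} does not apply to them. The paper's proof of Lemma~\ref{lem:happy_portalgon_exists_parallel} has to argue separately, via a three-way classification of the components of $\geod\cap F$ that enter $T$, that $T$ is $4$-happy; your proof cannot be finished without importing (or replacing) that step. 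You are not alone in this: the paper's own proof of Lemma~\ref{lem:compute_parallel} asserts without justification that ``all fragments apart from $Z$ are already $2$-happy,'' which is in tension with the $5$-happiness in the statement of Lemma~\ref{lem:happy_portalgon_exists_parallel} and the $4$-happiness actually derived there for $T$. So the $2$ in the lemma statement is suspect, but in any case the single-portal-edge justification for the non-$Z$ pieces is not valid.
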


  \begin{proof}
    We implement the constructive proof of
    Lemma~\ref{lem:happy_portalgon_exists_parallel} using
    Lemmas~\ref{lem:compute_leftmost_and_rightmost_segments} and
    \ref{lem:compute_parallelogram}. There is at most one fragment $F$
    with two portal edges $\self e$ and $\twin e$, both of which are
    parallel. We use
    Lemma~\ref{lem:compute_leftmost_and_rightmost_segments} to find
    two parallel segments whose endpoints lie on $\self e$ and
    $\twin e$. If such segments do not exist, $F$ is already 2-happy
    and we are done. Otherwise, they define a parallelogram $Z$ that
    splits $F$ into at most seven pieces, one of which is $Z$
    itself. By
    Lemma~\ref{lem:compute_leftmost_and_rightmost_segments}, computing
    $Z$ takes linear time, as does the actual splitting. We then use
    Lemma~\ref{lem:compute_parallelogram} to transform $Z$ into a
    2-happy parallelogram in $O(1)$ time. All fragments apart
    from $Z$ are already $2$-happy.
  \end{proof}


  \subsection{Single portal, arbitrary edges}
  \label {app:nonpar}
  
    We extend the results from Section~\ref {sec:par} to non-parallel edges.
    Most of the arguments are similar.
  
     \subsubsection {Analysis}

  For two portal edges that are not parallel, the situation does not change too much. First, we analyze which orientations are actually problematic.
  There is a clear relation between the angle between the portal edges, defined as the angle of their corresponding supporting lines, and the number of times that a shortest path can use it.

  \begin {observation}
  \label{obs:angle_bound}
    A fragment with angle $\alpha>0$ between the two portal edges is $(\lfloor \pi/\alpha \rfloor+1)$-happy.
  \end {observation}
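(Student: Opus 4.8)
The plan is to prove this (not necessarily tight) bound by unfolding a shortest path across the portal. Let $\pi$ be a shortest path in $\Sigma(\P)$ that crosses the portal $k$ times. Since $\self e$ and $\twin e$ both lie in $F$ and form a single portal, each crossing leaves $F$ only to immediately re-enter it, so $\pi$ decomposes into exactly $k+1$ maximal subpaths lying in $F$; hence it suffices to show $k\le\lfloor\pi/\alpha\rfloor$. I would develop $F$ along the crossings: glue a fresh copy $F_i$ of $F$ to $F_{i-1}$ along the copy $e_i$ of the portal edge used at the $i$-th crossing, giving planar copies $F_0,\dots,F_k$ and an unfolded curve $\widetilde\pi$ whose portion in $F_i$ is the copy of the $i$-th in-$F$ subpath of $\pi$. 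A shortest path does not bend where it crosses a portal, so $\widetilde\pi$ is piecewise straight, bending only at (lifted) vertices of the $F_i$.

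The geometric core is that all copies are glued in the same way, so the isometry identifying $e_i$ with $e_{i+1}$ through $F_i$ is, up to translation, a rotation by the angle between $\self e$ and $\twin e$; measuring that angle as $\alpha\le\pi/2$ only weakens the bound (a larger true rotation forces fewer crossings), so we may work with $\alpha$. Thus the supporting lines of $e_1,\dots,e_k$ are successive $\alpha$-rotations of one another, always in the same rotational sense. Now take a maximal straight sub-segment of $\widetilde\pi$: it crosses a consecutive block $e_a,\dots,e_b$ transversally, and if we orient the $e_i$ consistently and let $\theta_i\in(0,\pi)$ be the angle at which the segment meets $e_i$, then the $\theta_i$ step by a fixed $\pm\alpha$. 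Since all $\theta_i$ must lie in the open interval $(0,\pi)$, a count of how many such equally spaced angles can fit there shows a block contains at most $\lfloor\pi/\alpha\rfloor$ crossings. In particular, if $\widetilde\pi$ has no vertex bends at all, this already gives $k\le\lfloor\pi/\alpha\rfloor$.

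The hard part will be preventing $\widetilde\pi$ from bending at a vertex of some $F_i$ and thereby ``resetting'' the crossing angle, which could in principle concatenate several blocks. I would rule this out exactly as in the proof of Lemma~\ref{lem:cat}: a subpath of $\pi$ lying strictly inside $F$ that joins two consecutive portal crossings cannot pass through a vertex of $F$ in its interior, since a shortest path visits every vertex of the portalgon at most once and such a bend would force that vertex to be shared with one of the two neighbouring in-$F$ subpaths, which would then admit a shortcut. Consequently any vertex bend of $\widetilde\pi$ lies in $F_0$ before the first crossing or in $F_k$ after the last, so all $k$ crossings occur within a single straight sub-segment and the block bound yields $k\le\lfloor\pi/\alpha\rfloor$. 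Since $\pi$ then visits $F$ at most $k+1\le\lfloor\pi/\alpha\rfloor+1$ times, $\H(F)\le\lfloor\pi/\alpha\rfloor+1$, i.e. $F$ is $(\lfloor\pi/\alpha\rfloor+1)$-happy.
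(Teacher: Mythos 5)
Your approach is essentially the same as the paper's: both rest on the fact that each portal crossing rotates the crossing angle by $\alpha$ (in a fixed sense) and that all crossing angles must stay in the open interval $(0,\pi)$. You make this concrete by developing the path in the plane, and you add an explicit argument (via the same mechanism as Lemma~\ref{lem:cat}) that no vertex bend can occur between two consecutive crossings, a step the paper's terse proof does not spell out. That added care is welcome.

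One caveat, shared with the paper's own proof: the counting step is off by one. If the $k$ crossing angles form an arithmetic progression with step $\alpha$, all lying in the open interval $(0,\pi)$, the binding constraint is $(k-1)\alpha < \pi$, i.e.\ $k \le \lceil\pi/\alpha\rceil$, which equals $\lfloor\pi/\alpha\rfloor+1$ whenever $\pi/\alpha$ is not an integer --- not the $\lfloor\pi/\alpha\rfloor$ you assert. Concretely, with $\alpha = 2\pi/5$ the three angles $0.9\pi$, $0.5\pi$, $0.1\pi$ are all admissible, so a single straight block can cross three times (four visits), while $\lfloor\pi/\alpha\rfloor+1=3$. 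As written, both your argument and the paper's establish only $(\lfloor\pi/\alpha\rfloor+2)$-happiness in the non-integer case; either the bound should read $\lceil\pi/\alpha\rceil+1$, or an extra geometric constraint must be supplied to exclude the extremal configuration where the first crossing angle is close to $\pi$ and the last close to $0$. Since the observation is used only qualitatively (to argue that the near-parallel case is the hard one), this off-by-one is harmless downstream, but it should be acknowledged.
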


  \begin{proof}
  \rodrigo{Please check if the following is convincing}
    In every iteration the angle of the shortest path with respect to one
    of the portals increases by $\alpha$, or in every iteration it decreases by $\alpha$.
    Therefore, if after $\lfloor \pi / \alpha \rfloor$ iterations the shortest path would go through the portal again, it would have covered an angle of more than $\pi$ in the interior of the fragment, which is not possible. Therefore, no more iterations are possible.
    %
    It follows that the shortest path can have at most ($\lfloor \pi / \alpha \rfloor$+1) connected components in the fragment.
  \end{proof}


  
  
  It is worth pointing out that, as soon as the angle $\alpha< \pi/2$, it is possible to construct a situation where a shortest path goes through the portal three times.
  See Figure~\ref{fig-full:60deg} for an example where we make the start point of the non-horizontal portal almost touch the horizontal portal; the orange path is a shortest path.  
    
  \begin{figure}[tb]
      \centering
      \includegraphics{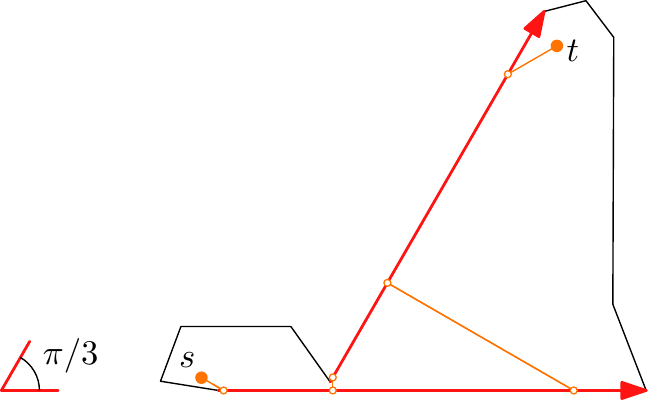}
      \caption{Example for $\alpha= \pi/3 < \pi/2$ where a shortest path (shown orange) goes through the portal three times. 
      }
      \label{fig-full:60deg}
    \end{figure}

  For small $\alpha$, the situation looks quite similar to the one with parallel edges, with some important differences. 
  Let $\Delta$ be the {\em shift} of the portals, defined now as the distance between the projections of the starting vertices of the two portal edges onto the \emph{bisector} of the portal edges, where the bisector is  the angle bisector of  the supporting lines of the edges.

The following result generalizes Lemma~\ref {lem:h0}.

  \begin {lemma}
    \label{cl:nonparallel}
    Let $F$ be a fragment with exactly two portal edges $\self e, \twin e$
    that belong to the same portal. If the shift of $\self e, \twin e$ is
    $0$, then the fragment is $2$-happy.
  \end {lemma}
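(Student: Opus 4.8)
The plan is to mimic the proof of Lemma~\ref{lem:h0}, but replace the single reflection across one portal edge by the composition of reflections that the ``shift zero'' condition makes into a rotation about a single point. First I would set up coordinates: let $\self e$ and $\twin e$ be the two portal edges, and let $b$ be the angle bisector of their supporting lines. Since the fragment is glued to itself along the portal, passing through $e$ corresponds, in the developed (unfolded) picture, to applying the isometry $\rho$ of the plane that carries $\self e$ onto $\twin e$ respecting the matching of corresponding points. Because $\self e$ and $\twin e$ are not parallel, $\rho$ is a rotation; its center $o$ is the intersection of the two supporting lines, and its angle is $\alpha$ (the angle between the portal edges). The hypothesis that the shift is $0$ means precisely that the start vertices of $\self e$ and $\twin e$ have equal projection onto $b$, equivalently that $\self e$ and $\twin e$ are reflections of each other across $b$; combined with their being equal length, this forces the rotation $\rho$ about $o$ to map $\self e$ exactly to $\twin e$ with the start of $\self e$ going to the start of $\twin e$. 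So the key geometric fact is: the center of rotation $o$ lies on $b$, and for any point $\self r$ on $\self e$, the corresponding point $\twin r$ on $\twin e$ satisfies $|o\self r| = |o\twin r|$.

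Next I would argue by contradiction exactly as in Lemma~\ref{lem:h0}. Suppose a shortest path $\geod$ crosses the portal twice inside $F$, first at $r$ and then at $s$ (both on $\self e$, say, with twins $\twin r, \twin s$ on $\twin e$). Consider the subpath of $\geod$ from $\twin r$ to $s$; it stays inside $F$ and has length at least $|\twin r s|$. I want to compare $|\twin r s|$ with the length $|\self r s|$ of the shortcut along $\self e$. Unfold: place $\self e$ in the plane, so $r$, $s$, and $o$ are concrete points with $o$ on the bisector line; the twin $\twin r$ is the image of $r$ under the rotation by $\alpha$ about $o$. Since rotation about $o$ preserves distance to $o$, we have $|o\self r| = |o\twin r|$, so the triangle $\triangle o\, \self r\, \twin r$ is isosceles, and $\self r \twin r$ is the base of an isosceles triangle with apex at $o$. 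The crucial inequality is $|\twin r s| > |\self r s|$, which should follow from the fact that $\twin r$ is ``farther'' from $s$ than $\self r$ is: both $\self r$ and $\twin r$ lie on the circle of radius $|o r|$ about $o$, and $s$ lies on $\self e$, on the same side; one checks that rotating $\self r$ by $\alpha$ toward $\twin e$ moves it away from $s$ along $\self e$'s half-line, so $|\twin r s| \ge |\self r s|$ with equality only in degenerate cases. Hence going from $r$ to $s$ directly along the portal edge is shorter, and replacing the subpath of $\geod$ between those crossings by that segment yields a strictly shorter path with the same endpoints --- contradicting that $\geod$ is a shortest path. It follows that $F$ is $2$-happy.

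The main obstacle I anticipate is making the inequality $|\twin r s| \ge |\self r s|$ fully rigorous in all configurations: one must be careful about which side of $b$ the points lie on, whether $o$ lies ``between'' the portal edges or outside $F$, and the case $\alpha$ close to $\pi$, as well as the possibility that $r$ and $s$ do not both lie on $\self e$ (i.e.\ $\geod$ might cross alternately $\self e$ then re-enter through $\twin e$) --- but by the conventions for paths through portals, ``crossing the portal twice'' can always be normalized so that the relevant subpath runs from a point $\twin r$ on one copy to a point $s$ on the other, and the isosceles-triangle argument applies after unfolding. A clean way to handle all cases uniformly is to observe that the zero-shift condition makes $b$ an axis of symmetry of the pair $(\self e,\twin e)$, so reflecting across $b$ swaps $\self r \leftrightarrow \twin r$ while fixing nothing relevant on the shortcut; the perpendicular from $\twin r$ to $\self e$ then lands on $\self e$ on the far side of $\self r$ from $s$, giving $|\twin r s|^2 = |\twin r \self r|^2 \cos^2\!\theta + \dots \ge |\self r s|^2$ by a direct computation. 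I would relegate that computation to the proof body and present the unfolding-plus-isosceles-triangle picture as the conceptual core, exactly paralleling Lemma~\ref{lem:h0}.
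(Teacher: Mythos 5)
Your proposal is correct and follows the same approach as the paper: argue by contradiction, consider the subpath between two consecutive crossings, and compare its lower bound $|\twin r\self s|$ against the along-the-portal shortcut, using the reflection symmetry across the bisector that the zero-shift condition provides. In fact the paper's proof is precisely the ``clean way'' you describe at the end (normalize the bisector horizontal with $\self e$ above, $\twin e$ below; then $|\twin r\self s| > |\twin r\twin s| = |\self r\self s|$ is immediate by reflecting across the bisector), so the rotation-about-$o$ framing and the isosceles-triangle discussion are an unnecessary detour that would also require a separate (translation) case when the portal edges happen to be parallel.
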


\begin{proof}

      Let $F$ be a fragment with two portal edges $\self e, \twin e$ with  $\Delta = 0$. Assume without loss of generality that the associated angular bisector is horizontal, and that $\self e$ is above $\twin e$. Refer to Figure~\ref{fig-full:nonpar-zeroshift}.
      Suppose there are two points $p, q \in F$ whose shortest path crosses $e$ twice, say at points $r$ and $s$, in the order $\self r$, $\twin r$, $\self s$, $\twin s$. Consider the part of the shortest path from $\twin r$ to $\self s$. 
Since $\Delta=0$, this path must have length at least $|\twin r\self s|$. 
However, that must be larger than going from $\twin r$ to $\twin s$ on $\twin e$ (i.e., $|\twin  r\self s| > |\twin r \twin s|$), reaching a contradiction.
\end{proof}

  \begin{figure}[h]
\centering
\includegraphics{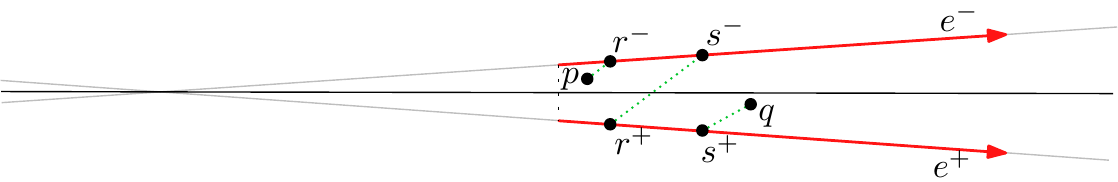}
\caption{Proof of Lemma~\ref{cl:nonparallel}. The zero shift implies that $|\twin r \self s|>|\twin r \twin s|$.}
\label{fig-full:nonpar-zeroshift}
\end{figure}

  \begin {lemma}
  \label{lem:quad_fragment}
    Let $F$ be a quadrilateral where two opposite edges are twin portal edges.
    Then there
      exists a fragment $F'$,  equivalent to $F$, with $\Delta=0$.
    %
    %
\end {lemma}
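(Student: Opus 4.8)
The plan is to follow the parallelogram construction of Lemma~\ref{lem:parallelogram}, but with the translation that realizes the gluing replaced by a rotation. If $\self e$ and $\twin e$ are parallel, then the simple quadrilateral $F$ is in fact a parallelogram, and Lemma~\ref{lem:parallelogram} already provides an equivalent parallelogram of zero shift; so assume henceforth that $\self e$ and $\twin e$ are not parallel. Let $\alpha\in(0,\pi)$ be the angle between their supporting lines, let $o$ be the intersection point of these lines, and let $G$ be the orientation-preserving plane isometry extending the gluing map $g\colon\self e\to\twin e$. Since $\self e\not\parallel\twin e$, the isometry $G$ is a rotation by $\alpha$ about some point $c$.

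The first ingredient is the elementary observation that $\Delta(F)=0$ if and only if $c=o$. Put the angular bisector of $\self e$ and $\twin e$ on the $x$-axis with $o$ at the origin; if the start vertices of $\self e$ and $\twin e$ lie at distances $a$ and $b$ from $o$ along their lines, they project to $(a\cos(\alpha/2),0)$ and $(b\cos(\alpha/2),0)$ on the bisector, so $\Delta(F)=|a-b|\cos(\alpha/2)$, which vanishes iff $a=b$. When $a=b$ the rotation by $\alpha$ about $o$ carries $\self e$ onto $\twin e$ matching start to start and end to end, hence equals $G$ and $c=o$; conversely $c=o$ forces $a=b$ because a rotation about $o$ preserves distances to $o$. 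Applying this observation to $F'$ instead of $F$, it therefore suffices to produce an equivalent fragment $F'$ whose two portal edges lie on lines through $c$ and whose gluing is still a rotation about $c$.

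To build such an $F'$, I would develop $\Sigma(F)$ around $c$: crossing the portal corresponds to applying $G$, a rotation about $c$, so the copies $G^{-k}(F)$ ($k\in\mathbb{Z}$) of $F$ fan around $c$, glued along the (non-radial) seams $G^{-k}(\self e)$, and their union is a region surrounding $c$ whose inner and outer boundaries are the polygonal chains formed by the developed copies of the two non-portal edges of $F$. Because $G$ is a rotation about $c$, ``lying on a ray through $c$'' is a property preserved across the portal; hence, taking a ray $\rho$ from $c$ that crosses this region from its inner to its outer boundary and cutting $F$ along the corresponding radial geodesic (together with the continuations of this geodesic at each further portal crossing), then regluing the pieces along the original portal edges, yields a single fragment $F'$. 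This $F'$ is the fundamental sector of the $G$-action bounded by $\rho$, by $G(\rho)$, and by arcs of the two boundary chains; its two portal edges are the radial segments carried by $\rho$ and $G(\rho)$, which lie on lines through $c$, and its gluing map is a restriction of $G$, hence extends to the rotation $G$ about $c$. Thus $\Delta(F')=0$, and since cutting and regluing preserve the surface, $F'\equiv F$.

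The step I expect to be the real obstacle is checking that this $F'$ is a \emph{simple} polygon, i.e., that the developed inner and outer boundary chains are star-shaped with respect to $c$, so that the fundamental sector does not self-overlap. This is precisely where the hypothesis that $F$ is a quadrilateral is used: a simple quadrilateral has at most one reflex vertex and its two non-portal edges are disjoint segments, which keeps the developed chains monotone around $c$ (when $F$ is convex the chains are convex polygonal arcs and the claim is immediate; the single possible reflex vertex can be handled by routing $\rho$ through it or an incident vertex). For general fragments there is no such guarantee, as the bulging example of Figure~\ref{fig-full:bulge} illustrates.
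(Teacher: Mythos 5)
Your proposal is correct and, once unpacked, coincides with the paper's construction: the paper cuts along a ray from the start vertex of $\twin e$ at angle $\alpha_z-\alpha/2$ and reglues, and this ray is exactly the radial ray from your rotation center $c$ through that vertex, so the two cuts are identical. What you add is the conceptual explanation that the paper omits: reformulating $\Delta=0$ as ``the gluing rotation is centered at the intersection $o$ of the supporting lines,'' and observing that radial rays from $c$ are preserved by the gluing, makes the choice of cutting direction transparent rather than a formula pulled from nowhere. Your final paragraph flags a simplicity concern (does regluing along a wrapping geodesic always yield a simple polygon when $F$ is a quadrilateral?) that the paper's proof glosses over entirely; your sketch for the convex/one-reflex-vertex cases is plausible but not fully closed, though the paper's own proof is no more rigorous on this point, simply asserting the construction and pointing at a figure. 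In short: same construction, better motivation, and an honestly raised gap that the paper shares.
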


  \begin {proof}
    We proceed in the same way as in Lemma \ref{lem:parallelogram}, except that now the direction of the ray that we shoot depends also on the angle $\alpha$ between the portal edges.
    Assume without loss of generality
    that $\twin {e}$ is horizontal, $\self e$ is above $\twin e$, and that it starts to the left of
    $\twin e$.  
    Let $z$ be the slope of the line in the direction
    orthogonal to the portal start points, and let $\alpha_z$ be the corresponding angle with the horizontal.
    We begin at the leftmost vertex of $\twin e$, and shoot a ray with angle 
    slope $\alpha_z - \alpha / 2$.
    We consider the ray until it hits the boundary of $F$ for the first time. 
    Every time the ray crosses the portal, we ``cut'' along this ray, creating a new portal along it.
    This results in several smaller fragments,
    which we then glue together again along the pieces of the original portals, into a fragment $F'$.
     See Figure~\ref{fig-full:nonpar4} for an
    illustration. By construction, $F'$ has shift zero.
  \end {proof}
  
\begin{figure}[tb]
      \centering
      \includegraphics{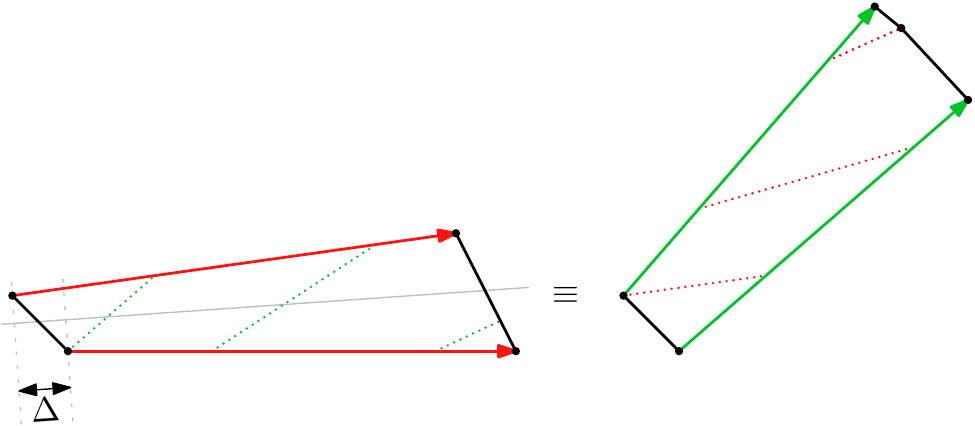}
      \caption{(left) A quadrilateral fragment with non-zero shift
        $\Delta$, and (right) an equivalent fragment with zero shift.}
      \label{fig-full:nonpar4}
    \end{figure}

We need one last lemma, analogous to Lemma \ref{lem:cat}, before we can prove the main result in this section.

    \begin {lemma}
      \label {lem:cat2}
      Consider a fragment $F$ with two portals $\self e$
      and $\twin e$.
         If there is no point $p$ in $e$ such that the interior of the line
      segment $\self p \twin p$  lies in the interior of $F$, then $F$ is
      $5$-happy.
    \end {lemma}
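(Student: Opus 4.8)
The plan is to mirror the proof of Lemma~\ref{lem:cat}, replacing the parallelogram it builds by a general quadrilateral. Assume for contradiction that $F$ is not $5$-happy, so some shortest path $\geod$ meets $F$ in at least six connected components and hence crosses the portal $e$ at least five times, at points $p_1,\dots,p_5$ in the order the path visits them. As in Lemma~\ref{lem:cat}, a (minimum-complexity) shortest path cannot revisit a vertex of $F$, so the four maximal subpaths of $\geod$ inside $F$ between consecutive crossings share no vertices of $F$, and therefore the middle two, $S_2=\geod(\twin{p_2},\self{p_3})$ and $S_3=\geod(\twin{p_3},\self{p_4})$, are straight segments (a vertex of $F$ on either would have to be shared with a neighbouring subpath). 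Both segments run from a point of $\twin e$ to a point of $\self e$. The same routine technicalities as in Lemma~\ref{lem:cat} apply: non-transversal crossings and crossings through a portal endpoint can occur only near the endpoints of $\geod$ and hence a bounded number of times, and by Observation~\ref{obs:shortest_path_component} a subpath with both endpoints on one portal edge is just a segment of that edge, so we may assume all five crossings traverse $e$ consistently, exactly as in the setup of Lemma~\ref{lem:cat}.

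Next I would look at the quadrilateral $R$ with vertices $\twin{p_2},\self{p_3},\self{p_4},\twin{p_3}$ in this cyclic order, bounded by $S_2$, the sub-segment of $\self e$ from $\self{p_3}$ to $\self{p_4}$, the reverse of $S_3$, and the sub-segment of $\twin e$ from $\twin{p_3}$ to $\twin{p_2}$. Since $\geod$ is simple, $S_2$ and $S_3$ are disjoint and, being maximal subpaths inside $F$, they meet $\partial F$ only in their endpoints; hence $R$ is a simple quadrilateral whose interior lies in the interior of $F$. Its diagonal $\overline{\self{p_3}\twin{p_3}}$ joins the twin points $\self{p_3}\in\self e$ and $\twin{p_3}\in\twin e$, so it is exactly a segment $\self p\twin p$ for $p=p_3\in e$. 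If this is the diagonal of $R$ lying in the \emph{interior} of $R$, then its relative interior lies in the interior of $F$, contradicting the hypothesis of the lemma, and we are done.

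The crux — and the only place where the argument genuinely departs from Lemma~\ref{lem:cat} — is showing that $\overline{\self{p_3}\twin{p_3}}$, rather than $\overline{\twin{p_2}\self{p_4}}$, is the interior diagonal of $R$. In the parallel case this is automatic: $S_2\parallel S_3$ makes $R$ a parallelogram, so both diagonals are interior. For non-parallel edges the path turns by the portal angle $\alpha$ at the crossing $p_3$ (it does not bend there, but the portal identification is a rotation), so $S_2$ and $S_3$ differ in direction by $\alpha$ and $R$ is a genuine quadrilateral; I would show $R$ is convex at $\self{p_4}$ and at $\twin{p_2}$ by an angle analysis. The no-bend condition at the crossings $p_2$ and $p_4$ pins down the directions in which $S_2$ leaves $\twin{p_2}$ and $S_3$ leaves $\self{p_4}$ relative to the portal edges, and since $\geod$ does not re-cross $e$ between these crossings, these directions must point into the side of $R$ that keeps the interior angles at $\twin{p_2}$ and $\self{p_4}$ below $\pi$. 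This is the same bookkeeping that makes the cutting ray in Lemma~\ref{lem:quad_fragment} point at angle $\alpha_z-\alpha/2$, and I expect it to be the most delicate step. A possibly cleaner alternative is a continuity argument: $S_2$ and $S_3$ are chords of $F$ from $\self e$ to $\twin e$ lying on opposite sides of the twin correspondence along $e$ (for one, its $\self e$-endpoint lies ahead of the twin of its $\twin e$-endpoint, for the other, behind), so as one sweeps chords through $R$ from $S_2$ to $S_3$ one must pass an honest twin segment, which then lies in the interior of $F$ because the sweep stays inside $R$.
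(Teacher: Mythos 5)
Your proposal follows the same route as the paper: build the quadrilateral $R=\twin{p_2}\self{p_3}\self{p_4}\twin{p_3}$ from the two straight middle subpaths $S_2$ and $S_3$, then argue that its diagonal $\self{p_3}\twin{p_3}$ lies in $\mathrm{int}(F)$. The paper's own proof is terser — it passes directly from ``empty quadrilateral'' to the conclusion — and you are right to flag that this step needs $\self{p_3}\twin{p_3}$ to be an \emph{interior} diagonal of $R$, which is not automatic for a general simple quadrilateral. Your plan to get this by showing $R$ is convex at $\twin{p_2}$ and $\self{p_4}$ is sound: a simple quadrilateral has at most one reflex vertex, so convexity at those two forces any reflex vertex to be $\self{p_3}$ or $\twin{p_3}$, and in either case the interior diagonal emanates from that reflex vertex, i.e.\ it is $\self{p_3}\twin{p_3}$. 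But the angle bookkeeping you anticipate as ``the most delicate step'' has a cleaner resolution that makes $R$ outright convex: each vertex of $R$ lies on a portal edge of $F$, and since the emptiness you inherit from the Lemma~\ref{lem:cat} argument gives $\mathrm{int}(R)\subseteq\mathrm{int}(F)$, the interior of $R$ near each vertex sits in the half-plane on the $F$-interior side of the portal edge through that vertex; combined with the transversality of $S_2$ and $S_3$ at their endpoints, this bounds all four interior angles of $R$ strictly below $\pi$. Hence $R$ is convex, both diagonals are interior, and in particular the relative interior of $\self{p_3}\twin{p_3}$ lies in $\mathrm{int}(F)$, giving the contradiction. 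Your continuity/sweep alternative would also work, but it is heavier machinery than the problem requires.
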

  \begin {proof}
    The proof is almost identical to that of Lemma \ref{lem:cat}.
    
      Suppose that $F$ is not $5$-happy. 
      Assume without loss of generality that $\self e$ is above $\twin e$, and that $\self e$ starts to the left of
    $\twin e$. 
      Then there is a shortest
      path $\geod$ whose intersection with $F$ consists of at least
      six connected components, and thus $\pi$ crosses the portal at
      least five times, in points
      $p_1,\ldots,p_5$. 
      Since $\geod$ is a shortest path, these components cannot share vertices (otherwise we could
      simply shorten the path). 
      It follows that $\geod(\twin {p_2},\self {p_3})$ must be a line segment (otherwise, if it would include a vertex from $F$, such a vertex would be shared with $\geod(\twin {p_1},\self {p_2})$ or with $\geod(\twin {p_3},\self {p_4})$).
      Analogously, $\geod(\twin {p_3},\self {p_4})$ must be a line segment.
      Thus $\twin {p_2} \self {p_3} \self {p_4} \twin {p_3}$
      is an empty quadrilateral. Hence the interior of segment $\self {p_3} \twin {p_3}$ is contained in $F$.
  \end {proof}

  In this case it will be useful to have the following corollary.
  
  \begin{corollary}
  \label{cor:vertical_segment}
      Let \P be a portalgon with one fragment $F$ and exactly one portal, such that the angular bisector between the portal edges is horizontal. 
      If \P is not 5-happy, then there exists a vertical line segment connecting the two portal edges whose interior is inside $F$.
  \end{corollary}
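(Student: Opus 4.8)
The plan is to apply the contrapositive of Lemma~\ref{lem:cat2} and then upgrade the connecting segment it yields to a vertical one. Fix coordinates so that the angular bisector of $\self e$ and $\twin e$ is the $x$-axis; we may assume the portal edges are not vertical (otherwise they are anti-parallel and $F$ is $2$-happy, cf.\ Observation~\ref{obs:angle_bound}, contradicting the hypothesis). Assume w.l.o.g.\ that $\self e$ lies above $\twin e$. Since $\P$ is not $5$-happy, following the proof of Lemma~\ref{lem:cat2} we obtain a shortest path $\geod$ that crosses the portal $e$ at least five times, at points $p_1,\dots,p_5$, such that $\geod(\twin{p_2},\self{p_3})$ and $\geod(\twin{p_3},\self{p_4})$ are straight segments whose interiors lie in the interior of $F$.

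Consider the quadrilateral $Q$ bounded by these two segments together with the sub-segments $\self{p_3}\self{p_4}\subseteq\self e$ and $\twin{p_2}\twin{p_3}\subseteq\twin e$. All four vertices of $Q$ lie on portal edges, so each interior angle of $Q$ is at most $\pi$ (locally $F$ is a half-plane at each such vertex and both incident sides of $Q$ lie inside it); its two transversal sides cannot cross since $\geod$ is simple, and its two portal-edge sides cannot cross since they lie on distinct edges of the simple polygon $F$. Hence $Q$ is a convex quadrilateral contained in $F$, one pair of opposite sides of which lies on $\self e$ and on $\twin e$.

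It now suffices to produce a vertical chord of $Q$ joining its side on $\self e$ to its side on $\twin e$: its interior lies in the interior of $F$, so it is the required segment. Because $Q$ is convex and $\self e$, $\twin e$ carry non-vertical supporting lines of $Q$, a vertical line at an $x$-coordinate lying in the $x$-projections of both these sides meets $\partial Q$ exactly on those two sides, and by convexity the portion of the line inside $Q$ is such a chord. So it is enough that the $x$-projections of the two portal-edge sides of $Q$ overlap; and for this it is enough, in turn, that the two transversal sides $\geod(\twin{p_2},\self{p_3})$ and $\geod(\twin{p_3},\self{p_4})$ of $Q$ (equivalently, the two diagonals of $Q$, one of which is the twin segment $\self{p_3}\twin{p_3}$) lie on opposite sides of the vertical direction, since their directions bound the closed interval of directions of chords of $Q$ joining the two portal-edge sides.

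The remaining step, which I expect to be the main obstacle, is proving this ``straddling'' claim; this is exactly where the hypothesis that the bisector is horizontal enters, as it fails for a generic chord direction. The idea is that, expressed relative to the horizontal bisector, the portal gluing is a reflection in the $x$-axis followed by a translation by the shift, so the twin segment $\self p\twin p$ always has horizontal component equal to the (nonzero) shift and hence tilts consistently to one side of the vertical; meanwhile, since the direction of $\geod$ rotates by a fixed angle each time it crosses $e$ and $\geod$ is simple, the other diagonal $\twin{p_2}\self{p_4}$ must reach the opposite side of the vertical. The usual ``w.l.o.g.'' reductions of this section (which of $u,v$ is which, the sign of the shift, and whether to argue with $\geod$ or its reverse) fix the remaining sign choices. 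This yields the vertical chord of $Q$ and completes the proof.
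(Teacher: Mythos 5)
Your approach diverges from the paper's at the decisive step and leaves a genuine gap there. Both arguments extract the empty quadrilateral $Q=\twin{p_2}\self{p_3}\self{p_4}\twin{p_3}$ from the proof of Lemma~\ref{lem:cat2}, and both hinge on the same key observation: with a horizontal bisector, the twin segment $\self{p_3}\twin{p_3}$ --- one diagonal of $Q$ --- has a fixed nonzero horizontal displacement equal to the shift, hence points strictly off-vertical. The paper then finishes directly: it asserts the left-to-right vertex order $\twin{p_2},\self{p_3},\twin{p_3},\self{p_4}$, so the vertical ray upward from $\twin{p_3}$ enters $Q$ and hits $\self{p_3}\self{p_4}\subset\self e$, producing the desired segment. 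You instead reduce to showing the two diagonals of $Q$ straddle the vertical direction. That reduction is sound, but note your parenthetical ``equivalently'' is off: for a convex quadrilateral the interval of chord directions joining one pair of opposite sides is bounded by the two diagonals, with the other two sides generally giving directions strictly inside the interval, so ``transversal sides straddle vertical'' is strictly stronger than ``diagonals straddle vertical'' (either would suffice, but they are not equivalent).

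The real gap is the straddling claim itself, which you flag as ``the main obstacle'' and leave as a sketch. The rotation-by-$\alpha$-per-crossing fact controls the directions of the transversal sides $\geod(\twin{p_2},\self{p_3})$ and $\geod(\twin{p_3},\self{p_4})$; it says nothing directly about the diagonal $\twin{p_2}\self{p_4}$, which is not a subpath of $\geod$. Concretely, the horizontal displacement along that diagonal equals the $x$-displacement from $\self{p_2}$ to $\self{p_4}$ minus the shift $\Delta$, and its sign is undetermined without pinning down the relative positions of $p_2,p_3,p_4$ along $e$ --- which is essentially the vertex-order fact the paper invokes and which you have not established. Until that step is filled in, the argument does not close. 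A more direct route, mirroring the paper, is to establish that $\twin{p_3}$ lies between $\self{p_3}$ and $\self{p_4}$ in $x$-coordinate (the shift gives $\self{p_3}$ strictly left of $\twin{p_3}$, and the non-crossing of the transversal sides with the ordering of the crossings gives the other inequality) and then place the vertical segment through $\twin{p_3}$, rather than detouring through the direction interval.
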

  \begin{proof}
    Assume without loss of generality that 
    the portal edges are diverging, with
    $\self e$ above $\twin e$, and $\self e$ starting to the left of
    $\twin e$. 
   The proof of Lemma~\ref{lem:cat2} guarantees the existence of an empty quadrilateral $\twin {p_2} \self {p_3} \self {p_4} \twin {p_3}$. 
   We have that the four vertices of the empty quadrilateral appear, from left to right, in the order  $\twin {p_2}, \self {p_3}, \twin {p_3}, \self {p_4}$.
   Moreover, the horizontal angular bisector, combined with the fact that  $\self e$ starts to the left of
    $\twin e$, imply that $\self {p_3}$ is strictly to the left of $\twin {p_3}$. 
   Therefore the interior of the vertical line segment that goes upward from $\twin{p_3}$ is in the  interior of the empty quadrilateral, and therefore it is also inside $F$.
  \end{proof}

  \begin {lemma}
  \label{lem:happy_quadri}
     Let \P be a portalgon with one fragment $F$ with $n$
      vertices, and one portal. There exists
      a $5$-happy portalgon $\P'$ equivalent to \P consisting of at
      most three fragments and total complexity $O(n)$.
  \end {lemma}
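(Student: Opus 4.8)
The plan is to follow the proof of Lemma~\ref{lem:happy_portalgon_exists_parallel} essentially line for line, replacing each ingredient by its non-parallel counterpart. First I would dispose of the easy regimes. If the two portal edges are parallel, the claim is exactly Lemma~\ref{lem:happy_portalgon_exists_parallel}. If the angle $\alpha>0$ between the edges is at least a fixed constant, say $\alpha>\pi/5$, then by Observation~\ref{obs:angle_bound} the fragment $F$ is already $(\lfloor\pi/\alpha\rfloor+1)$-happy and hence $5$-happy, so we may take $\P'=\P$. Thus assume $\alpha>0$ is small. Fix coordinates so that the angular bisector of the two portal edges is horizontal, and (relabelling and reorienting the portal edges if necessary) so that $\self e$ lies above $\twin e$, the edges diverge to the right, and $\self e$ starts to the left of $\twin e$.

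Next I would isolate the region to be repaired. By Corollary~\ref{cor:vertical_segment}, if $\P$ is not already $5$-happy then some vertical segment joining $\self e$ to $\twin e$ has its interior inside $F$. Let $m_\ell$ (resp.\ $m_r$) be the leftmost (resp.\ rightmost) vertical line whose intersection with $F$ contains such a segment; these play the role of the translates of $m$ in the parallel proof, and each passes through a vertex ($\ell$, resp.\ $r$) of $F$. Because no chain of $\partial F$ separates $\self e$ from $\twin e$ between $m_\ell$ and $m_r$ (otherwise no connecting vertical segment could exist at all), the quadrilateral $Z$ bounded above by a sub-segment of $\self e$, below by a sub-segment of $\twin e$, and on the sides by the vertical segments on $m_\ell$ and $m_r$, lies inside $F$, and two of its opposite edges are (pieces of) the twin portal edges. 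Cutting $Z$ out of $F$ splits $F$ into at most seven fragments when no three vertices of $F$ are collinear (at most two reflex vertices of $F$ on each vertical side). I would then apply Lemma~\ref{lem:quad_fragment} to $Z$: the ray-cut-and-reglue operation replaces $Z$ by a single equivalent fragment with shift $\Delta=0$, which by Lemma~\ref{cl:nonparallel} is $2$-happy; the pieces into which the vertical sides of $Z$ are chopped reappear on the boundary of the new fragment and are re-identified with the matching pieces of the neighbouring fragments, so the portalgon as a whole stays equivalent to $\P$.

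It then remains to bound the happiness of the fragments other than the transformed $Z$. Every fragment containing at most one of the two portal edges of $e$ is $2$-happy by Observation~\ref{obs:single_portaledge_fragments}; this leaves only the four ``corner'' fragments: $T$ and $B$ containing the start vertices of $\self e$ and $\twin e$, and the two containing their end vertices. For $T$ (the others are symmetric) I would rerun the case analysis of the parallel proof verbatim, which uses only Observation~\ref{obs:shortest_path_component} (a purely topological fact about a simple polygon and a line segment, independent of $\alpha$): a maximal component of $\geod\cap F$ either contains an endpoint of $\geod$ or joins a point $p_i\in\self e$ to a point $q_i\in\twin e$; each such component meets the vertical side of $T$ at most once, and classifying the latter components according to whether $p_i$ lies on the part of $\self e$ in $T$ and whether $q_i$ lies on the part of $\twin e$ in $B$ shows that at most four of them can meet $T$, so $T$ is $4$-happy. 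Hence $\P'$ is $5$-happy, equivalent to $\P$, has at most seven fragments, and (each vertex of $F$ lying in $O(1)$ of them) complexity $O(n)$. Finally, exactly as in the parallel proof, shifting $m_\ell$ and $m_r$ inward by an arbitrarily small $\eps$ reduces the number of fragments to three and removes the need for general position.

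The step I expect to be the real obstacle is the bookkeeping around $Z$: checking that each of the four corner fragments really remains $4$-happy (the parallel analysis relies on several ``if $\geod$ did this, it would cross a side twice'' sub-arguments that must be re-verified in the skewed geometry, e.g.\ that a component joining the parts of $\self e$ in $T$ and $\twin e$ in $B$ must pass through the vertex $\ell$), and tracking how the vertical sides of $Z$ are subdivided and permuted by the Lemma~\ref{lem:quad_fragment} operation so that the result is genuinely an equivalent portalgon. Everything else is a direct transcription of the parallel argument.
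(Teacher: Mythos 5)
Your high-level plan is the right one, and it matches the paper's proof almost step for step: isolate a quadrilateral $Z$ between the two portal edges, transform $Z$ with Lemma~\ref{lem:quad_fragment}, then rerun the case analysis of Lemma~\ref{lem:happy_portalgon_exists_parallel} on the remaining fragments using Observation~\ref{obs:shortest_path_component}. The gap is in your choice of cut segments. The paper cuts along segments $m_\ell, m_r$ of the form $\self p\twin p$ (coming directly from Lemma~\ref{lem:cat2}), so the corners of $Z=\self a\self b\twin a\twin b$ come in twin pairs $(\self a,\twin a)$, $(\self b,\twin b)$: consequently the top edge $\self e|_{[a,b]}$ and bottom edge $\twin e|_{[a,b]}$ of $Z$ are precisely a twin portal pair, and $Z$ is a legitimate input to Lemma~\ref{lem:quad_fragment}. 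You instead cut along vertical lines (invoking Corollary~\ref{cor:vertical_segment}). But in the only interesting case $\Delta\neq 0$ (if $\Delta=0$ we are already $2$-happy by Lemma~\ref{cl:nonparallel}), a vertical line meets $\self e$ and $\twin e$ at parameters that differ by $\Delta$; the intersection points are \emph{not} a twin pair. So the top and bottom edges of your $Z$ are not a twin portal pair, and the portal $e$ is ``smeared'' across the boundary between $Z$ and the flanking pieces: part of the portal edge on top of $Z$ is glued to the bottom of the left piece, and part of the bottom of $Z$ is glued to the top of the right piece. Your $Z$ therefore has several portal fragments, two of which attach it to its neighbours, and Lemma~\ref{lem:quad_fragment} does not apply to it as stated.

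Note also that Corollary~\ref{cor:vertical_segment} is not what the parallel proof's ``translates of $m$'' generalize to; those translates were already of the form $\self p\twin p$, so the correct generalization is Lemma~\ref{lem:cat2} used verbatim, as the paper does. (Corollary~\ref{cor:vertical_segment} is used by the paper only in the algorithmic Lemma~\ref{lem:linear_time_rayshooting} to locate the protruding vertices in linear time, not to carry out the existence argument.) The rest of your argument, including the four-way happiness analysis of the corner fragments $T$ and $B$ and the $\eps$-shift trick to cut down to three fragments, goes through unchanged once the sides of $Z$ are taken to be $\self p\twin p$ segments.
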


  \begin{proof}
Assume without loss of generality
    that $\twin {e}$ is horizontal, $\self e$ is above $\twin e$, and that it starts to the left of
    $\twin e$. As before, we consider both portal edges oriented left-to-right.
    
    We now argue
      that when no three vertices of $F$ are collinear, and $F$ is not
      already $5$-happy, we can split $F$ into at most seven $4$-happy
      fragments of total complexity $O(n)$.
      Finally, we show how to reduce the
      number of fragments to three, while remaining $4$-happy, even
      without the general position assumption.

      By Lemma~\ref{lem:cat2}, if there is no line segment of the form $\self{p} \twin{p}$ fully contained in  $F$,  $F$ is already $5$-happy. Let $m_\ell$ (resp., $m_r$) be the
      leftmost (resp., rightmost) such segment; $m_\ell$ contains a vertex $\ell$ of $F$ and $m_r$
      contains a vertex $r$ of $F$ (possibly, $\ell$ or $m$ is an
      endpoint of $\self e$ or $\twin e$).  Let $\self a$ and $\twin a$ be the
      intersection points of $m_\ell$ with $\self e$ and $\twin e$, and let
      $\self b$ and $\twin b$ be the intersection points of $m_r$ with $\self e$
      and $\twin e$.  We cut the quadrilateral
      $Z = \self a \self b \twin a \twin b$ from $F$, which splits $F$ into at most
      seven fragments (since, by general position,
      $\overline{\self a\twin a}$ and $\overline{\self b\twin b}$ contain at
      most two reflex vertices each). We now transform $Z$ into a
      $2$-happy fragment using Lemma~\ref{lem:quad_fragment}.
      %
      Let $T$ and $B$ be the fragments containing the starting points
      of $\self e$ and $\twin e$, respectively. We argue that $T$ is
      $4$-happy. The argument that $B$ is $4$-happy is symmetric. The
      same holds for the fragments containing the endpoints of $\self e$ and
      $\twin e$. Any other fragments (if they exist) contain only one
      portal edge and are thus $2$-happy by
      Observation~\ref{obs:single_portaledge_fragments}.

      Consider the maximal connected components of a shortest path
      $\geod = \geod(s,t)$ with $F$. By
      Observation~\ref{obs:shortest_path_component} such a component
      either: (i) contains $s$, (ii) contains $t$, or (iii) connects a
      point $p_i$ on $\self e$ to a point $q_i$ on $\twin e$. Again by
      Observation~\ref{obs:shortest_path_component} each such
      component can intersect $\overline{\self a\twin a}$ at most once, so
      each such component can intersect $T$ at most once.

      We now further classify the type (iii) components into three
      types, depending on whether $p_i$ lies on the part of $\self e$ in $T$
      and whether $q_i$ lies on the part of $\twin e$ in $B$.

      If $p_i$ lies outside $T$, then
      Observation~\ref{obs:shortest_path_component} implies that
      $\geod(p_i,q_i)$ does not intersect (the interior of) $T$ at all
      since $\geod(p_i,q_i)$ would have to intersect
      $\overline{a\ell}$ twice.

      If $p_i$ lies in $T$ and $q_i$ lies in $B$, then
      Observation~\ref{obs:shortest_path_component} implies that
      $\geod(p_i,q_i)$ contains point $\ell$. Hence, there can be at
      most one such component of this type that intersects $T$.

      If $p_i$ lies in $T$ but $q_i$ lies outside of $B$, we have that
      $q_i = \twin t$ for some point $\self t$ on $e$. 
      If the component
      starting in $\self t$ is again of type (iii) (i.e., $\self t=p_{i+1}$) it
      cannot intersect $B$ as this would imply that
      $\geod(p_{i+1},q_{i+1})$ intersects $\geod(p_i,q_i)$. Nor can it
      intersect $T$ (since $q_{i+1}$ lies on $\twin e$ it would have
      to intersect $\overline{a\ell}$ twice). The same argument holds
      for any component on $\geod(p_j,q_j)$ with $j > i$. Hence,
      $\geod(p_i,q_i)$ intersects $T$ only if this is the last
      component of type (iii). Clearly, there is only one such a
      component.

      It follows that there are only four components of
      $\geod \cap F$ that intersect $T$, and each such component
      intersects $T$ in only one consecutive subpath. Hence $T$ is
      $4$-happy.

      We conclude that the portalgon \P' that we obtain is
      $4$-happy, equivalent to \P, and has at most seven
      fragments. Furthermore, every vertex of \P appears in at most
      $O(1)$ fragments of $\P'$, and thus $\P'$ has complexity $O(n)$.

      Finally, observe that (before splitting $F$) we can actually
      shift the left and right sides of $Z$ inwards by some
      arbitrarily small $\eps$ (in particular, something smaller than
      the smallest distance between two non-adjacent edges of $Z$). It
      then follows that $F$ is now split into only three fragments,
      two of which are already $4$-happy. We transform the remaining
      fragment (parallelogram $Z$) into a $2$-happy parallelogram as
      before. We now obtain three fragments, of total complexity
      $O(n)$), even if $F$ contains three or more collinear vertices.
\end{proof}

\subsubsection{Computation}

In this section we focus on the case where \P consists of just one fragment with one portal.

\begin {lemma}
  \label{lem:linear_time_rayshooting}
  
  Let $\P=(\F,P)$ be an $h$-happy portalgon with $n$ vertices, one fragment, and one portal.
    An equivalent $5$-happy portalgon $\P'$
    with $3$
    fragments and total complexity $O(n)$ can be computed in $O(n + h)$ time.
  \end {lemma}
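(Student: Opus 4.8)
The plan is to turn the constructive proof of Lemma~\ref{lem:happy_quadri} into an algorithm, reusing the machinery from the parallel case as much as possible and paying an extra $O(h)$ only for a single explicit ray-tracing step. After an $O(1)$-time rigid motion we may assume that the angular bisector of the supporting lines of the two portal edges $\self e,\twin e$ is horizontal and that the edges diverge to the right with $\self e$ above $\twin e$, as in Corollary~\ref{cor:vertical_segment}. If the angle $\alpha$ between the edges is at least some fixed constant, then by Observation~\ref{obs:angle_bound} the single fragment $F$ is already $O(1)$-happy and (after splitting it with two arbitrary chords to meet the ``exactly three fragments'' requirement) we are done in $O(n)$ time; so we may assume $\alpha$ is small, and in particular the parallel case is treated along the way as a degenerate sub-case.

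Next we compute the leftmost segment $m_\ell$ and the rightmost segment $m_r$ of the form $\self p\twin p$ lying entirely in $F$, or certify that none exists. This is the analogue of Lemma~\ref{lem:compute_leftmost_and_rightmost_segments}. The one new point is that, since the edges are not parallel, the segments $\self p\twin p$ do not all have a common slope; however, as $p$ ranges over $e$ these segments sweep the interior of the quadrilateral spanned by $\self e$ and $\twin e$ monotonically from left to right (Corollary~\ref{cor:vertical_segment} provides a vertical representative in the range of interest), so they inherit a well-defined left-to-right order and the proof of Lemma~\ref{lem:compute_leftmost_and_rightmost_segments} goes through: walk along $\partial F$, split it into maximal chains of types $L$, $R$, $M$ inside the spanned quadrilateral, report ``already $5$-happy'' via Lemma~\ref{lem:cat2} if a type-$M$ chain occurs, and otherwise read off $m_\ell$, $m_r$, and their blocking vertices $\ell$, $r$. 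This takes $O(n)$ time; in the parallel sub-case it is literally Lemma~\ref{lem:compute_leftmost_and_rightmost_segments} applied to the common slope of $\self p\twin p$.

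Following the last paragraph of the proof of Lemma~\ref{lem:happy_quadri}, we shift $m_\ell$ slightly to the right and $m_r$ slightly to the left, by an $\eps$ smaller than the smallest distance between two non-adjacent edges of the quadrilateral they bound, and cut $F$ along the two resulting segments. In $O(n)$ time this splits $F$ into exactly three fragments: the quadrilateral $Z$ whose opposite edge pair are portal sub-edges, and two fragments $T$ and $B$ which are $4$-happy by the case analysis in the proof of Lemma~\ref{lem:happy_quadri}. It remains to replace $Z$ by an equivalent $2$-happy fragment. By Lemma~\ref{lem:quad_fragment}, $Z$ is equivalent to a zero-shift quadrilateral $Z'$, which is $2$-happy by Lemma~\ref{cl:nonparallel}; and since $Z$ is already a quadrilateral bounded by its two portal sub-edges (so the obstruction phenomenon of Figure~\ref{fig-full:bulge} cannot occur), $Z'$ is a single simple quadrilateral. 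We produce $Z'$ exactly as in the proof of Lemma~\ref{lem:quad_fragment}: shoot the ray $\rho$ of slope $\alpha_z-\alpha/2$ from the leftmost vertex of $Z$'s lower portal edge, trace $\rho$, cut $Z$ along $\rho$ each time it crosses the portal, and reassemble the resulting $r+1$ pieces along the portal sub-edges into $Z'$, where $r$ is the number of times $\rho$ crosses the portal. For parallel edges we instead flatten $Z$ with the closed-form construction of Lemma~\ref{lem:compute_parallelogram} in $O(1)$ time, as in Lemma~\ref{lem:compute_parallel}. In all cases the output $\P'=\{T,B,Z'\}$ is equivalent to $\P$, has three fragments and total complexity $O(n)$, and is $5$-happy (in fact $T,B$ are $4$-happy and $Z'$ is $2$-happy) by Lemma~\ref{lem:happy_quadri}. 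The running time is $O(n)+O(r)$.

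The one remaining point, and the only reason the time is $O(n+h)$ rather than $O(n)$, is the bound $r=O(h)$ on the number of portal crossings of the cutting ray. The intended argument: the interior of the corridor $Z$ contains no vertex of $\P$ and lies in the interior of $F$, so $\rho$ is a geodesic of $\Sigma(\P)$ until it first leaves $Z$; and because $m_\ell$ and $m_r$ are the extreme segments of the form $\self p\twin p$, any competing path between the two endpoints of $\rho$ that leaves $Z$ must cross $m_\ell$ or $m_r$, hence leave and re-enter the empty corridor, which a monotonicity/convexity argument on the two diverging portal edges shows cannot make it strictly shorter. Thus $\rho$ (or a suitable truncation of it) is a shortest path of $\P$, and since $\P$ is $h$-happy such a path crosses the single portal at most $h-1$ times, so $r\le h-1$. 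Making this last step fully rigorous---in particular ruling out shortcuts that exit the corridor through its non-portal sides, and handling the case where $\rho$ wraps far enough around to return near its start---is the main obstacle; everything else is bookkeeping analogous to Lemmas~\ref{lem:compute_parallel} and~\ref{lem:compute_parallelogram}.
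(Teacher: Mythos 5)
Your proposal follows the paper's overall plan --- normalize the orientation, partition $F$ into three pieces $T$, $Z$, $B$, keep the already-happy $T$ and $B$, and rebuild the middle quadrilateral $Z$ to have zero shift by ray tracing --- but the details of the partition differ in a way worth flagging. The paper does \emph{not} cut along (shifted) $m_\ell$- and $m_r$-type segments of the form $\self p\twin p$; it cuts along two \emph{vertical} segments through the apexes of the hourglass between $\self e$ and $\twin e$ (which exist by Corollary~\ref{cor:vertical_segment} once the bisector is horizontal), and computes them by computing that hourglass in $O(n)$ time. The vertical direction is not an arbitrary choice: the paper's key geometric observation is that after normalizing, each re-entry point of the ray lies strictly to the right of the preceding exit point and the slope decays towards zero, so \emph{the ray never crosses any vertical line twice}. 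It is this $x$-monotonicity that lets the paper account for ray-tracing cost cleanly: the ray visits $B$, $Z$, $T$ in left-to-right order, its steps in $B\cup T$ are charged to the $O(n)$ boundary features, and the remaining $O(h)$ steps all occur in the $O(1)$-complexity middle piece. Your cuts along shifted $m_\ell$/$m_r$ make $Z$ a quadrilateral too, and you side-step the need to charge steps in $T$ and $B$ by starting the flattening ray at a corner of $Z$, which is a legitimate variant; but your adaptation of Lemma~\ref{lem:compute_leftmost_and_rightmost_segments} to a varying-slope family $\self p\twin p$ is more work than the paper's one-line hourglass computation, and you lose the explicit monotonicity observation that the paper leans on.

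On the point you flag as the remaining obstacle --- that the number $r$ of portal crossings of the flattening ray inside $Z$ is $O(h)$ --- the paper treats this as immediate ``(since \P is $h$-happy)'' and does not spell out why the ray, or a prefix of it, is a shortest path in $\Sigma(\P)$. So your proposal and the paper are at the same level of rigor on this step; the paper's $x$-monotonicity observation is the main extra ingredient it offers towards closing it (a vertical-line argument that a competitor leaving the corridor through $m_\ell$ or $m_r$ cannot be shorter), but even there it is stated as an observation rather than proved. In short, your route is correct in outline and lands in the same place; the principal deviations are the choice of cut direction (the paper's vertical cuts, motivated by the monotonicity of the ray) and the use of the hourglass rather than an adapted sweep to find the cuts.
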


  \begin {proof}
    We follow the approach in Lemma \ref{lem:happy_quadri}, splitting the fragment into three parts; left, middle,
    and right so that: (i) left and right have complexity $O(n)$, but
    they are already 4-happy, and (ii) middle, with 
    $O(1)$ complexity and allows for $h$ portal crossings (since \P is $h$-happy). It then
    follows that the cost of tracing the ray through the left and
    right parts (by repeatedly computing the first intersection point)
    is $O(n)$, and the cost of tracing the ray through the middle part
    is $O(h)$. 
    
    What remains to explain is how to compute this partition.

    We apply a constant number of rigid transformations to the fragment so that the angular bisector between the two
    portal edges is horizontal, the portal edges are diverging, and
    the top portal edge starts to the left of the bottom one. 
    Note that the required transformations preserve shortest paths. 

    It follows that the segment connecting the left endpoints has
    negative slope, 
    and thus the ray that we shoot has 
    positive
    slope. Let $p_i$ be the $i^\mathrm{th}$ point where the ray ``enters''
    the fragment 
    (i.e., at the bottom edge),
    and let $q_i$ be the $i^\mathrm{th}$ point where it
    ``exits'' the fragment (i.e., at the 
    top edge).
    
    We observe that $p_{i+1}$ lies right of $q_i$.
    Initially, this is due to the ray having 
    positive slope. Since the
    starting point of the top portal lies 
    to the left
    of the starting point
    of the bottom portal, each entry point is to the right of the immediately previous exit point.
    After
    every portal crossing, the slope of the ray 
    decreases,
    approaching zero.
    As a consequence, the ray only crosses any vertical line once.

    By Corollary~\ref{cor:vertical_segment}, we know that if the fragment is not already 5-happy, there is a vertical line segment between the two portal edges that is fully contained in it.
    Thus we can assume that is the case.
    Then we can cut the fragment along two vertical segments; one through the rightmost point ``protruding
    into'' the quadrilateral defined by the two portal edges from the left, and one through the leftmost point protruding from the right.
    
    We can compute these points by computing the hourglass defined by
    the two portal edges (i.e., computing the shortest paths connecting
    the leftmost and rightmost portal endpoints). This takes $O(n)$
    time.
  \end {proof}

  We can improve the previous running time by avoiding tracing the ray in the middle fragment explicitly.

  \begin {lemma}
    \label{lem:make_one_fragment_happy}
   Let $\P=(\F,P)$ be an $h$-happy portalgon with $n$ vertices, one fragment, and one portal.
    An equivalent $5$-happy portalgon $\P'$
    with $3$ fragments and total complexity $O(n)$ can be computed in $O(n + \log h)$ time.
  \end {lemma}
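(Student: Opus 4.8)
The plan is to reuse the algorithm behind Lemma~\ref{lem:linear_time_rayshooting} almost verbatim and change only how the ``middle'' fragment is processed. Recall that that algorithm spends $O(n)$ time applying a constant number of rigid motions, computing the hourglass between the two portal endpoints, and thereby splitting $F$ into a left part, a constant-complexity middle fragment $Z$ carrying both portal edges, and a right part, where the left and right parts are already $4$-happy; that $O(n)$ also covers tracing the cutting ray through the left and right parts. The \emph{only} place where $\Theta(h)$ time is spent is tracing the cutting ray $\rho$ of Lemma~\ref{lem:quad_fragment} through $Z$, which crosses the (single) portal $O(h)$ times before it leaves $Z$ through a non-portal edge. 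So it suffices to compute the effect of this trace---the equivalent zero-shift fragment that $Z$ is replaced by---in $O(\log h)$ time instead of $O(h)$.

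\textbf{Unfolding $Z$.}
Since the two portal edges of $Z$ are not parallel (otherwise Lemma~\ref{lem:compute_parallel} already applies), their supporting lines meet at a unique apex $o$, and the edges subtend an angle $\alpha$ at $o$. Identifying the portal edges and unfolding repeatedly produces copies $Z_0, Z_1, Z_2, \dots$ of $Z$ in the plane, where $Z_i$ is the image of $Z$ rotated about $o$ by $i\alpha$, consecutive copies are glued along the image of the portal, and $Z_i$ occupies a wedge $W_i$ of angular width $\alpha$ at $o$ bounded by its two portal edges. As $Z$ is bounded by two portal edges and two non-portal edges (the leftmost and rightmost transversal segments of the fixed cutting slope), it is convex, so any line meets $Z_i$ in a single segment entering and leaving through exactly two edges. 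In this picture $\rho$ becomes a single straight ray $L$, starting at the lift of the left endpoint of the bottom portal edge of $Z_0$ and heading in the fixed direction prescribed by Lemma~\ref{lem:quad_fragment}; tracing $\rho$ through $Z$ amounts to following $L$ through $Z_0, Z_1, \dots$ until, in some copy $Z_k$, it leaves through a non-portal edge rather than through the next portal edge, and $k = O(h)$ by the argument already given for Lemma~\ref{lem:linear_time_rayshooting} (cf.\ also Observation~\ref{obs:angle_bound}).

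\textbf{Binary search.}
I would find $k$ by an exponential-then-binary search on the predicate $\mathcal{P}(i)$ $=$ ``$L$ reaches $Z_i$ and crosses it from its incoming portal edge to its outgoing portal edge.'' This predicate is monotone: if $L$ traverses $Z_i$ portal-to-portal it has done so for every $Z_j$ with $j<i$, and as soon as $\mathcal{P}(i)$ fails the ray has left $Z$, so $\mathcal{P}(j)$ fails for all $j>i$; hence the first index with $\neg\mathcal{P}$ is exactly $k$. Evaluating $\mathcal{P}(i)$ takes $O(1)$ time: rotate $L$ about $o$ by $-i\alpha$ to obtain a ray $L_i$, intersect $L_i$ with the constant-complexity convex quadrilateral $Z$, and test whether it exits through the portal edge $\self e$ (equivalently, whether it does \emph{not} exit through a non-portal edge). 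The search thus costs $O(\log k)=O(\log h)$ time and also returns the exact entry point of $L$ into $Z_0$ and its exit point on a non-portal edge of $Z_k$. From $o$, $\alpha$, $k$, the direction of $L$, and these two points---a constant amount of data---I would reconstruct the equivalent zero-shift fragment of $Z$ in $O(1)$ time, exactly as Lemma~\ref{lem:compute_parallelogram} reconstructs the rectangle in the parallel case; the result is again a constant-complexity fragment with $\Delta=0$, hence $2$-happy by Lemma~\ref{cl:nonparallel}. Substituting this for the explicit trace of $\rho$ through $Z$ in the algorithm of Lemma~\ref{lem:linear_time_rayshooting} (and otherwise following Lemma~\ref{lem:happy_quadri}) yields the desired $5$-happy portalgon $\P'$ with $3$ fragments and total complexity $O(n)$, computed in $O(n+\log h)$ time.

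\textbf{Main obstacle.}
The delicate part is the correctness of the search step rather than its running time: one must carefully account for all ways $L$ can leave $Z_i$ (incoming portal edge, outgoing portal edge, or a non-portal edge), handle the degenerate case where $L$ passes exactly through a vertex of $Z$, and---crucially---argue that the final glued fragment is determined by the two endpoints and the integer $k$ alone, so that one never needs the full, possibly $\Theta(h)$-size, trace. One should also re-confirm that the bound $k=O(h)$ on the number of portal crossings inside $Z$ transfers unchanged from Lemma~\ref{lem:linear_time_rayshooting}, so that $O(\log k)=O(\log h)$.
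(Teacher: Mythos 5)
Your approach is conceptually the same as the paper's: split $\P$ into left/middle/right in $O(n)$ time (as in Lemma~\ref{lem:linear_time_rayshooting}), then replace the explicit $\Theta(h)$-time trace of the cutting ray through the middle quadrilateral $Z$ by an exponential/binary search over the number $k$ of portal crossings, using the fact that one crossing is a fixed rigid motion whose $k$-th power can be applied in $O(1)$. (The paper also mentions Lemma~7.1 of Erickson--Nayyeri as an alternative.) Your write-up is more explicit than the paper's about the predicate being searched and the monotonicity argument.

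There is, however, a concrete geometric error in the unfolding step, and it sits precisely in the part you flagged as delicate. You assert that the copies $Z_i$ are obtained by rotating $Z$ about the apex $o$ (the intersection of the supporting lines of $\self e$ and $\twin e$) by $i\alpha$, and that each $Z_i$ occupies a wedge at $o$. This holds only when the shift $\Delta$ is zero. In general the orientation-preserving isometry $\phi$ mapping $\self e$ to $\twin e$ (matching directed endpoints) is indeed a rotation by $\alpha$, but about its unique fixed point $c$, which is the intersection of the perpendicular bisectors of the segments joining corresponding endpoints of $\self e$ and $\twin e$. One has $c=o$ if and only if corresponding endpoints are equidistant from $o$, i.e.\ if and only if $\Delta=0$ --- which is the target configuration of the construction, not the input. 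When $\Delta\neq 0$, the copies $Z_i$ fan out around $c$, their portal lines do not share a common apex, and ``rotate $L$ about $o$ by $-i\alpha$'' evaluates your predicate $\mathcal{P}(i)$ on the wrong ray. The fix is local: compute $c$ in $O(1)$ from the endpoints of $\self e,\twin e$ and use $c$ in place of $o$ throughout; with that substitution your argument coincides with the paper's, in which $M$ (the matrix in homogeneous coordinates) is exactly the rotation by $\alpha$ about $c$ and $M^k$ is the rotation by $k\alpha$ about $c$.
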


  \begin {proof}
    Following Lemma~\ref{lem:linear_time_rayshooting}, we split the
    portalgon into three parts in $O(n)$ time.
    The middle piece is the only one that may not be happy yet, therefore we have to apply to it the procedure in Lemma \ref{lem:quad_fragment}.
    This comes down to tracing
    the ray through the middle piece.
    To this end, one can apply Lemma 7.1
    of Erickson and
    Nayyeri~\cite{erickson13tracin_compr_curves_trian_surfac}, presented in the context of annular ray shooting,  
    to trace
    the ray through the middle piece in $O(1 + \log k)$ time, for $k$ the number of intersections between the ray and the portal.
    For completeness, we describe here an more direct procedure that yields the same result.
    Consider a shortest path from $p$ to $q$ that goes through a portal $e$ several times.
    Given a crossing point of the shortest path with portal edge $\self e$, the point on $\twin e$ where the ray crosses $\twin e$ next can be computed by a composition of a translation and a rotation.
    Therefore it can be expressed as a matrix multiplication.
    To compute the last point where the shortest path crosses $e$, we need to figure out the power $k$ that corresponds to the last time the ray fully crosses the fragment.
    We do know know $k$, but we can find it in $O(\log k)$ by doing an exponential search. 
    We also need to know the total length of the ray, to actually construct the new fragment, but this we can get by applying $M^k$ to the starting point, and simply measuring the distance directly (we do not need to know the lengths of individual pieces).
    The results now follows by observing that since \P is $h$-happy, we have $k=O(h)$. 
    \qedhere
    %
    
    
    
    
  \end {proof}

  An interesting open question is whether the dependency on $h$ is necessary at all.

\subsection {Generalization based on fragment graph analysis}
\label {app:FGA}

Recall that we defined the \emph{fragment graph}
$G$ of a portalgon $\P$ as the graph that has a node for every fragment, and two fragments are
connected by a link if there is a portal whose portal edges occur in
these two fragments; see Figure~\ref {fig:fragmentgraph}.\footnote {We use the terms {\em node} and {\em link} for the fragment graph, to distinguish from the vertices and edges of the fragments themselves.}

We now argue how to reduce a portalgon to a simpler portalgon for the purpose of transforming it into a happy one; concretely, we show that if the fragment graph of the original portalgon has at most one cycle, then it is sufficient to consider the case of reconfiguring a single fragment with a single portal.

Since, by Observation~\ref{obs:single_portaledge_fragments} any
fragment with only one portal edge is 2-happy, the simplest
non-trivial case occurs when each fragment in the portalgon has two portal edges.
Indeed, we may otherwise simply remove the leaves.


%


%

\begin {lemma} \label {lem:cutleaves}
  Let $\P$ be a portalgon and $G$ its fragment graph. Let $G'$ be the graph we obtain by removing all leaves from $G$. The corresponding portalgon $\P'$ is happy if and only if $\P$ is happy.
\end {lemma}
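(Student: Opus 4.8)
The plan is to show that removing a single leaf from the fragment graph preserves happiness in both directions, and then iterate; since $G'$ is obtained from $G$ by repeatedly deleting leaves, and each step preserves happiness (in either direction), the result follows by induction on the number of leaves removed. So it suffices to treat the case where $G'$ is obtained from $G$ by deleting one leaf node, i.e.\ a fragment $L$ that has exactly one portal edge; call that portal $e$, with $\self e$ in $L$ and $\twin e$ in some other fragment $F$. Let $\Sigma = \Sigma(\P)$ and $\Sigma' = \Sigma(\P')$; note that $\Sigma'$ is obtained from $\Sigma$ by deleting the (open) copy of $L$ glued along $\self e$.

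The easy direction is: if $\P$ is happy then $\P'$ is happy. The fragments of $\P'$ are exactly the fragments of $\P$ other than $L$, so $\H(\P') = \max_{F' \neq L}\H(F')$ where each $\H(F')$ is now computed with respect to shortest paths in $\Sigma'$. The only subtlety is that deleting $L$ could in principle create \emph{new} shortest paths (between points of $\Sigma'$) that did not exist in $\Sigma$, or shorten existing ones, thereby increasing some $\H(F')$. First I would observe that $L$, being glued along a single edge, is a ``dead end'': any path in $\Sigma$ that enters the interior of $L$ must leave it through $\self e$ again, and by Observation~\ref{obs:single_portaledge_fragments} (applied with $F := L$ and its single portal edge) such a detour can always be shortcut along $e$. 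Hence every shortest path in $\Sigma$ between two points of $\Sigma'$ can be taken to avoid the interior of $L$ entirely, so it is also a path in $\Sigma'$, of the same length; conversely $\Sigma'$ is isometrically embedded in $\Sigma$ (distances can only decrease when we add material, but here the added material is only reachable through $\self e$, so distances between points of $\Sigma'$ are unchanged). Therefore $d_{\Sigma'} = d_\Sigma$ on $\Sigma' \times \Sigma'$, the shortest paths between points of $\Sigma'$ are the same in $\Sigma$ and $\Sigma'$, and $\H_{\P'}(F') \le \H_{\P}(F')$ for every $F' \neq L$. This gives $\H(\P') \le \H(\P)$.

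For the reverse direction, suppose $\P'$ is happy; we must bound $\H(\P)$, which requires also bounding $\H(L)$, the number of components in which a shortest path of $\Sigma$ meets $L$. By Observation~\ref{obs:single_portaledge_fragments}, $L$ is $2$-happy regardless, so $\H(L) \le 2$. For the other fragments $F' \neq L$: a shortest path $\pi$ in $\Sigma$ between points $p,q$ of $\Sigma$ might pass through the interior of $L$, but by the shortcutting argument above any maximal subpath of $\pi$ inside $L$ whose endpoints both lie on $\self e$ can be replaced by a segment of $\self e$ without increasing length, so there is a shortest path $\pi^*$ between $p,q$ avoiding $\operatorname{int}(L)$ with $c(F' \cap \pi^*) \le c(F' \cap \pi)$ only if we are careful — actually we need the inequality in the right direction, so instead I would argue on the minimum-complexity shortest path $\geod(p,q)$ directly: its maximal subpaths inside $L$ have both endpoints on $\self e$ (or contain $p$ or $q$), and minimality forces such a subpath to equal $\overline{rs} \subseteq \self e$, contradicting maximality unless it is a single point; hence a minimum-complexity shortest path visits $L$ in at most two components meeting only in points of $\self e$, and outside those it is a path in $\Sigma'$. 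Its restriction to $\Sigma \setminus \operatorname{int}(L)$, reinterpreted in $\Sigma'$, is still a shortest path there (again by $d_{\Sigma'} = d_\Sigma$), so $c(F' \cap \geod(p,q)) \le \H_{\P'}(F') \le h$. Combining, $\H(\P) = \max(\H(L), \max_{F'\neq L}\H(F')) \le \max(2, h)$, which is $O(1)$ when $h$ is; hence $\P$ is happy. The main obstacle, and the step deserving the most care, is making the ``$d_{\Sigma'} = d_\Sigma$ on $\Sigma'$'' claim fully rigorous together with the corresponding statement for shortest \emph{paths} (not just distances) — one has to handle paths that graze the boundary edge $\self e$ or pass through the endpoints of $\self e$ (vertices shared with $F$), which is exactly where the minimum-complexity convention and Observation~\ref{obs:single_portaledge_fragments} do the work.
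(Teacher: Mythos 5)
Your proof is correct and takes essentially the same approach as the paper's (which is quite terse): isolate a single leaf fragment $L$, invoke Observation~\ref{obs:single_portaledge_fragments} to bound $\H(L)$, and use the shortcutting/geodesic argument to show that shortest paths between points of $\Sigma'$ never enter the interior of $L$, so that distances and shortest paths on $\Sigma'$ are unchanged and the happiness of every other fragment is preserved; you then iterate over leaves. The small wobble you flag (whether to argue on an arbitrary shortest path or a minimum-complexity one) is in fact harmless: since the restriction of any shortest path to $L$ between two points $r,s$ on the portal edge is a geodesic of the simple polygon $L$, it already equals $\overline{rs}\subseteq e$, so no replacement is needed and the bound holds for every shortest path, not just minimum-complexity ones.
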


\begin {proof}
  Let $F$ be a fragment that is a leaf in $G$; $F$ is connected to the remainder of $\P$ by some portal $e$. By Observation~\ref{obs:single_portaledge_fragments}, $F$ is always happy. Furthermore, since no shortest path crosses $e$ twice, any other fragment $X$ of $\P$ is happy if an only if in $\P$ with $F$ removed and $e$ replaced by a boundary edge, $X$ is happy.
\end {proof}

We now argue that we can always safely put leaves back after rearranging the remaining portalgon.

\begin {lemma} \label {lem:putleaveback}
  Let $\P$ be an arbitrary portalgon with complexity $n$ and happiness $h$, and let $F$ be an unrelated fragment with a single portal edge $e$, of complexity $m$.
  Let $S$ be a collection of boundary edges of $\P$ of total length $|e|$.
  Then the portalgon $\P'$ we obtain by turning the edges of $S$ into portal edges and linking them to pieces of $e$ has complexity $O(n+m)$ and happiness $h$.
\end {lemma}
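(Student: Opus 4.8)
The plan is to prove the two assertions separately. For the complexity bound, observe that realizing the gluing requires subdividing the single portal edge $e$ of $F$ into $|S|$ collinear sub-edges whose lengths match those of the edges of $S$, listed in an arbitrary order, and pairing each sub-edge with the matching edge of $S$ into a portal. The edges of $S$ are already edges of $\P$, so no new vertices appear on the $\P$-side; on $F$ we insert at most $|S|-1$ new flat vertices along $e$. Since $S$ is a subset of the $O(n)$ boundary edges of $\P$, fragment $F$ gains $O(n)$ vertices, hence $\P'$ has $O(n)+m=O(n+m)$ vertices and $O(n+m)$ portals.

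For the happiness bound, the crucial observation is that in $\P'$ all portal edges of $F$ (the sub-edges of $e$) lie on the single line $\ell$ supporting $e$. I would extend the argument of Observation~\ref{obs:single_portaledge_fragments} to this situation: if some shortest path $\geod$ of $\P'$ had three or more components inside $F$, its middle component would connect two points $r,s$ lying on portal edges of $F$, hence on $\ell$, so that $\overline{rs}\subseteq e\subseteq F$; replacing that component by $\overline{rs}$ would then strictly shorten $\geod$, a contradiction. (The sub-case where the component already coincides with a segment of $e$ is a degeneracy excluded by the tie-breaking convention used for minimum-complexity shortest paths.) Hence $F$ is $2$-happy in $\P'$.

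Now consider any fragment $X$ of $\P$ and a minimum-complexity shortest path $\geod=\geod(p,q)$ of $\P'$. By the previous paragraph $\geod$ meets $F$ in at most two components, so removing from $\geod$ the (at most two) maximal subpaths contained in $F$ leaves at most three maximal subpaths $\geod_1,\geod_2,\geod_3$, each contained in $\Sigma(\P)$. Each $\geod_i$ is a shortest path of $\P'$ by subpath optimality, and since the length of a path inside $\Sigma(\P)$ is the same whether measured in $\Sigma(\P)$ or in $\Sigma(\P')\supseteq\Sigma(\P)$, we get $\|\geod_i\|=d_{\Sigma(\P')}(\text{endpoints})\le d_{\Sigma(\P)}(\text{endpoints})\le\|\geod_i\|$, so $\geod_i$ is also a shortest path of $\P$ and therefore crosses $X$ at most $h$ times. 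Consequently $\geod$ crosses $X$ at most $3h$ times (up to two extra components when $X$ happens to share an edge with $F$, which does not affect the order of magnitude), and together with the $2$-happiness of $F$ this yields $\H(\P')=O(h)$; in particular $\P'$ is happy whenever $\P$ is. When $S$ is a single edge the subdivision is vacuous, the two subpaths flanking a visit to $F$ join into one shortest path of $\P$, and the bound improves to $\H(\P')\le\max(h,2)$.

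The happiness step for fragments $X$ of $\P$ is the part I expect to require most care: after subdividing $e$, the fragment $F$ may be entered more than once by a shortest path of $\P'$ and used essentially as a teleporter between far-apart boundary edges of $\P$ that are linked to different sub-edges of $e$, so the subpaths on either side of a visit to $F$ cannot in general be merged. What keeps this under control is precisely the collinearity of all portal edges of $F$: it caps the number of visits to $F$ by two, after which the remaining pieces of the path can be charged to a bounded number of genuine shortest paths of $\P$.
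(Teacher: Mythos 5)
Your proof of the complexity bound and of the $2$-happiness of $F$ follows the same collinearity idea that the paper's (very terse) proof relies on: all sub-portal edges of $F$ lie on the single supporting line of $e$, so a middle component of $\geod\cap F$ could be replaced by a chord of $e$ and is therefore excluded, just as in Observation~\ref{obs:single_portaledge_fragments}.

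Where you genuinely diverge is in the second half. The paper's proof consists of three sentences and in effect only argues that $F$ is $2$-happy; it does not address how the happiness of the fragments of $\P$ can change once $F$ is glued back in. You explicitly identify the mechanism by which they can change --- $F$ acting as a shortcut between the possibly scattered edges of $S$ --- and handle it by cutting $\geod$ at its (at most two) excursions into $F$, observing that each of the at most three remaining maximal subpaths lies in $\Sigma(\P)$, is a shortest path in $\P'$ and hence (since lengths agree and $\Sigma(\P)\subset\Sigma(\P')$) also in $\P$, so each crosses a fragment $X$ of $\P$ at most $h$ times. This gives $\H(\P')=O(h)$, specifically about $3h$. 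Note this is \emph{weaker} than the lemma's literal claim of ``happiness $h$,'' and in fact the literal claim appears too strong in the generality stated (e.g.\ $\P$ a single portal-free rectangle, $S$ two far-apart boundary edges, $F$ a small triangle: a shortest path can detour through $F$ and visit the rectangle twice, so happiness jumps from $1$ to $2$). Your $O(h)$ bound is what the argument actually supports, and it is all that Theorem~\ref{thm:makemehappy} needs, since there $h=5$ is a constant. In the special case that matters for the theorem as written --- adding back a leaf glued along a single contiguous edge --- your remark that the two flanking subpaths merge and the bound tightens to $\max(h,2)$ recovers the paper's intended statement. In short: same core observation for $F$, but you supply the missing half of the argument and in doing so correct an overstatement in the stated bound.
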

\begin {proof}
  \maarten {Figure?}
  To argue about the complexity, note that the number of edges in $S$ is at most $O(n)$, and we cut the portal edge $e$ into $|S|$ pieces, so clearly the resulting complexity is $O(n+m)$.
  
  To argue about the happiness, note that any shortest path crosses $e$ at most twice. This is still the case after cutting $e$ into smaller portals. Hence, we can still apply Observation~\ref{obs:single_portaledge_fragments}.
\end {proof}

If $G$ has no more leaves, it must contain at least one cycle. We now argue that if $G$ is a cycle, then we can rearrange it into a happy portalgon.

\begin {lemma}
\label{lem:cycle}
  Let $\P$ be a portalgon and $G$ its fragment graph. If $G$ is a cycle, then either $\P$ is 5-happy or there exists a portalgon $\P'$ consisting of a single fragment such that $\P'$ is happy if and only if $\P$ is happy.
\end {lemma}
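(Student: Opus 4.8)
The plan is to exploit the rigid structure forced by the fragment graph being a cycle. Since every node of a cycle has degree two, every fragment of $\P$ has exactly two portal edges, one shared with each of its two cyclic neighbours; write the cycle as $F_0,\dots,F_{k-1}$ with portal $e_i$ joining $F_i$ and $F_{i+1}$ (indices mod $k$). First I would record a \emph{monotonicity} observation: a shortest path $\geod$ can never enter a fragment and leave it through the \emph{same} portal edge, since by Observation~\ref{obs:shortest_path_component} (equivalently, the triangle inequality) the subpath of $\geod$ inside the fragment between two points of one edge is not shorter than the chord along that edge and could be removed. Hence the sequence of fragments traversed by any shortest path is a \emph{monotone} walk around the cycle: it only moves to the cyclic successor (in a fixed direction), because reversing would mean backtracking on a portal. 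Consequently a shortest path that visits some fragment $r$ times must wrap all the way around the cycle at least $r-1$ times, hence visits \emph{every} fragment at least $r-1$ times; so $|\H(F_i)-\H(F_j)|\le 1$ for all $i,j$, and $\H(\P)$ agrees up to an additive constant with the maximum number of times a shortest path can wrap around the cycle.

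Next I would split into two cases. \textbf{Case A:} some fragment $F_i$ contains no segment through its interior connecting its two portal edges. An argument analogous to the proof of Lemma~\ref{lem:cat2} then shows $\geod$ crosses $F_i$ only $O(1)$ times: if there were six components of $\geod\cap F_i$, a middle component would have to be a straight segment connecting the two portal edges (it cannot bend at a fragment vertex, since such a vertex is used at most once by $\geod$ and hence cannot be shared with an adjacent component), contradicting the assumption. Combined with the monotonicity observation, \emph{every} fragment is then crossed only $O(1)$ times, so $\P$ is $5$-happy (tightening the constant is bookkeeping), placing us in the first alternative of the lemma.

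\textbf{Case B:} every fragment $F_i$ has such a ``through-segment'' $t_i$. Here I would build $\P'$ by \emph{unrolling} the cycle: pick one portal, say $e_{k-1}$, cut it, and develop the resulting chain $F_0,\dots,F_{k-1}$ (glued along $e_0,\dots,e_{k-2}$) isometrically in the plane, so the two copies of the cut portal edge become boundary edges related by the holonomy rigid motion of the cycle. The through-segments, bridged along the intermediate portal edges, trace a polygonal spine from one copy of the cut edge to the other and witness that no fragment folds back across a portal edge; using this I would argue that the developed union $F'=\bigcup_i F_i$ is a \emph{simple} polygon (and that where it fails to be, the obstruction is exactly a blocked fragment, i.e.\ Case A). Then $\P'=(\{F'\},\{e\})$, with $e$ the single remaining (now self-)portal, is a single-fragment portalgon, equivalent to $\P$ by construction (it is just another way of stitching together $\bigcup_{F\in\F}F$). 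Finally, happiness is preserved up to an additive constant: a path of $\Sigma$ crosses the portal $e$ of $\P'$ once per wrap around the cycle of $\P$, hence has $\H(\P)\pm O(1)$ components in $F'$; so $\P'$ is $O(1)$-happy iff $\P$ is, which is the ``happy iff happy'' claim.

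I expect the main obstacle to be the geometric step in Case B: showing that unrolling a cycle in which every fragment has a through-segment yields a \emph{simple} polygon, and pinning down the precise dichotomy between ``the unrolled polygon is simple'' and ``$\P$ is $5$-happy''. The subtlety is that even if no single fragment folds back, the accumulated turning of the chain could in principle make a later fragment overlap an earlier one (a ``spiralling'' cycle); resolving this cleanly — either by showing through-segments rule it out, or by re-cutting the cycle along the spine as in the shift-to-zero constructions of Lemmas~\ref{lem:parallelogram} and~\ref{lem:quad_fragment} — is where the real work lies.
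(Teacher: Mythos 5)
Your high-level strategy — unroll the cycle into a single fragment, and show that when this fails the portalgon is already $5$-happy — is the same as the paper's, but your case split is not the right one and leaves a real gap that the paper's dichotomy is specifically designed to avoid.

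First, your Case~A invokes an argument ``analogous to Lemma~\ref{lem:cat2}'' on an individual fragment $F_i$. But $F_i$'s two portal edges belong to two \emph{different} portals $e_{i-1}$ and $e_i$; Lemma~\ref{lem:cat2} (and its proof) is about a fragment whose two portal edges are \emph{twins} of a single portal. The proof constructs the empty quadrilateral $\twin{p_2}\self{p_3}\self{p_4}\twin{p_3}$ and concludes by noting that $\self{p_3}$ and $\twin{p_3}$ are \emph{corresponding} (identified) points; for portal edges from two unrelated portals there is no such correspondence and the argument does not go through as stated. Second, and more seriously, your Case~B is exactly the unresolved gap you flag at the end: ``every $F_i$ has a through-segment'' does not imply the unrolled union is a simple polygon, because of the spiralling overlap you describe, and your dichotomy has no case to catch this. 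So the proposal is missing the actual content of the lemma.

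The paper's proof avoids both problems by changing the order of operations and the object of the case split. It glues the fragments along $e_0,\dots,e_{k-2}$ \emph{first}, producing a (possibly non-simple) region $F'$ with exactly one remaining portal $e'$, whose two edges $\self{e'}$ and $\twin{e'}$ \emph{are} twins. It then considers the hourglass $H$ of $e'$ inside $F'$ (the union of shortest $\self p\text{--}\twin p$ paths through the interior, over $p\in e'$). If $H$ is simple, then $F'\setminus H$ decomposes into pockets attached to $H$ along single chords; these are cut off as leaf fragments (later reattached via Lemma~\ref{lem:putleaveback}), leaving a single simple fragment with one portal. If $H$ is not simple, then no $p\in e'$ has $\self p$ seeing $\twin p$, so Lemma~\ref{lem:cat2} applies directly (to twin edges) and gives $5$-happiness. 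This hourglass-based split is precisely what handles the spiralling case you could not close; if you want to salvage your write-up, replacing the per-fragment through-segment test with the hourglass test on the glued $F'$ is the missing idea.
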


\begin {proof}
  Suppose we have a sequence of fragments $F_1, F_2, \ldots, F_k$ such that $F_i$ is connected to $F_{i+1}$ by a portal $e_i$ with $\self e_i$ in $F_i$ and $\twin e_i$ in $F_{i+1}$; finally $F_k$ is connected back to $F_1$ by a portal $e'$.
  We may glue the fragments along the $e_i$ portals, yielding a single (but not necessarily simple) fragment $F'$ with only $e'$ remaining. 
  
  If $F'$ is simple, we are done.
  If $F'$ is not simple, we consider two cases:
  either $F'$ can be made simple by cutting off ``pockets'' (which no shortest path will ever visit more than once), or this cannot be done because some shortest path has self-intersections in the universal cover $\hat \Sigma$.
  \maarten {A figure with the two cases would be nice.}
  In the first case, we can make $F'$ simple, in the second case, we argue that $F'$ is already happy.
  We now formalize the two cases; for this we consider the {\em hourglass} $H$, which consists of the union of all shortest paths between points $\self p$ and $\twin p$ through the interior of $F'$ for all points $p \in e'$.
  \begin {enumerate}
    \item Suppose $H$ is simple. We argue that $F' \setminus H$ is a collection of simple polygons (``pockets'') which are connected to $H$ by a single link; therefore, we can cut them off and turn them into separate fragments which are leaves in the fragment graph. By Lemma~\ref {lem:cutleaves} we can ignore them and we are left with a single simple fragment with a single portal, as desired.
    \item Suppose $H$ is not simple. Then there is no point $p \in e'$ such that $\self p$ sees $\twin p$; hence, by Lemma~\ref {lem:cat2}, $F'$ is already 5-happy.
  \end {enumerate}
  The result follows.\qedhere
\end {proof}
%
%
%
We are now ready to prove the main result of this section.

 \begin{theorem}
   \label {thm:makemehappy}
   Let $\P$ be an $h$-happy portalgon with $n$ vertices and fragment graph $G$, such that $G$ has at most one simple cycle.
   We can transform $\P$ into an equivalent 5-happy
   portalgon $\P'$ of total complexity $O(n)$ in $O(n + \log h)$ time.
 \end{theorem}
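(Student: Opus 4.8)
The plan is to reduce $\P$ through a chain of equivalence‑preserving operations to the single‑fragment, single‑portal case handled by Lemma~\ref{lem:make_one_fragment_happy}, and then undo the reductions, checking at every step that the happiness stays bounded by a constant and the complexity stays $O(n)$. We may assume $G$ is connected (otherwise process each connected component independently; at most one of them contains the cycle). First I build $G$ in $O(n)$ time and repeatedly delete degree‑one nodes, one at a time: when a leaf fragment $F$ is deleted it is attached to the rest by a single portal $e$, which I replace by a boundary edge, pushing the pair $(F,e)$ on a stack (Lemma~\ref{lem:cutleaves}). This takes $O(n)$ time overall. Since $G$ has at most one simple cycle, the fragment graph of the resulting \emph{core} portalgon $\P_0$ is either a single node with no portal edges, or a single simple cycle. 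Moreover $\P_0$ is still $h$‑happy: any shortest path of $\P_0$ between two of its points is also a shortest path of $\P$, because an excursion of a shortest path of $\P$ into a deleted leaf $F$ through its portal edge $e$ enters and leaves $F$ on $e$ and can therefore be replaced, without increasing the length, by the straight sub‑segment of $e$; hence every surviving fragment is visited at most $h$ times by any shortest path of $\P_0$.

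Next I handle $\P_0$. If its fragment graph is a single node, then $\P_0$ is a plain simple polygon, hence $1$‑happy, and I set $\P_2:=\P_0$. Otherwise the fragment graph is a single cycle, and Lemma~\ref{lem:cycle} applies: either $\P_0$ is already $5$‑happy (set $\P_2:=\P_0$), or Lemma~\ref{lem:cycle} produces a portalgon $\P_1$ with a single fragment and a single portal $e'$, together with a collection of ``pocket'' fragments each having a single portal edge, whose union with $\P_1$ is equivalent to $\P_0$; I push these pockets on the stack as well. Here $e'$ is one of the original cycle portals and $\P_1$ has $O(n)$ vertices. Because $e'$ survives the gluing, and because the pockets are attached by single links (so excursions into them can be shortcut), shortest paths of $\P_1$ are shortest paths of $\P_0$ and cross $e'$ only $O(h)$ times; hence $\P_1$ is $O(h)$‑happy. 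Applying Lemma~\ref{lem:make_one_fragment_happy} to $\P_1$ then yields, in $O(n+\log h)$ time, an equivalent $5$‑happy portalgon $\P_2$ with three fragments and complexity $O(n)$.

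Finally I pop the stack and glue the detached pieces back one at a time (Lemma~\ref{lem:putleaveback}): for each recorded pair $(F,e)$, the boundary edge that replaced $e$ now appears in the current portalgon as a constant number of boundary sub‑edges (an original boundary edge is split by at most the three‑way cut of Lemma~\ref{lem:make_one_fragment_happy} and the gluing of Lemma~\ref{lem:cycle}), which I turn into portal edges and link to the corresponding pieces of $e$. Each reattachment keeps the portalgon equivalent to $\P$, adds only $O(n_F)$ vertices, and keeps it $5$‑happy, since $F$ is $2$‑happy by Observation~\ref{obs:single_portaledge_fragments} and no shortest path crosses $e$ more than twice. Reattaching everything costs $O(n)$ time in total, so the overall running time is $O(n+\log h)$ and the final portalgon $\P'$ is $5$‑happy, equivalent to $\P$, and of complexity $O(n)$, as required.

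The step requiring the most care is the bookkeeping of happiness and complexity across the reductions: arguing that leaf deletion (and pocket removal) cannot increase the happiness of the surviving fragments, that the single‑portal portalgon produced by Lemma~\ref{lem:cycle} inherits an $O(h)$ happiness bound so that the $\log h$ factor of Lemma~\ref{lem:make_one_fragment_happy} is correct, and that each detached piece can be matched back to the (now possibly subdivided) boundary edge it came from in $O(1)$ time, so that the peeling and gluing‑back phases together run in $O(n)$ time rather than something larger.
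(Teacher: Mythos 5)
Your proof follows essentially the same route as the paper's: peel leaf fragments via Lemma~\ref{lem:cutleaves}, reduce the remaining cycle to a single fragment with a single portal via Lemma~\ref{lem:cycle}, apply Lemma~\ref{lem:make_one_fragment_happy}, and then reattach the peeled pieces via Lemma~\ref{lem:putleaveback}. You are somewhat more explicit than the paper about a few bookkeeping points (tracking the pockets produced by Lemma~\ref{lem:cycle} on the same stack, and verifying that the reduced single-fragment portalgon inherits an $O(h)$ happiness bound so that the $O(n+\log h)$ cost of Lemma~\ref{lem:make_one_fragment_happy} is actually applicable), but the decomposition, the lemmas used, and the order of operations are the same.
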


  \begin {proof}
    We will transform $\P$ through a sequence of portalgons $\P_i$ and denote the corresponding fragment graph of $\P_i$ by $G_i$.
    \begin {itemize}
      \item Let $\P_1 = \P$ be the original portalgon and $G_1 = G$ its fragment graph.
    
      \item First, we calculate a new graph $G_2$ which has no leaves by iteratively applying Lemma~\ref{lem:cutleaves}.
    The resulting portalgon $\P_2$ has the same happiness as $\P$.
    Since $G$ contains at most one cycle, either $G_2$ is a cycle or it is the empty graph.
    If it is empty, by Lemma~\ref{lem:cutleaves}, $\P$ was already happy; in this case we abort and return $\P$.
     
      \item If $G_2$ consists of one cycle, then by Lemma~\ref {lem:cycle} either $\P_2$ is already happy or we can create a portalgon $\P_3$ consisting of single fragment, such that $\P_3$ has the same happiness as $\P$. In the first case, by Lemma~\ref {lem:cutleaves} $\P$ was also already happy, so again we abort and return $\P$.
      
      \item In the second case, we have a portalgon $\P_3$ with a single fragment and a single portal. We apply Lemma~\ref{lem:make_one_fragment_happy} to find an equivalent 5-happy portalgon $\P_4$ of total complexity $O(n)$  in  $O(n + \log h)$ time.
    
      \item Finally, create a portalgon $\P_5$ by iteratively adding back any leaves of $G$ we removed at the start. By Lemma~\ref{lem:putleaveback}, $\P_5$ still has complexity $O(n)$ and the happiness of $\P_5$ is the same as for $\P_4$, that is, at most $5$. 
    \end {itemize}
    We return $\P' = \P_5$.
  \end {proof}


\rodrigo[found this text]{Second idea: perhaps we can do even more? Can we just ignore all fragments which are already happy, even if they have more portals, and then look at the remainder?}

\maarten {I guess we should be able to handle slightly more than that; i.e., two cycles connected by a single path?}



\bibliography{portalgons}

\end{document}